\newif\ifacm
\renewcommand\footnotetextcopyrightpermission[1]{}
\setlist{nosep}
\newcommand{\ignore}[1]{}
\newcommand{\msf}{\mathsf}
\newcommand{\mc}{\mathcal}
\newcommand{\mbf}{\mathbf}
\newcommand{\sid}{\mathsf{sid}}
\newcommand{\Cl}{\mathsf{Cl}}
\newcommand\myurcorner{\hspace{-0.3em}\raise-0.2ex\hbox{$\urcorner$}}
\newcommand\myulcorner{\raise-0.2ex\hbox{$\ulcorner$}\hspace{-0.3em}}
\newcommand\extitem{\begin{tiny}$\blacksquare$ \end{tiny}}
\newcommand{\NL}[1]{\textcolor{orange}{#1}}
\newcommand{\TZ}[1]{\textcolor{magenta}{\textbf{Thomas: }{#1}}}
\newtheorem{theorem}{Theorem}
\newtheorem{lemma}{Lemma}
\newtheorem{corollary}{Corollary}
\newtheorem{fact}{Fact}
\newtheorem{definition}{Definition}
\theoremstyle{definition}
\title{Universally Composable Simultaneous Broadcast against a Dishonest Majority and Applications}
\author[1]{Myrto Arapinis}
\author[1]{\'Abel Kocsis}
\author[1]{Nikolaos Lamprou}
\author[2]{Liam Medley}
\author[1]{Thomas Zacharias}
\affil[1]{The University of Edinburgh, UK}
\affil[2]{Royal Holloway University of London, UK}
 \date{}
\begin{document}
\ifacm

\begin{abstract}
 Simultaneous broadcast (SBC) protocols, introduced in [Chor et al., FOCS 1985], constitute a special class of broadcast channels which, besides consistency, guarantee that all senders broadcast their messages independently of the messages broadcast by other parties. SBC has proved extremely useful in the design of various distributed computing constructions (e.g., multiparty computation, coin flipping, electronic voting, fair bidding). As with any communication channel, it is crucial that SBC security is composable, i.e., it is preserved under concurrent protocol executions. The work of [Hevia, SCN 2006] proposes a formal treatment of SBC in the state-of-the-art Universal Composability (UC) framework [Canetti, FOCS 2001] and a construction secure assuming an honest majority.

 In this work, we provide a comprehensive revision of SBC in the UC setting and improve the results of [Hevia, SCN 2006]. In particular, we present a new SBC functionality that captures \emph{both simultaneity and liveness} by considering a broadcast period such that (i) within this period all messages are broadcast independently and (ii) after the period ends, the session is terminated without requiring full participation of all parties. Next, we employ time-lock encryption (TLE) over a standard broadcast channel to devise an SBC protocol that realizes our functionality against any adaptive adversary corrupting up to \emph{all-but-one} parties. In our study, we capture synchronicity via a global clock [Katz et al., TCC 2013], thus lifting the restrictions of the original synchronous communication setting used in [Hevia, SCN 2006]. As a building block of independent interest, we prove the first TLE protocol that is \emph{adaptively} secure in the UC setting, strengthening the main result of [Arapinis et al., ASIACRYPT 2021].

Finally, we formally exhibit the power of our SBC construction in the design of UC-secure applications by presenting two interesting use cases: (i) distributed generation of uniform random strings, and (ii) decentralized electronic voting systems, without the presence of a special trusted party.
\end{abstract}

\keywords{Secure Broadcast, Universal Composability, Time-Lock Encryption}

\fi
\maketitle

\section{Introduction}\label{sec:intro}

Communication over a broadcast channel guarantees consistency of message delivery, in the sense that all honest parties output the same message, even when the sender is malicious. Since its introduction by Pease et al.~\cite{PeaseSL80}, broadcast has been a pivotal concept in fault tolerant distributed computing and cryptography. From a property-based security perspective, broadcast communication dictates that every honest party will output some value (termination) that is the same across all honest parties (agreement) and matches the sender's value, when the sender is honest (validity). The first efficient construction was proposed by Dolev and Strong~\cite{DS82}. In particular, the Dolev-Strong broadcast protocol deploys public-key infrastructure (PKI) to achieve property-based security against an adversary corrupting up to $t<n$ parties, where $n$ is the number of parties. In the context of simulation-based security though, Hirt and Zikas~\cite{HirtZ10} proved that broadcast under $t>\frac{n}{2}$ corruptions (dishonest majority) is impossible, even assuming a PKI. In the model of~\cite{HirtZ10}, the adversary may adaptively corrupt parties within the duration of a round (\emph{non-atomic communication model}). Subsequently, Garay et al.~\cite{GarayKKZ11} showed that in the weaker setting where a party cannot be corrupted in the middle of a round (\emph{atomic communication model}), PKI is sufficient for realizing adaptively secure broadcast against an adversary corrupting up to $t<n$ parties.

An important class of protocols that has attracted considerable attention is the one where broadcast is \emph{simultaneous}, i.e., all senders transmit their messages independently of the messages broadcast by other parties. The concept of simultaneous broadcast (SBC) was put forth by Chor et al.~\cite{ChorGMA85} and has proved remarkably useful in the design of various distributed computing constructions (e.g., multiparty computation, coin flipping, electronic voting, fair bidding). The works of Chor and Rabin~\cite{ChorR87} and Gennaro~\cite{Gennaro00} improve the round complexity of~\cite{ChorGMA85} from linear to logarithmic, and from logarithmic to constant (in $n$), respectively. From a security modeling aspect, Hevia and Micciancio~\cite{HeviaM05} point out the hierarchy between the SBC definitions in~\cite{ChorGMA85,ChorR87,Gennaro00} as \cite{ChorGMA85}$\Rightarrow$\cite{ChorR87}$\Rightarrow$\cite{Gennaro00} (from strongest to weakest). Specifically, the simulation-based definition of~\cite{ChorGMA85} implies sequentially composable security. Under the definition of~\cite{Gennaro00}, Faust et al.~\cite{FaustKL08} present a construction with a performance gain in the presence of repeated protocol runs. All the aforementioned SBC solutions~\cite{ChorGMA85,ChorR87,Gennaro00,FaustKL08} achieve security that tolerates $t<\frac{n}{2}$ corruptions (honest majority).

The concept of SBC that retains security under concurrent executions has been formally investigated by Hevia~\cite{Hevia06}. Namely,~\cite{Hevia06} proposes a formal SBC treatment in the state-of-the-art Universal Composability (UC) framework of Canetti~\cite{UC} along with a construction that has constant round complexity and is secure assuming an honest majority. Composable security is crucial for any broadcast channel functionality that serves as a building block for distributed protocol design and is a primary goal of our work.

\begin{figure}
    \centering
    \includegraphics[scale=0.38]{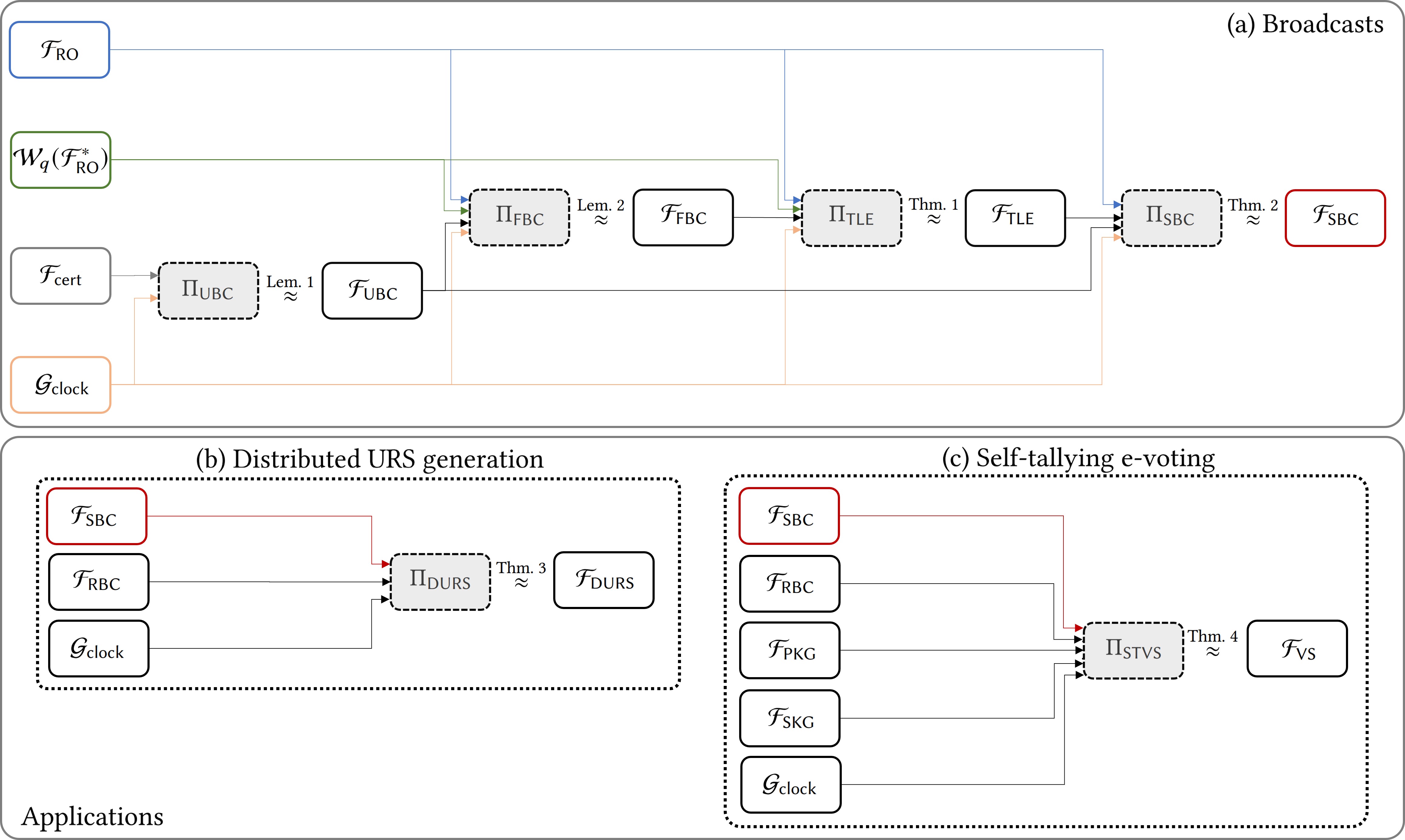}
    \caption{Overview of the paper's contributions. We denote $\Pi_\msf{X}$ the $\msf{X}$, and $\mc{F}_\msf{X}$ the ideal functionality capturing the security requirements for $\msf{X}$, and in UC fashion we write $\Pi_\msf{X} \approx \mc{F}_\msf{X}$ to denote that the protocol $\Pi_\msf{X}$ realizes the ideal functionality $\mc{F}_\msf{X}$ (thus, ensuring the same security properties). Our results rely on the following hybrid functionalities: (i) the global clock $\mc{G}_\msf{clock}$, (ii) the random oracles $\mc{F}_\msf{RO}$ and $\mc{F}^*_\msf{RO}$, (iii) the wrapper $\mc{W}_q(\cdot)$, (iv) the certification $\mc{F}_\msf{cert}$ modelling a PKI, (v) the relaxed broadcast $\mc{F}_\msf{RBC}$ that allows a single message to be broadcast in an unfair manner (and can be realized via $\mc{F}_\msf{cert}$ and $\mc{G}_\msf{clock}$, cf. Fact~\ref{fact:RBC}), (vi) the public key threshold key generation $\mc{F}_\msf{PKG}$, and (vii) the signature key generation $\mc{F}_\msf{SKG}$.} 
    \label{fig:overview}
\end{figure}

\indent\emph{Our contributions.} We explore the SBC problem in the context of UC security against a dishonest majority. We improve the results of~\cite{Hevia06} both from a definitional and a security aspect. In more detail, we achieve the following improvements (summarised in Figure~\ref{fig:overview}(a)):
\begin{itemize}[leftmargin=*]
    \item We define a new SBC functionality that abstracts communication given an agreed broadcast period, outside of which all broadcast operations are discarded. Our functionality captures (i) \emph{simultaneity}: corrupted senders broadcast without having any information about honest senders' messages; (ii) \emph{liveness}: after the broadcast period ends, termination is guaranteed (with some delay) without the requirement of full participation by all parties. We stress that the latter property is not captured by the functionality of~\cite{Hevia06}, as the adversary (simulator) may wait indefinitely until it allows termination of the execution which happens only after all (honest and corrupted) senders have transmitted their value.
    \item We provide a construction that realizes our SBC functionality in an optimal way, that is, it preserves \emph{UC security against a Byzantine adversary that can adaptively corrupt up to $t<n$ parties in the non-atomic communication model}. To overcome the impossibility result of~\cite{HirtZ10}, besides PKI, we deploy (i) adaptively secure time-lock encryption (TLE) in the UC setting; (ii) a programmable random oracle (RO). Specifically, via TLE (and the programmable RO), senders perform (equivocable) encryptions of their message that can be decrypted by any party when the decryption time comes, with some delay upon the end of the broadcast period. It is easy to see that the semantic security of the TLE ciphertexts that lasts throughout the broadcast period guarantees simultaneity. The broadcast period is set dynamically, by having the first sender of the session (as scheduled by the environment) ``wake up'' the other parties via the broadcast of a special message.  
   
    Although utilizing a programmable RO is a standard technique to enable equivocation in simulation-based security (e.g., in~\cite{Nielsen02,BaumDDNO21,ALZ21,CGZ21}), TLE with adaptive UC security is not currently available in the literature. To construct it, we rely on the findings of the following recent papers:
    \begin{enumerate}
        \item The work of Arapinis et al.~\cite{ALZ21} that provides a UC treatment of TLE and a protocol that is secure against a static adversary.
        \item The work of Cohen et al.~\cite{CGZ21} that studies the concept of broadcast and fairness in the context of resource-restricted cryptography~\cite{GarayKOPZ20}. In more detail, Cohen et al. prove that time-lock puzzles (a notion closely related to TLE) and a programmable RO are sufficient for building simulation-based secure broadcast (with limited composability) against an adaptive adversary that corrupts up to $t<n$ parties in the non-atomic model. Interestingly, they also show that neither time-lock puzzles nor programmable ROs alone are enough to achieve such level of broadcast security. In~\cite{CGZ21}, standard broadcast encompasses \emph{fairness}, i.e., an adversary that adaptively corrupts a sender after learning her value cannot change this original value. Besides, the weaker notion of \emph{unfair} broadcast has been introduced in~\cite{HirtZ10} and can be realized by the Dolev-Strong protocol~\cite{DS82} (assuming PKI) against $t<n$ adaptive corruptions.
    \end{enumerate}
    Compared to~\cite{ALZ21} and~\cite{CGZ21}, we take the following steps: first, we adapt the fair broadcast (FBC) and unfair broadcast (UBC) functionalities in~\cite{CGZ21} to the UC setting, where multiple senders may perform many broadcasts per round. Then, similar to~\cite{BadertscherMTZ17,GarayKOPZ20,ALZ21},  we model resource-restriction in UC via wrapper that allows all parties to perform up to a number of RO queries per round. Next, we revisit the FBC protocol in~\cite{CGZ21} by using the TLE algorithms of~\cite{ALZ21} instead of an arbitrary time-lock puzzle and show that our instantiation UC-realizes our FBC functionality. Finally, we prove that by deploying the TLE protocol of~\cite{ALZ21} over our FBC functionality is sufficient to provide an adaptively secure realization of the TLE functionality in~\cite{ALZ21}. We view the construction of the first adaptively UC secure TLE protocol as a contribution of independent interest. We refer the reader to Section~\ref{subsec:UFBC_FBC} for a detailed discussion of the key subtleties to the design of our composably secure (un)fair broadcast protocols.
  \item The SBC construction in~\cite{Hevia06} is over the synchronous communication functionality in~\cite{UC}. As~\cite{KatzMTZ13} shows, this functionality does not provide the guarantees expected of a synchronous network (specifically, termination). These limitations are lifted when relying on a (global) clock functionality~\cite{KatzMTZ13}, as we do in our formal treatment. The use of a global clock is the standard way to model loose synchronicity in UC: every clock tick marks the advance of the execution rounds while within a round, communication is adversarially scheduled by the environment.
\end{itemize}

Armed with our construction, we present two interesting applications of SBC that enjoy adaptive UC security. Namely,
\begin{itemize}[leftmargin=*]
    \item \emph{Distributed uniform random string generation (Figure~\ref{fig:overview}(b)).} We devise a protocol where a set of parties contribute their share of randomness via our SBC channel. After some delay (upon the end of the broadcast period), the honest parties agree on the XOR of the shares they received as a common uniform random string. We call this \emph{delayed uniform random string} (DURS) generation.
    \item \emph{Self-tallying e-voting (Figure~\ref{fig:overview}(c)).} Self-tallying voting systems (STVSs) constitute a special class of decentralized electronic voting systems put forth by Kiayias and Yung~\cite{KiayiasY02}, where the voters can perform the tally themselves without the need for a trusted tallying authority. Most existing efficient STVSs~\cite{KiayiasY02,Groth04,SzepieniecP15} require a trusted party to ensure election fairness (i.e., no partial results are leaked before the end of the vote casting period). We remove the need of a trusted party in self-tallying elections by modifying the construction in~\cite{SzepieniecP15} (shown secure in the UC framework). In particular, we deploy our SBC channel for vote casting instead of a bulletin board used in the original protocol.
\end{itemize}

\section{Background}\label{sec:background}
\subsection{Network model}\label{subsec:background_network}
We consider synchronous point-to-point communication among $n$ parties from a party set $\mbf{P}$, where protocol execution is carried out in rounds. The adversary is Byzantine and may adaptively corrupt any number of $t<n$ parties. The corruption is w.r.t. the strong \emph{non-atomic communication model} (cf.~\cite{HirtZ10,CGZ21}) where the adversary may corrupt parties in the middle of a round, e.g., it may corrupt a sender after receiving the sender's message.
\subsection{The UC framework}\label{subsec:background_UC}
Universal Composability (UC), introduced by Canetti in~\cite{UC}, is a state-of-the-art framework for the formal study of protocols that should remain secure under concurrent executions. In UC, security is captured via the \emph{real world/ideal world} paradigm as follows.
\begin{itemize}[leftmargin=*]
    \item In the ideal world, an \emph{environment} $\mc{Z}$ schedules the execution and provides inputs to the parties that are \emph{dummy}, i.e., they simply forward their inputs to an \emph{ideal functionality} $\mc{F}$, which abstracts the studied security notion (e.g., secure broadcast). The functionality is responsible for carrying out the execution given the forwarded input and returns to the party some output along with a destination identity $ID$, so that the dummy party forwards the output to $ID$. By default, we assume that the destination is $\mc{Z}$, unless specified explicitly. The execution is carried out in the presence of an ideal adversary $\mc{S}$, the \emph{simulator}, that interacts with $\mc{F}$ and $\mc{Z}$ and controls corrupted parties. We denote by $\textsc{EXEC}_{\mc{F},\mc{S},\mc{Z}}$ the output of $\mc{Z}$ after ending the ideal world execution.
    \item In the real world, $\mc{Z}$ schedules the execution and provides inputs as previously, but now the parties actively engage in a joint computation w.r.t. the guidelines of some protocol $\Pi$ (e.g., a broadcast protocol). The execution is now in the presence of a real (Byzantine) adversary $\mc{A}$ that interacts with $\mc{Z}$ and may (adaptively) corrupt a number of parties. We denote by $\textsc{EXEC}_{\Pi,\mc{A},\mc{Z}}$ the output of $\mc{Z}$ after ending the real world execution.
\end{itemize}
\begin{definition}
    We say that a protocol $\Pi$ \emph{UC-realizes} the ideal functionality $\mc{F}$ if for every real world adversary $\mc{A}$ there is a simulator $\mc{S}$ such that for every environment $\mc{Z}$, the distributions $\textsc{EXEC}_{\mc{F},\mc{S},\mc{Z}}$ and $\textsc{EXEC}_{\Pi,\mc{A},\mc{Z}}$ are computationally indistinguishable.
\end{definition}
According to the UC Theorem, the UC security of $\Pi$ implies that $\Pi$ can be replaced by $\mc{F}$ in any protocol that invokes $\Pi$ as a subroutine. Besides, a protocol may use as subroutine a functionality that abstracts some setup notion (e.g., PKI, a random oracle, or a global clock). These setup functionalities maybe \emph{global}, in the sense that share their state across executions of multiple protocols~\cite{globalUC}. If a protocol utilizes a set of functionalities $\{\mc{F}_1,\ldots,\mc{F}_k\}$, then we say that its UC security is argued in the \emph{$(\mc{F}_1,\ldots,\mc{F}_k)$-hybrid model}.

\subsection{Hybrid functionalities}\label{subsec:background_hybrid}
Throughout the paper, $\lambda$ denotes the security parameter and $\msf{negl}(\cdot)$ a negligible function.\medskip

\noindent\textbf{The global clock functionality.} The global clock (cf.~\cite{KatzMTZ13,BadertscherMTZ17} and Figure~\ref{fig:clock}) can be read at any moment by any party. For each session, the clock advances only when all the involved honest parties and functionalities in the session make an \textsc{Advance\_Clock} request.

\begin{figure}[H]
\begin{tcolorbox}[enhanced, colback=white, arc=10pt, drop shadow southeast]
\noindent\emph{\underline{The global clock functionality \(\mathcal{G}_\msf{clock}(\mathbf{P}, \mathbf{F})\).}}\\[5pt]
\begin{small}
The functionality manages the set $\mbf{P}$ of registered identities, i.e., parties $P=(\msf{pid},\sid)$ and the set $\mbf{F}$ of registered functionalities (with their session identifier) $(\mc{F},\sid)$. For every $\sid$, let $\mbf{P}_\sid=\{(\cdot,\sid)\in\mbf{P}\}\cap\{P\in\mbf{P}\mid P \mbox{ is honest}\}$ and $\mbf{F}_\sid=\{(\cdot,\sid)\in\mbf{F}\}$.

For each session \(\sid\), the functionality initializes the clock variable \(\textsf{Cl}_\msf{sid} \leftarrow 0\) and the set of advanced entities per round as \(L_\textsf{sid.adv} \leftarrow \emptyset\).\\[2pt]
\extitem Upon receiving \((\textsc{sid}_C, \textsc{Advance\_Clock})\) from \(P\in\mathbf{P}_\sid\), if \(P\not\in L_\textsf{sid.adv}\), then it adds \(P\) to \(L_\textsf{sid.adv}\). If \(L_\textsf{sid.adv} = \mathbf{P}_\sid\cup \mathbf{F}_\sid\), then it updates \(\textsf{Cl}_\msf{sid}\leftarrow \textsf{Cl}_\msf{sid} + 1\), resets \(L_\textsf{sid.adv} \leftarrow \emptyset\). It forwards \((\textsc{sid}_C, \textsc{Advanced\_Clock}, P)\) to $\mc{A}$.\\[2pt]
\extitem Upon receiving \((\textsc{sid}_C, \textsc{Advance\_Clock})\) from $\mathcal{F}$ in a session $\sid$ such that $(\mc{F},\sid)\in\mathbf{F}$, if \((\mc{F},\sid)\not\in L_\textsf{adv}\), then it adds \((\mc{F},\sid)\) to \(L_\textsf{sid.adv}\). If \(L_\textsf{sid.adv} = \mathbf{P}_\sid\cup \mathbf{F}_\sid\), then it updates \(\textsf{Cl}_\msf{sid}\leftarrow \textsf{Cl}_\msf{sid} + 1\), resets \(L_\textsf{sid.adv} \leftarrow \emptyset\). It sends \((\textsc{sid}_C, \textsc{Advanced\_Clock}, \mc{F})\) to this instance of $\mc{F}$.\\[2pt]
\extitem Upon receiving \((\textsc{sid}_C, \textsc{Read\_Clock})\) from any participant (including the environment on behalf of a party, the adversary, or any ideal (shared or local) functionality), it sends \((\textsc{sid}_C, \textsc{Read\_Clock}, \textsf{Cl}_\sid)\) to this participant, where $\sid$ is the sid of the calling instance.
\end{small}
\end{tcolorbox}

\captionof{figure}{The global clock functionality \(\mathcal{G}_\msf{clock}(\mathbf{P}, \mathbf{F})\) interacting with the parties of the set \(\mathbf{P}\), the functionalities of the set \(\mathbf{F}\), the environment \(\mathcal{Z}\) and the adversary \(\mathcal{A}\).}
\label{fig:clock}

\end{figure}

\noindent\textbf{The random oracle functionality.} The RO functionality (cf.~\cite{Nielsen02} and Figure~\ref{fig:RO}) can be seen as a trusted source of random input. Given a query, it returns a random value. It also updates a local variable \(L_\mathcal{H}\) in order to return the same value to similar queries. This functionality can be seen as the "idealization" of a hash function.

\begin{figure}[H]
\begin{tcolorbox}[enhanced, colback=white, arc=10pt, drop shadow southeast]
\noindent\emph{\underline{The random oracle functionality \(\mathcal{F}_\msf{RO}( A, B)\).}}\\[5pt]
\begin{small}
The functionality initializes a list \(L_\mathcal{H} \leftarrow \emptyset\).\\[2pt]
\extitem Upon receiving \((\sid, \textsc{Query}, x)\) from any party $P$, if \(x\in A\), then:
\begin{enumerate}
\item If there exists a pair \((x, h) \in L_\mathcal{H}\), it returns \((\sid, \textsc{Random\_Oracle}, x, h)\) to \(P\).
\item Else, it picks \(h \in B\) uniformly at random, and it inserts the pair \((x,h)\) to the list \(L_\mathcal{H}\). Then, it returns \((\sid, \textsc{Random\_Oracle}, x, h)\) to \(P\).
\end{enumerate}
\end{small}
\end{tcolorbox}
\captionof{figure}{The random oracle functionality \(\mathcal{F}_\msf{RO}\) with respect to a domain \(A\) and a range \(B\).}
\label{fig:RO}
\end{figure}

\noindent\textbf{The certification functionality.} 
The certification functionality (cf.~\cite{Canetti04} and Figure~\ref{fig:cert}) abstracts a certification scheme which provides signatures bound to \emph{identities}.
It provides commands for signature generation and verification, and is tied to a single party (so each party requires a separate instance). It can be realized via an EUF-CMA secure signature scheme combined with a party acting as a trusted certification authority.\\

\noindent\textbf{The wrapper functionality.} We recall the wrapper functionality from~\cite{ALZ21} in Figure~\ref{fig:wrapper} (here, in the adaptive corruption model), for the special case where the wrapped evaluation functionality is the random oracle $\mc{F}_\msf{RO}$. The wrapper $\mc{W}_{q}$ allows the parties to access $\mc{F}_\msf{RO}$ only up to $q$ times per round (clock tick).

\begin{figure}[H]
\begin{tcolorbox}[enhanced, colback=white, arc=10pt, drop shadow southeast]
\noindent\underline{\emph{The certification functionality $\mc{F}_\msf{cert}^S(\mbf{P})$.}}\\[5pt]
\begin{small}
\extitem Upon receiving $(\sid,\textsc{Sign},M)$ from $S$, it sends $(\sid,\textsc{Sign},M)$ to $\mc{S}$. Upon receiving $(\sid,\textsc{Signature},M,\sigma)$ from $\mc{S}$, it checks that no triple $(M,\sigma,0)$ is recorded. If so, it sends $(\sid,\textsc{Signature},M,\bot)$ to $S$ and halts. Otherwise, it sends $(\sid,\textsc{Signature},M,\sigma)$ to $S$ and adds the triple $(M,\sigma,1)$ to an, initially empty, list $L_\msf{sign}$.\\[2pt]
\extitem Upon receiving $(\sid,\textsc{Verify},M,\sigma)$ from a party $P\in\mbf{P}$, it sends $(\sid,\textsc{Verify},M,\sigma)$ to $\mc{S}$. Upon receiving $(\sid,\textsc{Verified},M,\phi)$ from $\mc{S}$, it does:
\begin{enumerate}
    \item If $(M,\sigma,1)\in L_\msf{sign}$, then it sets $f=1$.
    \item Else, if $S$ is not corrupted and no entry $(M,\sigma',1)$ for any $\sigma'$ is recorded, then it sets $f=0$ and adds $(M,\sigma,0)$ to $L_\msf{sign}$.
    \item Else, if there is an entry $(M,\sigma,f')\in L_\msf{sign}$, then it sets $f=f'$.
    \item Else, it sets $f=\phi$ and adds $(M,\sigma',\phi)$ to $L_\msf{sign}$.%
    \item It sends $(\sid,\textsc{Verified},M,f)$ to $P$.
\end{enumerate}
\end{small}
\end{tcolorbox}
 
\captionof{figure}{The certification functionality $\mc{F}_\msf{cert}$ interacting with a signer $S$, a set of parties $\mbf{P}$ and the simulator $\mc{S}$.} 
\label{fig:cert} 
\end{figure}

\begin{figure}[H]
\begin{tcolorbox}[enhanced, colback=white, arc=10pt, drop shadow southeast]

\noindent\emph{\underline{The wrapper functionality $\mc{W}_{q}(\mc{F}_{\msf{RO}},\mc{G}_{\msf{clock}},\mbf{P})$.}}\\[5pt]
\begin{small}
The functionality maintains the set of corrupted parties, $\mathbf{P}_\msf{corr}$, initialized as empty.\\[2pt]
\extitem Upon receiving $(\sid,\textsc{Evaluate},(x_{1},\ldots,x_{j}))$ from $P\in \mbf{P}\setminus \mathbf{P}_{\msf{corr}}$ it reads the time $\msf{Cl}$ from $\mc{G}_{\msf{clock}}$ and does:
\begin{enumerate}
\item\label{oracleq} If there is not a list $L^{P}$ it creates one, initially as empty. Then it does:
\begin{enumerate}
\item\label{oracleq1a} For every $k$ in $\{1,\ldots,j\}$, it forwards the message $(\sid,\textsc{Evaluate},x_{k})$ to $\mc{F}_{\msf{RO}}$.
\item\label{oracleq1b} When it receives back all the corresponding oracle responses $y_1,\ldots,y_j$, it inserts the tuple-$(\msf{Cl},1)\in L^{P}$.
\item\label{oracleq1c} It sends $(\sid,\textsc{Evaluate},((x_{1},y_{1}),\ldots,(x_{j},y_{j})))$ to $P$.
\end{enumerate} 
\item\label{oracleq2} Else if there is a tuple-$(\msf{Cl},j_{\msf{c}})\in L^{P}$ with $j_{\msf{c}}<q$, then it changes the tuple to $(\msf{Cl},j_{\msf{c}}+1)$, and repeats the above steps~\ref{oracleq1a},\ref{oracleq1c}.
\item\label{oracleq3} Else if there is a tuple-$(\msf{Cl}^{*},j_{\msf{c}})\in L^{P}$ such that $\msf{Cl}^{*}<\msf{Cl}$, it updates the tuple as $(\msf{Cl},1)$, and repeats the above steps~\ref{oracleq1a},\ref{oracleq1b},\ref{oracleq1c}.
\end{enumerate}
\extitem Upon receiving $(\sid,\textsc{Evaluate},(x_{1},\ldots,x_{j}),P)$ from $\mc{S}$ on behalf of $P\in \mbf{P}_{\msf{corr}}$ it reads the time $\msf{Cl}$ from $\mc{G}_{\msf{clock}}$ and repeats steps~\ref{oracleq},\ref{oracleq3} except that it maintains the same list, named $L_{\msf{corr}}$, for all the corrupted parties. %
\end{small}
\end{tcolorbox}

\captionof{figure}{The functionality wrapper $\mc{W}_{q}$ parameterized by a number of queries $q$, functionality $\mc{F}_{\msf{RO}}$, the parties in $\mbf{P}$, the simulator $\mc{S}$, and  $\mc{G}_{\msf{clock}}$.}
\label{fig:wrapper}
\end{figure}

\noindent\textbf{The relaxed broadcast functionality.} In Figure~\ref{fig:RBC}, we present the relaxed broadcast functionality $\mc{F}_\msf{RBC}$ (for a single message) in \cite{GarayKKZ11} that is the stepping stone for realizing unfair broadcast (cf. Subsection~\ref{subsec:UFBC_UBC}) which, in turn, is in the core of the design of the fair and simultaneous broadcast constructions. The functionality captures agreement, but only a weak notion of validity, i.e., if a sender is \emph{always} honest and broadcasts a message $M$, then every honest party will output the value $M$.
In addition, we modify the original description of $\mc{F}_\msf{RBC}$ by forcing the delivery of the message to all parties, when the sender (i) is initially corrupted, or (ii) remains honest in the execution and completes her part by forwarding an \textsc{Advance\_Clock} message. This was implicit in~\cite{GarayKKZ11}.

\begin{figure}[H]
\begin{tcolorbox}[enhanced, colback=white, arc=10pt, drop shadow southeast]
\noindent\emph{\underline{The relaxed broadcast functionality $\mc{F}_\msf{RBC}(\mbf{P})$.}}\\[5pt]
\begin{small}
The functionality initializes a pair of variables $(\msf{Output},\msf{Sender})$ as $(\bot,\bot)$. It also maintains the set of corrupted parties, $\mathbf{P}_\msf{corr}$, initialized as empty.\\[2pt]
\extitem Upon receiving $(\sid,\textsc{Broadcast},M)$ from $P\in\mbf{P}\setminus\mathbf{P}_\msf{corr}$, if $(\msf{Output},\msf{Sender})=(\bot,\bot)$, it records the output-sender pair $(\msf{Output},\msf{Sender})\leftarrow (M,P)$ and sends $(\sid,\textsc{Broadcast},M,P)$ to $\mc{S}$.\\[2pt]
\extitem Upon receiving $(\sid,\textsc{Broadcast},M,P)$ from $\mc{S}$ on behalf of $P\in\mathbf{P}_\msf{corr}$, if $(\msf{Output},\msf{Sender})=(\bot,\bot)$, it sends $(\sid,\textsc{Broadcast},M,P)$ to all parties and $\mc{S}$, and halts.\\[2pt]
\extitem Upon receiving $(\sid,\textsc{Allow},\tilde{M})$ from $\mc{S}$, if $\msf{Sender}\in\mbf{P}_\msf{corr}$, it sends $(\sid,\textsc{Broadcast},$ $\tilde{M},\msf{Sender})$ to all parties and $\mc{S}$, and halts. Otherwise, it ignores the message.\\[2pt]
\extitem Upon receiving $(\sid_C, \textsc{Advance\_Clock})$ from $P\in\mbf{P}\setminus\mathbf{P}_\msf{corr}$, if $\msf{Sender}=P$, it sends $(\sid,\textsc{Broadcast},\msf{Output},\msf{Sender})$ to all parties and $\mc{S}$, and halts. Otherwise, it returns $(\sid_C, \textsc{Advance\_Clock})$ to $P$ with destination identity $\mc{G}_\msf{clock}$. 
\end{small}
\end{tcolorbox}
\caption{The functionality $\mc{F}_\msf{RBC}$ interacting with the parties in $\mbf{P}$ and the simulator $\mc{S}$.}
\label{fig:RBC}
\end{figure}

As presented in \cite{HirtZ10,GarayKKZ11}, $\mc{F}_\msf{RBC}$ can be realized based on the Dolev-Strong protocol~\cite{DS82} and a UC-secure signature scheme. Formally,
\begin{fact}[\cite{HirtZ10,GarayKKZ11}]\label{fact:RBC}
There exists a protocol $\Pi_\msf{RBC}$ that UC-realizes $\mc{F}_\msf{RBC}$ in the $(\mc{F}_\msf{cert},\mc{G}_\msf{clock})$-hybrid model against an adaptive adversary corrupting $t<n$ parties (in the non-atomic model).
\end{fact}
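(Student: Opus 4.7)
The plan is to use the Dolev-Strong construction~\cite{DS82} adapted to the UC setting as in~\cite{GarayKKZ11}. Concretely, the protocol $\Pi_\msf{RBC}$ runs for $t+1$ clock ticks (read from $\mc{G}_\msf{clock}$). In round $1$, the designated sender $S$ invokes $\mc{F}_\msf{cert}$ with $(\sid,\textsc{Sign},M)$ to obtain a signature $\sigma_S$ and sends $(M,\sigma_S)$ to every party. In each subsequent round $r=2,\ldots,t+1$, a party $P$ who has collected a chain of $r-1$ valid signatures (verified via $\mc{F}_\msf{cert}$) from $r-1$ \emph{distinct} signers on the same value $M'$ appends its own $\mc{F}_\msf{cert}$-signature and multicasts the extended chain. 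After round $t+1$, each party outputs $M'$ if the set of all values on which it saw a validly signed chain has size exactly one; otherwise it outputs $\bot$. Clock advancement is triggered by \textsc{Advance\_Clock} messages in the standard way, and the sender's early \textsc{Advance\_Clock} in round $1$ signals it is done.

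Next, I would build a simulator $\mc{S}$ that, given access to $\mc{F}_\msf{RBC}$, internally runs a copy of $\Pi_\msf{RBC}$ and $\mc{F}_\msf{cert}$ together with the real adversary $\mc{A}$. For an initially honest sender, upon receiving $(\sid,\textsc{Broadcast},M,S)$ from $\mc{F}_\msf{RBC}$, $\mc{S}$ simulates round $1$ with $S$ signing and multicasting $M$, forwarding $\mc{A}$'s behavior on the corrupted parties to the internal simulation. For an initially corrupted sender, $\mc{S}$ waits for $\mc{A}$'s round-$1$ broadcast in the internal simulation; upon seeing the first message that $S$ sends, it forwards the corresponding $(\sid,\textsc{Broadcast},M,S)$ to $\mc{F}_\msf{RBC}$ on behalf of $S$. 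In either case, $\mc{S}$ continues simulating the remaining $t$ rounds, then inspects the final view of honest parties: letting $M^*$ be their common output (either the unique validly chained value or $\bot$), it sends $(\sid,\textsc{Allow},M^*)$ to $\mc{F}_\msf{RBC}$ so that the ideal-world output matches. Adaptive corruption queries are handled by revealing the simulated internal state of the targeted party (including its $\mc{F}_\msf{cert}$-history), which is well defined because $\mc{F}_\msf{cert}$ exposes no secret state to simulate.

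The agreement guarantee embedded in $\mc{F}_\msf{RBC}$ follows from the standard Dolev-Strong argument: if some honest $P$ outputs $M'$ at round $t+1$, then it holds a chain of $t+1$ distinct signers; at least one of them is honest; that honest signer forwarded the chain by round $t+1$, so every other honest party saw it as well. The weak-validity clause (an \emph{always}-honest sender forces output $M$) holds because $\mc{F}_\msf{cert}$ guarantees unforgeability for an uncorrupted signer, preventing any alternative signed value from $S$ from appearing, and $S$ itself multicasts $(M,\sigma_S)$ in round $1$. Computational indistinguishability of $\textsc{EXEC}_{\Pi_\msf{RBC},\mc{A},\mc{Z}}$ and $\textsc{EXEC}_{\mc{F}_\msf{RBC},\mc{S},\mc{Z}}$ reduces to the ideal behavior of $\mc{F}_\msf{cert}$, since aside from outputs $\mc{S}$ perfectly mirrors the internal simulation of $\mc{A}$.

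The main obstacle is the \emph{non-atomic} corruption of the sender $S$: $\mc{A}$ may corrupt $S$ \emph{after} $S$ has transmitted $(M,\sigma_S)$ to some (but not all) honest parties in round $1$. In that case $\mc{A}$ obtains the signing capability and can inject a competing signed value $M'\neq M$ to the remaining honest parties, which means the real output can be either $M$, $M'$, or $\bot$. The simulator handles this by detecting, at the moment of corruption, that $S$ has already registered a \textsc{Broadcast} with $\mc{F}_\msf{RBC}$; it then finishes the simulation to determine the actual honest-party output and uses the $\textsc{Allow}$ command to overwrite the recorded $\msf{Output}$. This is precisely why the functionality of Figure~\ref{fig:RBC} was designed to permit the $\textsc{Allow}$ override whenever $\msf{Sender}\in\mbf{P}_\msf{corr}$ at termination, and it is the subtle point where our exposition diverges slightly from~\cite{GarayKKZ11}; arguing that $\mc{S}$ always reaches $\mc{F}_\msf{RBC}$ with a consistent $\tilde M$ before the clock advances past round $t+1$ is the technical heart of the proof.
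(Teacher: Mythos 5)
The paper itself offers no proof of this statement: it is imported verbatim as a Fact from~\cite{HirtZ10,GarayKKZ11}, with only the one-line justification that $\mc{F}_\msf{RBC}$ ``can be realized based on the Dolev-Strong protocol and a UC-secure signature scheme.'' Your reconstruction is exactly that argument --- signature chains over $\mc{F}_\msf{cert}$, extraction of an initially corrupted sender's value from its first round-one message, and use of the \textsc{Allow} interface to resolve the non-atomic corruption of a sender who has already transmitted to some honest parties --- and you correctly identify the \textsc{Allow}-override as the reason the functionality is shaped the way it is. In approach this matches what the cited works do and what the paper relies on.

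One wrinkle you should resolve before calling this a proof: you describe $\Pi_\msf{RBC}$ as running for $t+1$ clock ticks of $\mc{G}_\msf{clock}$, yet the functionality of Figure~\ref{fig:RBC} delivers $\msf{Output}$ to \emph{all} parties the moment the honest sender issues its (first) \textsc{Advance\_Clock}, and halts. As literally written, the two worlds then disagree on \emph{when} honest recipients produce output ($\mc{Z}$ can read the clock and poll a recipient at tick $2$), so no simulator can fix this. The paper's downstream use of $\mc{F}_\msf{RBC}$ in $\Pi_\msf{UBC}$ (Figure~\ref{fig:real_UBC}) confirms the intended reading: each $\mc{F}_\msf{RBC}$ instance completes within a single global-clock round, i.e., the $t+1$ Dolev-Strong rounds must be treated as message-passing sub-rounds inside one clock tick (or, equivalently, the functionality must be given a matching delivery delay). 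You need to commit to one of these conventions and make the simulator's \textsc{Allow}/delivery timing consistent with it; everything else in your sketch (agreement via the honest signer in any $(t+1)$-chain, weak validity via unforgeability of $\mc{F}_\msf{cert}$, state revelation on adaptive corruption) is the standard and correct argument.
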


\subsection{Time-lock encryption}\label{subsec:background_TLE}
To realize our secure SBC we will mobilise a special type of encryption, called \emph{time-lock encryption} (TLE). TLE allows one to encrypt a message $M$ for a set amount of time $\tau$. Decryption requires a witness $w$ whose computation is inherently sequential. \cite{ALZ21} provides a UC treatment of the TLE primitive, and a TLE scheme that is UC secure against static adversaries. We will revisit TLE in the presence of adaptive adversaries in Section~\ref{sec:real_TLE}.\medskip

\noindent\textbf{The time-lock encryption  (TLE) functionality.}
In Figure~\ref{fig:TLE}, we present the TLE functionality from~\cite{ALZ21}. Here, $\msf{leak}(\cdot)$ is a function over time slots that captures the timing advantage of the adversary in intercepting the TLE ciphertexts, and $\msf{delay}$ is an integer that express the delay of ciphertext generation. \medskip

\noindent\textbf{The Astrolabous TLE scheme.} We recap the algorithms of the Astrolabous TLE scheme from~\cite{ALZ21}. The scheme utilizes a symmetric-key encryption scheme $\Sigma_\msf{SKE}=(\msf{SKE.Gen},$ $\msf{SKE.Enc},\msf{SKE.Dec})$ and a cryptographic hash function $H(\cdot)$.
\begin{itemize}
    \item The encryption algorithm $\msf{AST.Enc}$ on input the message $M$ and the time-lock difficulty (the security parameter $\lambda$ is implicit from the message size) $\tau_\msf{dec}$ does:
    \begin{enumerate}
        \item It samples an SKE key $k\leftarrow\msf{SKE.Gen}(1^\lambda)$;
        \item It computes $c_{M,k}\leftarrow\msf{SKE.Enc}(k,M)$;
        \item It picks a randomness $r_0||\cdots||r_{q\tau_\msf{dec}-1}\overset{\$}{\leftarrow}\big(\{0,1\}^{\lambda}\big)^{q\tau_\msf{dec}}$;
        \item It computes $c_{k,\tau_\msf{dec}}\leftarrow\big(r_0,r_1\oplus H(r_0),\ldots,k\oplus H(r_{q\tau_\msf{dec}-1})\big)$;
        \item It outputs the ciphertext $c:=(\tau_\msf{dec},c_{M,k},c_{k,\tau_\msf{dec}})$.
    
    \end{enumerate}
        \item The decryption algorithm $\msf{AST.Dec}$ takes as input a ciphertext $c=(\tau_\msf{dec},c_{M,k},c_{k,\tau_\msf{dec}})$ and a decryption witness $w_{\tau_\msf{dec}}=\big(H(r_0),\ldots,H(r_{q\tau_\msf{dec}-1})\big)$ that is computed via $q\tau_\msf{dec}$ sequential hash queries as follows: 
        
        Parse $c_{k,\tau_\msf{dec}}$ and compute $H(r_0)$. Then, for $j=1,\ldots,q\tau_\msf{dec}-1$, compute $H(r_j)$ by (i) performing an $\oplus$ operation between $H(r_{j-1})$ and the $j$-th element of $c_{k,\tau_\msf{dec}}$ denoted as $c_{k,\tau_\msf{dec}}[j]$ to derive $r_j$, and (ii) hashing $r_j$.
        
        Given $c$ and $w_{\tau_\msf{dec}}$, the algorithm $\msf{AST.Dec}$ does:
        \begin{enumerate}
            \item It extracts the SKE key as $k\leftarrow H(r_{q\tau_\msf{dec}-1})\oplus c_{k,\tau_\msf{dec}}[q\tau_{\msf{dec}-1}]$;
            \item It decrypts $M$ as $M\leftarrow\msf{SKE.Dec}(k,c_{M,k})$.
        \end{enumerate}
\end{itemize}

\begin{figure}[H]
\begin{tcolorbox}[enhanced, colback=white, arc=10pt, drop shadow southeast]
\noindent\emph{\underline{The time-lock encryption functionality $\mc{F}^{\msf{leak,delay}}_{\msf{TLE}}(\mbf{P})$.}}\\[5pt]
\begin{small}
The functionality initializes the list of recorded message/ciphertext $L_{\msf{rec}}$ as empty and defines the tag space $\msf{TAG}$. It also maintains the set of corrupted parties, $\mathbf{P}_\msf{corr}$, initialized as empty.\\[2pt]
\extitem Upon receiving $(\sid,\textsc{Enc},M,\tau)$ from $P\not\in \mathbf{P}_{\msf{corr}}$, it reads the time $\msf{Cl}$ and does:
\begin{enumerate}[leftmargin=*]
\item If $\tau <0$, it returns $(\sid,\textsc{Enc},M,\tau,\bot)$ to $P$.
\item It picks $\msf{tag} \overset{\$}{\leftarrow} \msf{TAG}$ and it inserts the tuple $(M,\textsf{Null},\tau,\msf{tag},\msf{Cl},P)\rightarrow L_{\msf{rec}}$.
\item It sends $(\sid,\textsc{Enc},\tau,\msf{tag},\msf{Cl},0^{|M|},P)$ to $\mc{S}$. Upon receiving the token back from $\mc{S}$ it returns $(\sid,\textsc{Encrypting})$ to $P$.
\end{enumerate}
\extitem Upon receiving $(\sid,\textsc{Update},\{(c_{j},\msf{tag}_{j})\}_{j=1}^{p(\lambda)})$ from $\mc{S}$, for all $c_{j}\neq \textsf{Null}$ it updates each tuple $(M_{j},\textsf{Null},\tau_{j},\msf{tag}_{j},\msf{Cl}_{j},P)$ in $L_{\msf{rec}}$ to $(M_{j},c_{j},\tau_{j},\msf{tag}_{j},\msf{Cl}_{j},P)$.\\[2pt]
\extitem Upon receiving $(\sid,\textsc{Update},\{(c_{j},M_{j}, \tau_{j})\}_{j=1}^{p(\lambda)})$ from $\mc{S}$, for all $j\in[p(\lambda)]$ it stores the  tuple $(M_{j},c_{j},\tau_{j},\textsf{Null},0,\textsf{Null})$ in $L_{\msf{rec}}$.\\[2pt]
\extitem Upon receiving $(\sid,\textsc{Retrieve})$ from $P$, it reads the time $\msf{Cl}$ and does:
\begin{enumerate}
    \item For every tuple $(M,\textsf{Null},\tau,\msf{tag},\msf{Cl}',P)\in L_{\msf{rec}}$ such that $\msf{Cl}-\msf{Cl}'\geq \msf{delay}$, it picks $c\overset{\$}{\leftarrow} \{0, 1\}^{p'(\lambda)}$ and updates the tuple as $(M,c,\tau,\msf{tag},\msf{Cl}',P)$.
    \item It sets $\mathcal{C}:= \{(M,c,\tau)\}_{(M,c,\tau,\cdot,\msf{Cl}',P)\in L_{\msf{rec}}:\msf{Cl}-\msf{Cl}'\geq \msf{delay}}$.
    \item It returns $(\sid,\textsc{Encrypted},\mathcal{C})$ to $P$.
\end{enumerate}
\extitem Upon receiving $(\sid,\textsc{Dec},c,\tau)$ from $P\not\in \mathbf{P}_{\msf{corr}}$, if $c\not= \textsf{Null}$:
\begin{enumerate}[leftmargin=*]
\item If $ \tau <0 $, it returns $(\sid,\textsc{Dec},c,\tau,\bot)$ to $P$. Else, it reads the time $\msf{Cl}$ from $\mc{G}_{\msf{clock}}
$ and: 
\begin{enumerate}
\item If $\msf{Cl}<\tau$, it sends $(\sid,\textsc{Dec},c,\tau,\textsc{More\_Time})$ to $P$.
\item If $\msf{Cl}\geq\tau$, then\\[2pt] 
\;-- If there are two tuples $(M_1,c,\tau_{1},\cdot,\cdot,\cdot),(M_{2},c,\tau_{2},\cdot,\cdot,\cdot)$ in $L_\msf{rec}$ such that $M_{1}\neq M_{2}$ and $c\neq \textsf{Null}$ where $\tau \geq \msf{max}\{\tau_{1},\tau_{2}\}$, it returns to $P$ $(\sid,\textsc{Dec},c,\tau,\bot)$.

\;-- If no tuple $(\cdot,c,\cdot,\cdot,\cdot,\cdot)$ is recorded in $L_\msf{rec}$, it sends $(\sid,\textsc{Dec},c,\tau)$ to $\mc{S}$. Upon receiving $(\sid, \textsc{Dec}, c, \tau, M)$ back from $\mc{S}$ it stores $(M, c, \tau, \textsf{Null}, 0, \textsf{Null})$ in $L_{\msf{rec}}$ and returns $(\sid, \textsc{Dec}, c, \tau, M)$ to $P$. %
 
\;-- If there is a unique tuple $(M,c,\tau_\msf{dec},\cdot,\cdot,\cdot)$ in $L_\msf{rec}$, then if $\tau\geq\tau_\msf{dec}$, it returns $(\sid$, $\textsc{Dec}, c, \tau,M)$  to $P$. Else, if $\msf{Cl}<\tau_\msf{dec}$, it returns $(\sid$, $\textsc{Dec}, c, \tau, \textsc{More\_Time})$ to $P$. Else, if $\msf{Cl}\geq\tau_\msf{dec}>\tau$, it returns $(\sid,\textsc{Dec}, c, \tau, \textsc{Invalid\_Time})$ to $P$. 
\end{enumerate}
\end{enumerate}
\extitem Upon receiving $(\sid,\textsc{Leakage})$ from $\mc{S}$, it reads the time $\msf{Cl}$ from $\mc{G}_{\msf{clock}}
$ and returns $(\sid,\textsc{Leakage},(\{(M,c,\tau)\}_{\forall (M,c,\tau,\cdot,\cdot,\cdot)\in L_{\msf{rec}}:\tau\leq \msf{leak}(\msf{Cl})}\cup\{(M,c,\tau,\msf{tag},\msf{Cl},P)\in L_\msf{rec}\}_{\forall P\in\mbf{P}_{\msf{corr}}}))$ to $\mc{S}$.
\end{small}
\end{tcolorbox}

\captionof{figure}{The functionality $\mc{F}^{\msf{leak,delay}}_{\msf{TLE}}$ parameterized by the security parameter $\lambda$, a leakage function $\msf{leak}$, a delay variable $\msf{delay}$, interacting with the parties in $\mbf{P}$, the simulator $\mc{S}$, and global clock $\mc{G}_{\msf{clock}}$.}
\label{fig:TLE}
\end{figure}

Given Astrolabous, $\mc{F}_\msf{TLE}$ is UC-realized in the static corruption model as stated below.
\begin{fact}[\cite{ALZ21}]\label{fact:TLE}
Let $\mc{F}_\msf{BC}$ be the broadcast functionality defined in~\cite{ALZ21}. There exists a protocol that UC-realizes $\mc{F}^{\msf{leak,delay}}_{\msf{TLE}}$ in the $(\mc{W}_q(\mc{F}^*_\msf{RO}),$ $\mc{F}_\msf{RO},\mc{F}_\msf{BC}, \mc{G}_\msf{clock})$-hybrid model against a static adversary corrupting $t<n$ parties, with leakage function $\msf{leak}(\Cl)=\Cl+1$ and $\msf{delay}=1$, where $\mc{F}_\msf{RO}$ and $\mc{F}^*_\msf{RO}$ are distinct random oracles.
\end{fact}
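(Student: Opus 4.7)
The approach is the standard real/ideal simulation: I would construct a PPT simulator $\mc{S}$ that runs the dummy adversary $\mc{A}$ internally, emulates the hybrid oracles $\mc{F}_\msf{BC}$, $\mc{F}_\msf{RO}$ and $\mc{W}_q(\mc{F}^*_\msf{RO})$ towards $\mc{A}$ (maintaining the oracles' state locally), and interacts with $\mc{F}^{\msf{leak,delay}}_{\msf{TLE}}$ on behalf of the honest parties. A hybrid argument then shows that $\mc{Z}$'s view in the ideal world differs from that in the real execution by at most a negligible quantity.

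For encryption, when an honest $P$ calls $(\textsc{Enc}, M, \tau)$ in the ideal world, $\mc{F}_\msf{TLE}$ reveals to $\mc{S}$ only the tag, the time $\tau$, the clock value $\msf{Cl}$, and $0^{|M|}$. The simulator produces a ``dummy'' Astrolabous ciphertext by picking the chain randomness $r_0,\ldots,r_{q\tau_\msf{dec}-1}$ fresh, replacing the final slot $k\oplus H(r_{q\tau_\msf{dec}-1})$ with a uniform $\lambda$-bit string $z$, and simulating $c_{M,k}$ as a random string of the correct length (which is well-defined when the SKE key-derivation is instantiated through $\mc{F}_\msf{RO}$). It feeds this ciphertext into the internal $\mc{F}_\msf{BC}$ so that after the single-round broadcast delay (matching $\msf{delay}=1$) the ciphertext is reported to $\mc{F}_\msf{TLE}$ via $\textsc{Update}$. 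Adversarial queries to $\mc{F}_\msf{RO}$ and $\mc{W}_q(\mc{F}^*_\msf{RO})$ are answered consistently from $\mc{S}$'s local tables, with the wrapper throttling respected for corrupted parties as in Figure~\ref{fig:wrapper}. Whenever $\mc{A}$ submits for decryption a ciphertext $c$ that was \emph{not} produced by an honest party, $\mc{S}$ runs $\msf{AST.Dec}$ on $c$ using its local emulation of $\mc{F}^*_\msf{RO}$ and delivers the extracted plaintext to $\mc{F}_\msf{TLE}$ through its $\textsc{Dec}$ callback.

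The key equivocation step kicks in when the clock reaches $\tau-1$: at that moment $\mc{S}$ issues a $\textsc{Leakage}$ request, learns the true plaintext $M$, and programs $\mc{F}^*_\msf{RO}(r_{q\tau_\msf{dec}-1})$ (and, if required, an auxiliary $\mc{F}_\msf{RO}$ point used inside the SKE layer) so that honestly running $\msf{AST.Dec}$ on the already-broadcast ciphertext recovers exactly $M$. This programming is legal because, by the wrapper cap of $q$ queries per clock tick, neither $\mc{A}$ nor any honest party can have finished unrolling the hash chain past position $q(\tau_\msf{dec}-1)$ by that time; the points $\mc{S}$ needs to overwrite have therefore never been observed when $\mc{S}$ installs them.

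The main obstacle is tightly bounding the probability that $\mc{Z}$ detects the programming. The hybrid argument would proceed as follows: (i) switch honestly generated ciphertexts to the simulated form, where the distinguishing event reduces to $\mc{A}$ guessing some intermediate $r_j\in\{0,1\}^\lambda$ without prior chain access, which happens with probability $\msf{negl}(\lambda)$ per attempt; (ii) switch the final programmed answers to uniform, invoking the uniformity of RO outputs; (iii) replace $c_{M,k}$ by an encryption of $0^{|M|}$ under the SKE layer, relying on the RO hiding of the derived key as long as $k$ is not recoverable before time $\tau$ — precisely the guarantee the wrapper provides against a static adversary; and (iv) handle $\textsc{Dec}$ queries on adversarial ciphertexts by ruling out collisions with honest ciphertexts via a union bound over the tag space and RO-output collisions. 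Stringing these hybrids together yields $\textsc{EXEC}_{\mc{F},\mc{S},\mc{Z}}\approx \textsc{EXEC}_{\Pi,\mc{A},\mc{Z}}$, as required.
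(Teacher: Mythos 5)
Your simulation blueprint---dummy ciphertexts built from $0^{|M|}$ leakage, equivocation by programming the random oracle once the $\textsc{Leakage}$ interface reveals $M$ at time $\tau-1$, extraction of adversarial plaintexts via the $\textsc{Dec}$/$\textsc{Update}$ callbacks, and an abort bounded by the probability of guessing chain randomness under the wrapper's $q$-queries-per-round cap---is essentially the argument behind this statement. Note that the paper itself imports this as a Fact from~\cite{ALZ21} without reproving it; the closest in-paper analogue is the proof of Theorem~\ref{thm:real_TLE} in Appendix~\ref{app:TLE_proof}, which follows the same simulation strategy (adapted to adaptive corruption over $\mc{F}_\msf{FBC}$), so your proposal matches the paper's approach.
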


\section{UC (un)fair broadcast against dishonest majorities}\label{sec:UFBC}

The prior work of Cohen \emph{et al.}~\cite{CGZ21} studies broadcast fairness in a simulation-based fashion with limited composability. Here, we revisit the concept of broadcast fairness in the setting of UC security, where protocol sessions may securely run concurrently or as subroutines of larger protocols; and in each session, every party can send of multiple messages. We provide a comprehensive formal treatment of the notions of \emph{unfair broadcast} (UBC) and \emph{fair broadcast} (FBC) that will be the stepping stones for the constructions of the following sections.

\subsection{Unfair broadcast definition and realization}\label{subsec:UFBC_UBC}

\noindent\textbf{The UBC functionality.}  We consider a relaxation of FBC, captured by the notion of unfair broadcast introduced in~\cite{HirtZ10}. We present the UBC functionality in Figure~\ref{fig:UBC}. Informally, in UBC, the adversary (simulator) is allowed to receive the sender's message before broadcasting actually happens, and (unlike in FBC) adaptively corrupt the sender to broadcast a message of its preference.
\begin{figure}[H]
\begin{tcolorbox}[enhanced, colback=white, arc=10pt, drop shadow southeast]
\noindent\emph{\underline{The unfair broadcast functionality $\mc{F}_\msf{UBC}(\mbf{P})$.}}\\[5pt]
\begin{small}
The functionality initializes list $L_\msf{pend}$ of pending messages as empty. It also maintains the set of corrupted parties, $\mathbf{P}_\msf{corr}$, initialized as empty.\\[2pt]
\extitem Upon receiving $(\sid,\textsc{Broadcast},M)$ from $P\in\mbf{P}\setminus\mbf{P}_\msf{corr}$, it picks a unique random $\msf{tag}$ from $\{0,1\}^\lambda$, adds the tuple $(\msf{tag},M,P)$ to $L_\msf{pend}$ and sends $(\sid,\textsc{Broadcast},\msf{tag},M,P)$ to $\mc{S}$.\\[2pt]
\extitem Upon receiving $(\sid,\textsc{Broadcast},M,P)$ from $\mc{S}$ on behalf of $P\in\mbf{P}_\msf{corr}$, it sends $(\sid,\textsc{Broadcast},M)$ to all parties and $\mc{S}$.\\[2pt]
\extitem Upon receiving $(\sid,\textsc{Allow},\msf{tag},\tilde{M})$ from $\mc{S}$, if there is a tuple $(\msf{tag},\cdot,P)\in L_\msf{pend}$ such that $P\in\mbf{P}_\msf{corr}$, it does:
\begin{enumerate}[leftmargin=*]
    \item It sends $(\sid,\textsc{Broadcast},\tilde{M})$ to all parties and $(\sid,\textsc{Broadcast},\tilde{M},P)$ to $\mc{S}$.
    \item It deletes $(\msf{tag},\cdot,P)$ from $L_\msf{pend}$.
\end{enumerate}
\extitem Upon receiving $(\sid_C,\textsc{Advance\_Clock})$ from $P\in\mbf{P}\setminus\mbf{P}_\msf{corr}$ it does:
\begin{enumerate}[leftmargin=*]
    \item It reads the time $\Cl$ from $\mc{G}_\msf{clock}$. If this is the first time that $P$ has sent a $(\sid_C,\textsc{Advance\_Clock})$ message during round $\Cl$, then for every $(\msf{tag},M,P)\in L_\msf{pend}$, it does:
\begin{enumerate}
    \item It sends $(\sid,\textsc{Broadcast},M)$ to all parties and $(\sid,\textsc{Broadcast},M,P)$ to $\mc{S}$.
    \item It deletes $(\msf{tag},M,P)$ from $L_\msf{pend}$.
\end{enumerate}
\item It returns $(\sid_C, \textsc{Advance\_Clock})$ to $P$ with destination identity $\mc{G}_\msf{clock}$.
\end{enumerate}
\end{small}
\end{tcolorbox}
\caption{The functionality $\mc{F}_\msf{UBC}$ interacting with the parties in $\mbf{P}$ and the simulator $\mc{S}$.}
\label{fig:UBC}
\end{figure}

\noindent\textbf{The UBC protocol.} In Figure~\ref{fig:real_UBC}, we present a simple protocol that utilizes multiple instances of $\mc{F}_\msf{RBC}$ (cf. Figure~\ref{fig:RBC}) to realize concurrent unfair broadcast executions. The invocation to the $\mc{F}_\msf{RBC}$ instances replaces the composition of multiple Dolev-Strong runs.

By the description of $\Pi_\msf{UBC}$, the Universal Composition Theorem~\cite{UC}, and Fact~\ref{fact:RBC}, we get the following lemma (cf. proof in Appendix~\ref{app:UBC}).
\begin{lemma}\label{lem:real_UBC}
There exists a protocol that UC-realizes $\mc{F}_\msf{UBC}$ in the $(\mc{F}_\msf{cert},\mc{G}_\msf{clock})$-hybrid model against an adaptive adversary corrupting $t<n$ parties.
\end{lemma}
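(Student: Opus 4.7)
} The plan is to first show that the protocol $\Pi_\msf{UBC}$ of Figure~\ref{fig:real_UBC} (where each call to broadcast a message $M$ is answered by invoking a fresh instance of $\mc{F}_\msf{RBC}$, tagged by the sender and a local counter, and each party outputs whatever its corresponding $\mc{F}_\msf{RBC}$ instance delivers) UC-realizes $\mc{F}_\msf{UBC}$ in the $(\mc{F}_\msf{RBC},\mc{G}_\msf{clock})$-hybrid model. Once this is established, I will compose with Fact~\ref{fact:RBC}, which provides a protocol $\Pi_\msf{RBC}$ that UC-realizes $\mc{F}_\msf{RBC}$ in the $(\mc{F}_\msf{cert},\mc{G}_\msf{clock})$-hybrid model against $t<n$ adaptive corruptions, and conclude via the UC Composition Theorem~\cite{UC} that the composed protocol $\Pi_\msf{UBC}^{\Pi_\msf{RBC}}$ UC-realizes $\mc{F}_\msf{UBC}$ in the $(\mc{F}_\msf{cert},\mc{G}_\msf{clock})$-hybrid model, which is precisely the lemma statement.

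For the hybrid-model step, I will construct a simulator $\mc{S}$ that runs the dummy adversary $\mc{A}$ internally and maintains, per broadcast instance, an internal simulated copy of $\mc{F}_\msf{RBC}$. When an honest party $P$ initiates a broadcast, $\mc{F}_\msf{UBC}$ hands $\mc{S}$ a tuple $(\sid,\textsc{Broadcast},\msf{tag},M,P)$; $\mc{S}$ uses this to simulate the corresponding $\mc{F}_\msf{RBC}$-instance delivering $(\sid,\textsc{Broadcast},M,P)$ to $\mc{A}$ on behalf of the parties it controls, exactly matching the hybrid-world view. When $\mc{A}$ causes a corrupted party to broadcast some $M'$ via its simulated $\mc{F}_\msf{RBC}$, $\mc{S}$ forwards $(\sid,\textsc{Broadcast},M',P)$ to $\mc{F}_\msf{UBC}$ on that corrupted party's behalf. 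Clock-advance messages are relayed faithfully: when an honest $P$ issues \textsc{Advance\_Clock}, $\mc{F}_\msf{UBC}$ flushes all of $P$'s pending messages, and $\mc{S}$ mirrors this in its simulated $\mc{F}_\msf{RBC}$ instances so that the adversary sees the corresponding \textsc{Broadcast} deliveries.

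The main obstacle will be faithfully handling adaptive corruption in the middle of a round, in line with the non-atomic model. Concretely, once an honest sender $P$ has already submitted $(\sid,\textsc{Broadcast},M)$ to the simulated $\mc{F}_\msf{RBC}$ but not yet issued \textsc{Advance\_Clock}, $\mc{A}$ may corrupt $P$ and invoke the $\textsc{Allow}$ interface of $\mc{F}_\msf{RBC}$ to substitute a different message $\tilde{M}$ before termination. The simulator must translate this into the matching $(\sid,\textsc{Allow},\msf{tag},\tilde{M})$ call on $\mc{F}_\msf{UBC}$ so that the ideal-world delivery equals $\tilde{M}$; because $\mc{F}_\msf{UBC}$'s \textsc{Allow} interface is only activated after the sender is corrupted, this is well-defined. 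On corruption, $\mc{S}$ also needs to hand $\mc{A}$ a consistent internal state for $P$, which is straightforward since the only committed state is the originally submitted message $M$, available from the \textsc{Broadcast} leakage. Given that $\mc{F}_\msf{RBC}$ is deterministic in its delivery once the adversary schedules it, and that $\mc{F}_\msf{UBC}$ strictly enforces delivery at the honest sender's first \textsc{Advance\_Clock} of the round, a straightforward instance-by-instance hybrid over the number of broadcasts shows that the joint view of $\mc{Z}$ and $\mc{A}$ in the ideal world is identically distributed to the one in the $(\mc{F}_\msf{RBC},\mc{G}_\msf{clock})$-hybrid world, yielding the required indistinguishability and completing the proof.
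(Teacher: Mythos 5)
Your proposal is correct and follows essentially the same route as the paper's proof: reduce to the $(\mc{F}_\msf{RBC},\mc{G}_\msf{clock})$-hybrid model via Fact~\ref{fact:RBC} and the UC Composition Theorem, then build a simulator that internally emulates the per-message $\mc{F}_\msf{RBC}$ instances, uses the $\textsc{Broadcast}$ leakage from $\mc{F}_\msf{UBC}$ to feed $\mc{A}$, translates adversarial $\textsc{Allow}$ calls on a corrupted sender's instance into $\textsc{Allow}$ calls on $\mc{F}_\msf{UBC}$, and flushes pending messages at the sender's first $\textsc{Advance\_Clock}$ of each round. The paper asserts the resulting simulation is perfect outright, whereas you invoke an instance-by-instance hybrid, but this is an inessential difference.
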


\noindent\textbf{The FBC functionality.} 
Our FBC functionality $\mc{F}^{\Delta,\alpha}_\msf{FBC}$ has the FBC functionality in~\cite{CGZ21} as a reference point, and extends~\cite{CGZ21} to the setting where any party can send of multiple messages per round. In FBC, the adversary (simulator) can receive the sender's message before its broadcasting actually happens. However, even if it adaptively corrupts the sender, the adversary cannot alter the original message that has been ``locked'' as the intended broadcast value.

The functionality is parameterized by two integers: (i) a \emph{delay} $\Delta$, and (ii) an \emph{advantage} $\alpha$ of the simulator $\mc{S}$ to retrieve the broadcast messages compared to the parties. Specifically, if a message is requested to be broadcast at time $\Cl^*$, then $\mc{F}^{\Delta,\alpha}_\msf{FBC}$ will send it to the parties at time $\Cl^*+\Delta$, whereas $\mc{S}$ can obtain it at time $\Cl^*+\Delta-\alpha$.

\begin{figure}[H]
\begin{tcolorbox}[enhanced, colback=white, arc=10pt, drop shadow southeast]
\noindent\emph{\underline{The unfair broadcast protocol $\Pi_\msf{UBC}(\mc{F}_\msf{RBC},\mbf{P})$.}}\\[5pt]
\begin{small}
Every party $P$ maintains two counters $\msf{total}^P,\msf{count}^P$, initialized to $0$. \\[2pt]
\extitem Upon receiving $(\sid,\textsc{Broadcast},M)$ from $\mc{Z}$, the party $P$ does:
\begin{enumerate}[leftmargin=*]
\item She increases $\msf{count}^P$ and $\msf{total}^P$ by $1$.
\item She sends $(\sid,\textsc{Broadcast},M)$ to $\mc{F}^{P,\msf{total}^P}_\msf{RBC}$.
\end{enumerate}
\vspace{2pt}
\extitem Upon receiving $(\sid,\textsc{Broadcast},M^*,P^*)$ from $\mc{F}^{P^*,\cdot}_\msf{RBC}$, the party $P$ forwards $(\sid,\textsc{Broadcast},M^*)$ to $\mc{Z}$.\\[2pt]
\extitem Upon receiving $(\sid_C,\textsc{Advance\_Clock})$ from $\mc{Z}$, the party $P$ reads the time $\Cl$ from $\mc{G}_\msf{clock}$. If this is the first time that she has received a $(\sid_C,\textsc{Advance\_Clock})$ command during round $\Cl$, she does:
\begin{enumerate}[leftmargin=*]
    \item For every $j=1,\ldots,\msf{count}^P$, she sends $(\sid_C,\textsc{Advance\_Clock})$ to $\mc{F}^{P,\msf{total}^P-(\msf{count}^P-j)}_\msf{RBC}$. Namely, $P$ instructs $\mc{F}^{P,\msf{total}^P-(\msf{count}^P-j)}_\msf{RBC}$ to broadcast her $j$-th message for the current round $\Cl$.
    \item She resets $\msf{count}^P$ to $0$.
    \item She forwards $(\sid_C,\textsc{Advance\_Clock})$ to $\mc{G}_\msf{clock}$.
\end{enumerate}
\end{small}
\end{tcolorbox}
\caption{The protocol $\Pi_\msf{UBC}$ with the parties in $\mbf{P}$.}
\label{fig:real_UBC}
\end{figure}
\subsection{Fair broadcast definition and realization}\label{subsec:UFBC_FBC}
%
\begin{figure}[H]
\begin{tcolorbox}[enhanced, colback=white, arc=10pt, drop shadow southeast]
\noindent\emph{\underline{The fair broadcast functionality $\mc{F}^{\Delta,\alpha}_\msf{FBC}(\mbf{P})$.}}\\[5pt]
\begin{small}
The functionality initializes the list $L_\msf{pend}$ of (unlocked) pending messages as empty, the list $L_\msf{lock}$ of locked messages as empty, and a variable $\msf{Output}$ as $\bot$. It also maintains the set of corrupted parties, $\mathbf{P}_\msf{corr}$, initialized as empty.\\[2pt]
\extitem Upon receiving $(\sid,\textsc{Broadcast},M)$ from $P\in\mbf{P}\setminus\mbf{P}_\msf{corr}$ or $(\sid,\textsc{Broadcast},M,P)$ from $\mc{S}$ on behalf of $P\in\mbf{P}_\msf{corr}$, it reads the time $\Cl$ from $\mc{G}_\msf{clock}$, picks a unique random $\msf{tag}$ from $\{0,1\}^\lambda$, and adds the tuple $(\msf{tag},M,P,\Cl)$ to $L_\msf{pend}$. Then, it sends $(\sid,\textsc{Broadcast},\msf{tag},P)$ to $\mc{S}$.\\[2pt]
\extitem Upon receiving $(\sid,\textsc{Output\_Request},\msf{tag})$ from $\mc{S}$, it reads the time $\Cl$ from $\mc{G}_\msf{clock}$. If there is a tuple $(\msf{tag},M,P,\Cl^*)\in L_\msf{pend}$ such that $\Cl-\Cl^*=\Delta-\alpha$, it adds $(\msf{tag},M,P,\Cl^*)$ to $L_\msf{lock}$, removes it from $L_\msf{pend}$, and sends $(\sid,\textsc{Output\_Request},\msf{tag},M,P,\Cl^*)$ to $\mc{S}$. \\[2pt]
\extitem Upon receiving $(\sid,\textsc{Corruption\_Request})$ from $\mc{S}$, it sends $(\sid,\textsc{Corruption\_Request},\langle(\msf{tag},M,P,\Cl^*)\in L_\msf{pend}: P\in\mbf{P}_\msf{corr}\rangle)$ to $\mc{S}$.\\[2pt]
\extitem Upon receiving $(\sid,\textsc{Allow},\msf{tag},\tilde{M},\tilde{P})$ from $\mc{S}$, it does:
\begin{enumerate}[leftmargin=*]
    \item If there is no tuple  $(\msf{tag},M,\tilde{P},\Cl^*)$ in $L_\msf{pend}$ or $L_\msf{lock}$, it ignores the message.
        \item If $\tilde{P}\in\mbf{P}\setminus\mbf{P}_\msf{corr}$ or $(\msf{tag},M,\tilde{P},\Cl^*)\in L_\msf{lock}$, it ignores the message.
    \item If $\tilde{P}\in\mbf{P}_\msf{corr}$ and $(\msf{tag},M,\tilde{P},\Cl^*)\in L_\msf{pend}$ (i.e., the message is not  locked), it sets $\msf{Output}\leftarrow \tilde{M}$. If there is no tuple $(\msf{tag},\cdot,\cdot,\cdot)$ in $L_\msf{lock}$, it adds $(\msf{tag},\msf{Output},\tilde{P},\Cl^*)$ to $L_\msf{lock}$ and removes $(\msf{tag},M,\tilde{P},\Cl^*)$ from $L_\msf{pend}$. It sends $(\sid,\textsc{Allow\_OK})$ to $\mc{S}$.
\end{enumerate}
\vspace{2pt}
\extitem Upon receiving $(\sid_C,\textsc{Advance\_Clock})$ from $P\in\mbf{P}\setminus\mbf{P}_\msf{corr}$, it does:
\begin{enumerate}[leftmargin=*]
    \item It reads the time $\Cl$ from $\mc{G}_\msf{clock}$.
    \item Let $L\leftarrow L_\msf{pend} @ L_\msf{lock}$ be the concatenation of the two lists. It sorts $L$ lexicographically w.r.t. the second coordinate (i.e. messages) of its tuples. 
    \item For every tuple $(\msf{tag}^*,M^*,P^*,\Cl^*)\in L$, if $\Cl-\Cl^*=\Delta$, it sends $(\sid,\textsc{Broadcast},M^*)$ to $P$.
    \item It returns $(\sid_C, \textsc{Advance\_Clock})$ to $P$ with destination identity $\mc{G}_\msf{clock}$.
\end{enumerate}
\end{small}
\end{tcolorbox}
\caption{The functionality $\mc{F}^{\Delta,\alpha}_\msf{FBC}$ interacting with the parties in $\mbf{P}$ and the simulator $\mc{S}$, parameterized by delay $\Delta$ and simulator advantage $\alpha$.}
\label{fig:FBC}
\end{figure}
The functionality associates each \textsc{Broadcast} request with a unique random tag, marks the request as ``pending'', and informs $\mc{S}$ of the senders' activity by leaking the tag and the sender's identity to $\mc{S}$. After $\Delta-\alpha$ rounds, $\mc{S}$ can perform an \textsc{Output\_Request} and obtain the message that corresponds to a specific tag. However, at this point and unlike in UBC, the message becomes ``locked'' and $\mc{S}$ cannot alter it with a message of its choice, even if the sender gets adaptively corrupted.
Besides, by performing a \textsc{Corruption\_Request}, $\mc{S}$ can obtain the pending messages of all corrupted parties, so that it can update the state of the corresponding simulated party with the actual pending messages. The simulator may change the original message of a broadcast request with a value of its choice only if (i) the associated sender is corrupted and (ii) the original message is not locked.
The message delivery to the parties happens when the parties forward an \textsc{Advance\_Clock} message for the round that is $\Delta$ time after the broadcast request occurred. The functionality is formally presented in Figure~\ref{fig:FBC}.\\

\noindent\textbf{The FBC protocol.} 
The (stand-alone) FBC protocol proposed in~\cite{CGZ21} is not UC secure. In Figure~\ref{fig:real_FBC}, we present our protocol that realizes concurrent fair broadcast executions. As in~\cite{CGZ21}, we deploy (a) UBC, (b) time-lock puzzles (instantiated by the TLE algorithms in~\cite{ALZ21}) to achieve broadcast fairness, and (c) a programmable RO to allow equivocation (also applied in~\cite{Nielsen02,BaumDDNO21,ALZ21}).

In order to construct FBC in a setting with recurring and arbitrary scheduled broadcast executions, several technical issues arise. The key challenge here is to ensure that messages are retrieved by all parties in the same round. Our protocol carefully orchestrates TLE encryption, emission, reception, and TLE decryption of messages broadcast in UBC manner w.r.t. the global clock to achieve this. The UC-secure protocol $\Pi_\msf{FBC}$ encompasses the following key features:
\begin{enumerate}[leftmargin=*]
    \item Resource-restriction is formalized via a wrapper $\mc{W}_q(\mc{F}^*_\msf{RO})$ that allows a party or the adversary to perform up to $q$ parallel queries per round (cf.~\cite{BadertscherMTZ17,GarayKOPZ20,ALZ21} for similar formal treatments).
    \item To take advantage of parallelization that the wrapper offers, parties generate puzzles for creating TLE ciphertexts (and solve the puzzles of the ciphertexts they have received) only when they are about to complete their round. I.e., when receiving an \textsc{Advance\_Clock} command by the environment, they broadcast in UBC manner all their messages (TLE encrypted with difficulty set to $2$ rounds and equivocated) for the current round. Observe that  if without such restriction and allow senders broadcast their messages upon instruction by the environment, then this would lead to a "waste of resources"; so, parties would not be able to broadcast more than $q$ messages per round and/or they would not have any queries left to proceed to puzzle solution.
\end{enumerate}
\begin{enumerate}[leftmargin=*]
\setcounter{enumi}{2}
    \item For realization of $\mc{F}_\msf{FBC}$, a message must be retrieved by all parties in the same round. Hence, we require that parties, when acting as recipients, begin decryption (puzzle solving)  \emph{in the round that follows} the one they received the associated TLE ciphertext. Otherwise, the following may happen: let parties $A$, $B$, and $C$ complete round $\Cl$ first, second, and third, respectively. If $B$ broadcasts an encrypted message $M$, then, unlike $C$, $A$ will have exhausted its available resources (RO queries) by the time she receives $M$. As a result, $C$ is able to retrieve $M$ at round $\Cl+1$ (by making the first set of $q$ RO queries in $\Cl$ and the second set in $\Cl+1$) whereas $A$ not earlier than $\Cl+2$ (by making the first set in $\Cl+1$ and the second in $\Cl+2$). 
    \item The reason that we impose time difficulty of two rounds instead of just one is rather technical. Namely, if it was set to one round, then the number of queries required for puzzle solution is $q$. However, a rushing real-world adversary may choose to waste all of its resources to decrypt a TLE ciphertext \emph{in the same round} that the ciphertext was intercepted. In this case, the simulator would not have time for equivocating the randomness hidden in the underlying puzzle and simulation would fail.
\end{enumerate}
\begin{figure}[H]
\begin{tcolorbox}[enhanced, colback=white, arc=10pt, drop shadow southeast]
\noindent\emph{\underline{The fair broadcast protocol $\Pi_\msf{FBC}(\mc{F}_\msf{UBC},\mc{W}_q(\mc{F}^*_\msf{RO}),\mc{F}_\msf{RO},\mbf{P})$.}}\\[5pt]
\begin{small}
The protocol utilizes the TLE algorithms $(\msf{AST.Enc},\msf{AST.Dec})$ described in Section~\ref{subsec:background_TLE}. Every party $P$ maintains (i) a list $L^P_\msf{pend}$ of messages pending to be broadcast, (ii) a list $L^P_\msf{wait}$ of received ciphertexts waiting to be decrypted, and (iii) a list $L^P$ of messages ready to be delivered.
All three lists are initialized as empty. \\[2pt]
\extitem Upon receiving $(\sid,\textsc{Broadcast},M)$ from $\mc{Z}$, the party $P$ adds $M$ to $L^P_\msf{pend}$.\\[2pt]
\extitem Upon receiving $(\sid,\textsc{Broadcast},(c^*,y^*))$ from $\mc{F}_\msf{UBC}$, the party $P$ reads the time $\Cl$ from $\mc{G}_\msf{clock}$ and adds $(c^*,y^*,\Cl)$ to $L^P_\msf{wait}$.\\[2pt]
\extitem Upon receiving $(\sid,\textsc{Advance\_Clock})$ from $\mc{Z}$, the party $P$ reads the time $\Cl$ from $\mc{G}_\msf{clock}$. If this is the first time that $P$ has received $(\sid,\textsc{Advance\_Clock})$ for time $\Cl$, she does:
\begin{enumerate}[leftmargin=*]
    \item For every $M$ in $L^P_\msf{pend}$, she picks puzzle randomness $r^M_0||\cdots||r^M_{2q-1}\overset{\$}{\leftarrow}\big(\{0,1\}^{\lambda}\big)^{2q}$.
    \item For every $(c^*,y^*,\Cl-1)$ in $L^P_\msf{wait}$, she parses $c^*$ as $(2,c^*_2,c^*_3)$ and $c^*_3$ as $\big(r^*_0,z^*_1,\ldots,z^*_{2q})$. For every $(c^{**},y^{**},\Cl-2)$ in $L^P_\msf{wait}$, she parses $c^{**}$ as $(2,c^{**}_2,c^{**}_3)$ and $c^*_3$ as $\big(r^{**}_0,z^{**}_1,\ldots,z^{**}_{2q})$. 
    \item She makes all available $q$ queries $Q_0,\ldots,Q_{q-1}$ to $\mc{W}_q(\mc{F}^*_\msf{RO})$ for time $\Cl$ and receives responses $R_0,\ldots,R_{q-1}$, respectively, where
    \begin{itemize}
        \item $Q_0=\big(\cup_{M\in L^P_\msf{pend}}\{r^M_0,\ldots,r^M_{2q-1}\}\big)\bigcup\big(\cup_{(c^*,y^*,\Cl-1)\in L^P_\msf{wait}}\{r^*_0\}\big)\bigcup$\\$\bigcup\big(\cup_{(c^{**},y^{**},\Cl-2)\in L^P_\msf{wait}}\{z^{**}_q\oplus h^{**}_{q-1}\}\big)$.
        \item $R_0=\big(\cup_{M\in L^P_\msf{pend}}\{h^M_0,\ldots,h^M_{2q-1}\}\big)\bigcup\big(\cup_{(c^*,y^*,\Cl-1)\in L^P_\msf{wait}}\{h^*_0\}\big)\bigcup$\\$\bigcup\big(\cup_{(c^{**},y^{**},\Cl-2)\in L^P_\msf{wait}}\{ h^{**}_q\}\big)$.
        \item For $j\geq1$, $Q_j=\big(\cup_{(c^*,y^*,\Cl-1)\in L^P_\msf{wait}}\{z^*_j\oplus h^*_{j-1}\}\big)\bigcup\big(\cup_{(c^{**},y^{**},\Cl-2)\in L^P_\msf{wait}}\{z^{**}_{j+q}\oplus h^{**}_{j+q-1}\}\big)$.
        \item For $j\geq1$, $R_j=\big(\cup_{(c^*,y^*,\Cl-1)\in L^P_\msf{wait}}\{h^*_j\}\big)\bigcup\big(\cup_{(c^{**},y^{**},\Cl-2)\in L^P_\msf{wait}}\{ h^{**}_{j+q}\}\big)$.\footnote{ Namely, the first query includes all puzzle generation queries required for the TLE of every message that will be broadcast by $P$. The $j$-th query includes (i) all $j$-th step puzzle solving queries for decrypting messages received in round $\Cl-1$ and (ii) all $(q+j)$-step puzzle solving queries for decrypting messages received in round $\Cl-2$. The queries are computed as described in Subsection~\ref{subsec:background_TLE}. As a result, the decryption witness for each TLE ciphertext can be computed in two rounds (upon completing all necessary $2q$ hashes).}
    \end{itemize}
\end{enumerate}

%
\begin{enumerate}[leftmargin=*]
\setcounter{enumi}{3}
    \item For every $M$ in $L^P_\msf{pend}$:
    \begin{enumerate}
    \item She chooses a random value $\rho$ from the TLE message space;
    \item She encrypts as $c\leftarrow\msf{AST.Enc}(\rho,2)$ using RO responses $(h^M_0,\ldots,h^M_{2q-1})$ obtained by querying $\mc{W}_q(\mc{F}^*_\msf{RO})$ on $Q_0$.
    \item She queries $\mc{F}_\msf{RO}$ on $\rho$ and receives a response $\eta$.
    \item She computes $y\leftarrow M\oplus\eta$.
    \item She deletes $M$ from $L^P_\msf{pend}$, and sends $(\sid,\textsc{Broadcast},(c,y))$ to $\mc{F}_\msf{UBC}$.
\end{enumerate}
\item For every $(c^{**},y^{**},\Cl-2)$ in $L^P_\msf{wait}$:
\begin{enumerate}
    \item She sets the decryption witness as $w^{**}_2\leftarrow(h^{**}_0,\ldots,h^{**}_{q-1})$.
    \item She computes $\rho^{**}\leftarrow\msf{AST.Dec}(c^{**},w^{**}_2)$.
    \item She queries $\mc{F}_\msf{RO}$ on $\rho^{**}$ and receives a response $\eta^{**}$.
    \item She computes $M^{**}\leftarrow y^{**}\oplus\eta^{**}$ and adds $M^{**}$ to $L^P$.
    \item She deletes $(c^{**},y^{**},\Cl-2)$ from $L^P_\msf{wait}$. 
\end{enumerate}
\item She sorts $L^P$ lexicographically.
\item For every $M^{**}$ in $L^P$, she returns $(\sid,\textsc{Broadcast},M^{**})$ to $\mc{Z}$.
\item She resets $L^{P}$ as empty.
\item She sends $(\msf{sid}_C,\textsc{Advance\_Clock})$ to $\mc{F}_\msf{UBC}$. Upon receiving $(\msf{sid}_C,\textsc{Advance\_Clock})$ from $\mc{F}_\msf{UBC}$, she forwards  $(\msf{sid}_C,\textsc{Advance\_Clock})$ to $\mc{G}_\msf{clock}$ and completes her round.
\end{enumerate}
\end{small}
\end{tcolorbox}
\caption{The protocol $\Pi_\msf{FBC}$ with the parties in $\mbf{P}$.}
\label{fig:real_FBC}
\end{figure}

\vspace{2pt}
The protocol is formally described in Figure~\ref{fig:real_FBC}. The core idea of the construction is the following: to broadcast a message $M$ in a fair manner, the sender chooses a randomness $\rho$ and creates a TLE ciphertext $c$ of $\rho$. Then, she queries the RO on $\rho$ to receive a response $\eta$, computes $M\oplus\eta$, and broadcasts $(c,M\oplus\eta)$ via $\mc{F}_\msf{UBC}$. When decryption time comes, any recipient can decrypt $c$ as $\rho$, obtain $\eta$ via a RO query on $\rho$, and retrieve $M$ by an XOR operation.

In the following lemma (see proof in Appendix~\ref{app:FBC_proof}), we prove that our FBC protocol UC-realizes $\mc{F}^{\Delta,\alpha}_\msf{FBC}$ for delay $\Delta=2$ and advantage $\alpha=2$. Namely, the parties retrieve the messages after two rounds and the simulator two rounds earlier (i.e., in the same round).
\begin{lemma}\label{lem:real_FBC}
The protocol $\Pi_\msf{FBC}$ in Figure~\ref{fig:real_FBC} UC-realizes $\mc{F}^{2,2}_\msf{FBC}$ in the $(\mc{F}_\msf{UBC},\mc{W}_q(\mc{F}^*_\msf{RO}),\mc{F}_\msf{RO},\mc{G}_\msf{clock})$-hybrid model against an adaptive adversary corrupting $t<n$ parties.
\end{lemma}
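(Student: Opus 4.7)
The plan is to construct a simulator $\mc{S}$ that internally emulates $\mc{F}_\msf{UBC}$, $\mc{W}_q(\mc{F}^*_\msf{RO})$, $\mc{F}_\msf{RO}$, and honest parties running $\Pi_\msf{FBC}$, and then to argue indistinguishability through a short sequence of hybrids that interpolates between the real protocol and the ideal one by gradually replacing honest TLE plaintexts with fresh randomness and patching consistency via programming of $\mc{F}_\msf{RO}$. Since $\alpha=\Delta=2$, the ideal functionality leaks each honest message to $\mc{S}$ already at the round when the sender requests broadcast; hence $\mc{S}$ always possesses the plaintext by the time it must produce a simulated TLE ciphertext and the associated masked value, which is the feature that makes the simulation go through.

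More concretely, when an honest $P$ issues $(\sid,\textsc{Broadcast},M)$ in the ideal world, $\mc{S}$ receives $(\sid,\textsc{Broadcast},\msf{tag},P)$ from $\mc{F}^{2,2}_\msf{FBC}$, issues an $\textsc{Output\_Request}$ on the same round to recover $M$, picks a fresh $\rho$ and random $y$ from the appropriate space, and simulates $P$ broadcasting $(c,y)$ via the internally emulated $\mc{F}_\msf{UBC}$, where $c\leftarrow\msf{AST.Enc}(\rho,2)$ is computed using freshly sampled $\mc{F}^{*}_\msf{RO}$ answers through the wrapper. When the clock advances and a recipient $P'$ would query $\mc{F}_\msf{RO}$ on $\rho$ to recover the mask, $\mc{S}$ programs $\mc{F}_\msf{RO}(\rho):=y\oplus M$, ensuring the recipient outputs $M$ exactly in the same round as $\mc{F}^{2,2}_\msf{FBC}$ delivers it. For corrupt senders, $\mc{S}$ observes the ciphertexts $(c^{\dagger},y^{\dagger})$ they inject into $\mc{F}_\msf{UBC}$ and, since $\mc{A}$ can only solve the puzzle of $c^{\dagger}$ by making $2q$ sequential $\mc{F}^{*}_\msf{RO}$-queries through the wrapper (which $\mc{S}$ sees), $\mc{S}$ extracts $\rho^{\dagger}$, looks up (or samples) $\mc{F}_\msf{RO}(\rho^{\dagger})=\eta^{\dagger}$, computes $M^{\dagger}=y^{\dagger}\oplus\eta^{\dagger}$, and submits $(\sid,\textsc{Broadcast},M^{\dagger},P^{\dagger})$ to $\mc{F}^{2,2}_\msf{FBC}$ on behalf of the corrupt sender within the same round. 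Adaptive corruption of $P$ after a broadcast is handled by using the $\textsc{Corruption\_Request}$ interface of $\mc{F}^{2,2}_\msf{FBC}$ to learn pending messages and by exposing to $\mc{A}$ the simulated state of $P$, which $\mc{S}$ can explain because $\rho$ and the wrapper-answers for $c$ were generated internally.

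The indistinguishability argument proceeds through the following hybrids, one per honest broadcast, in chronological order of broadcast requests: in each step, replace the true plaintext $\rho$ placed inside the TLE ciphertext by a uniformly random independent value (so that $y=M\oplus\eta$ becomes equivalently $y$ random with $\mc{F}_\msf{RO}(\rho)$ programmed to $y\oplus M$). The gap between successive hybrids is bounded by the probability that $\mc{A}$ queries $\mc{F}_\msf{RO}$ or $\mc{F}^{*}_\msf{RO}$ in a manner that would distinguish these distributions before the programming happens. This reduces to two events: (i) $\mc{A}$ queries $\mc{F}_\msf{RO}$ on $\rho$ before $\rho$ is revealed to it (negligible because $\rho$ is a random string independent of its view up to the delivery round), and (ii) $\mc{A}$ recovers $\rho$ from $c$ earlier than round $\Cl^{*}+2$; this second event is ruled out by the sequential structure of $\msf{AST.Enc}$ together with the $q$-parallel-query-per-round bound enforced by the wrapper $\mc{W}_q(\mc{F}^{*}_\msf{RO})$, which forces any adversary to spend at least two clock ticks on the $2q$ dependent $\mc{F}^{*}_\msf{RO}$ evaluations needed to unlock $c$. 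Agreement on the delivery round and on the lexicographic ordering of outputs follows from the fact that $\Pi_\msf{FBC}$ schedules puzzle-solving in the round following reception, which aligns every honest recipient on the same delivery round $\Cl^{*}+2$, matching the delivery step of $\mc{F}^{2,2}_\msf{FBC}$.

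The main obstacle is the joint treatment of adaptive corruption and the two-round puzzle difficulty. Concretely, one must argue that when $\mc{A}$ adaptively corrupts an honest sender $P$ after $P$ has broadcast a simulated $(c,y)$ but before delivery, the simulator's bookkeeping (the random $\rho$, the $\mc{F}^{*}_\msf{RO}$ transcript used to build $c$, and the pending $M$ obtained via $\textsc{Corruption\_Request}$) is jointly distributed identically to the view of a real $P$ who genuinely encrypted $M$. This is precisely the reason the protocol imposes difficulty $2$ rather than $1$: a rushing $\mc{A}$ could otherwise exhaust all $q$ wrapper queries in the same round as reception and force $\mc{S}$ to commit $\mc{F}_\msf{RO}(\rho)$ before it knows $M$, breaking equivocation; the two-round budget guarantees that $\mc{S}$ always has a full round in which to patch $\mc{F}_\msf{RO}$ consistently with the adversary's query ordering. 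Once this programming succeeds, the remainder of the argument is a routine comparison of ideal and simulated transcripts, relying on Lemma~\ref{lem:real_UBC} for the underlying $\mc{F}_\msf{UBC}$ channel.
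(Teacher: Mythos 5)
Your proposal takes essentially the same route as the paper's proof: a simulator controlling $\mc{F}_\msf{UBC}$, $\mc{W}_q(\mc{F}^*_\msf{RO})$, and $\mc{F}_\msf{RO}$ that exploits $\alpha=\Delta=2$ to learn each honest message via \textsc{Output\_Request} in the broadcast round, encrypts a fresh random $\rho$ with a random mask $y$, equivocates by programming $\mc{F}_\msf{RO}(\rho):=M\oplus y$, handles adaptive corruption via \textsc{Corruption\_Request}, and bounds the distinguishing advantage by the negligible probability that $\mc{A}$ guesses a fresh RO query before programming (the paper argues this directly rather than through your per-broadcast hybrids, but the content is the same). The one imprecision is your extraction of adversarial plaintexts: you describe $\mc{S}$ as recovering $\rho^{\dagger}$ from the $2q$ wrapper queries that $\mc{A}$ makes, yet $\mc{A}$ may never solve its own puzzle, and if it did the wrapper's $q$-per-round bound would preclude same-round extraction; the paper instead has the simulator solve the puzzle itself, which it can do instantly and without the wrapper's rate limit because it controls $\mc{F}^*_\msf{RO}$ (and it additionally rejects as invalid any adversarial ciphertext whose puzzle was not assembled from queries $\mc{A}$ actually made, mirroring what honest parties do in the real world).
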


\section{UC time-lock encryption against adaptive adversaries}\label{sec:real_TLE}

In this section, we strengthen the main result of~\cite{ALZ21} (cf. Fact~\ref{fact:TLE}), presenting the first UC realization of $\mc{F}_\msf{TLE}$ against adaptive adversaries. Specifically, we prove that the protocol $\Pi_\msf{TLE}$ presented in Frigure~\ref{fig:real_TLE} (over the Astrolabous TLE scheme, cf. Section~\ref{subsec:background_TLE}) is UC secure when deploying $\mc{F}_\msf{FBC}$ as the hybrid functionality that establishes communication among parties. In more details, the TLE construction in~\cite{ALZ21} requires that an encryptor broadcasts her TLE ciphertext to all other parties to allow them to begin solving the associated time-lock puzzle for decryption.

\begin{tcolorbox}[enhanced, colback=white, arc=10pt, drop shadow southeast]
\noindent\emph{\underline{The TLE protocol $\Pi_\msf{TLE}(\mc{F}^{\Delta,\alpha}_\msf{FBC},\mc{W}_q(\mc{F}^*_\msf{RO}),\mc{F}_\msf{RO},\mbf{P})$.}}\\[5pt]
\begin{small}
The protocol utilizes the TLE algorithms $(\msf{AST.Enc},\msf{AST.Dec})$ described in Section~\ref{subsec:background_TLE} and specifies a tag space $\msf{TAG}$. Every party $P$ maintains (i) a list $L^P_\msf{rec}$ of recorded messages/ciphertexts, and (ii) a list $L^P_\msf{puzzle}$ of the recorded oracle queries/responses for puzzle solving.
Both lists are initialized as empty.\\[2pt]
\extitem Upon receiving $(\sid,\textsc{Enc},M,\tau)$ from $\mc{Z}$, the party $P$ does:
\begin{enumerate}
    \item If $\tau<0$, she returns $(\sid,\textsc{Enc},M,\tau,\bot)$ to $\mc{Z}$.
    \item Else, she reads the time $\Cl$ from $\mc{G}_\msf{clock}$. Then, she picks $\msf{tag}\overset{\$}{\leftarrow}\msf{TAG}$, adds $(M,\msf{Null},\tau,\msf{tag},\Cl,0)$ to $L^P_\msf{rec}$ and returns $(\sid,\textsc{Encrypting})$ to $\mc{Z}$.
\end{enumerate}
\vspace{2pt}
\extitem Upon receiving $(\sid,\textsc{Advance\_Clock})$ from $\mc{Z}$, the party $P$ reads the time $\Cl$ from $\mc{G}_\msf{clock}$. If this is the first time that $P$ has received $(\sid,\textsc{Advance\_Clock})$ for time $\Cl$, she does:
\begin{enumerate}
    \item She sends $(\sid_C,\textsc{Advance\_Clock})$ to $\mc{F}^{\Delta,\alpha}_\msf{FBC}$ and receives its response, which is a sequence $(\sid,\textsc{Broadcast},(c_1,\tau_1)),\ldots,$ $(\sid,\textsc{Broadcast},(c_{j_\Cl},\tau_{j_\Cl}))$ of broadcast messages (delayed by $\Delta$ rounds), along with a $(\sid_C,\textsc{Advance\_Clock})$ message (with destination identity $\mc{G}_\msf{clock}$).
    \item For $k=1,\ldots,j_\Cl$, she does:
    \begin{enumerate}
        \item She parses $c_k$ as $(c^k_1,c^k_2,c^k_3)$, and $c^k_1$ as $(c^k_{1,1},c^k_{1,2},c^k_{1,3})$.
        \item She sets $\tau^k_\msf{dec}\leftarrow c^k_{1,1}$  and parses $c^k_{1,3}$ as $(z^k_0,z^k_1,\ldots,z^k_{q\tau^k_\msf{dec}})$.
        \item She picks $\msf{tag}_k\overset{\$}{\leftarrow}\msf{TAG}$ and adds $\big(\msf{tag}_k,c_k,\tau_k,(z^k_0,z^k_1,\ldots,z^k_{q\tau^k_\msf{dec}-1}),\tau^k_\msf{dec}\big)$ to $L^P_\msf{puzzle}$.
    \end{enumerate}
    \item She runs the $\mathtt{ENCRYPT\&SOLVE}$ procedure described below,  updating $L^P_\msf{rec}$ and $L^P_\msf{puzzle}$ accordingly.
    \item For every $(M,c,\tau,\msf{tag},\Cl,0)\in L^P_\msf{rec}$ such that $c\neq\msf{Null}$, she sends $(\sid,\textsc{Broadcast},(c,\tau))$ to $\mc{F}^{\Delta,\alpha}_\msf{FBC}$ and updates the tuple  $(M,c,\tau,\msf{tag},\Cl,0)$ as  $(M,c,\tau,\msf{tag},\Cl,1)\in L^P_\msf{rec}$.
    \item She forwards $(\sid_C,\textsc{Advance\_Clock})$ to $\mc{G}_\msf{clock}$.
    
\end{enumerate}
\vspace{2pt}
\extitem Upon receiving $(\sid, \textsc{Retrieve})$ from $\mc{Z}$, the party $P$ reads the time $\Cl$ from $\mc{G}_\msf{clock}$ and returns $(\sid, \textsc{Encrypted}, \{(M,c,\tau) \mid \exists\Cl^*:(m,c,\tau,\cdot,\Cl^*,1) \in L^P_\msf{rec} \wedge \Cl-\Cl^*\geq \Delta+1\})$ to $\mc{Z}$.\\[2pt]

\end{small}
\end{tcolorbox}

\begin{figure}[H]
\begin{tcolorbox}[enhanced, colback=white, arc=10pt, drop shadow southeast]
\begin{small}
\extitem Upon receiving $(\sid,\textsc{Dec},c,\tau)$ from $\mc{Z}$, the party $P$ does:
\begin{enumerate}
    \item If $\tau<0$, she returns $(\sid,\textsc{Dec},c,\tau,\bot)$ to $\mc{Z}$.
    \item She reads the time $\Cl$ from $\mc{G}_\msf{clock}$.
    \item If $\Cl<\tau$, she returns $(\sid,\textsc{Dec},c,\tau,\textsc{More\_Time})$ to $\mc{Z}$.
    \item If there is no tuple $(\cdot,c,\cdot,\cdot,0)$ in $L^P_\msf{puzzle}$, she returns $(\sid,\textsc{Dec},c,\tau,\bot)$ to $\mc{Z}$.
    \item If there is a tuple $\big(\msf{tag},c,\tau^*,(\zeta_0,\zeta_1,\ldots,\zeta_{q\tau_\msf{dec}-1}),0\big)$ in $L^P_\msf{puzzle}$, she does:
    \begin{enumerate}
     \item If $\tau<\tau^*\leq\Cl$, she returns $(\sid,\textsc{Dec},c,\tau,\textsc{Invalid\_Time})$ to $\mc{Z}$.
     \item She parses $c$ as $(c_1,c_2,c_3)$.
     \item She sets the decryption witness as $w_{\tau_\msf{dec}}\leftarrow(\zeta_0,\zeta_1,\ldots,\zeta_{q\tau_\msf{dec}-1})$.
     \item She computes $\rho\leftarrow\msf{AST.Dec}(c_1,w_{\tau_\msf{dec}})$.
     \item She queries $\mc{F}_\msf{RO}$ on $\rho$ and receives a response $\eta$.
     \item She computes $M\leftarrow c_1\oplus\eta$.
      \item She queries $\mc{F}_\msf{RO}$ on $\rho||M$ and receives a response $c'_3$.
      \item If $c'_3\neq c_3$, she returns $(\sid,\textsc{Dec},c,\tau,\bot)$ to $\mc{Z}$. Else, she returns $(\sid,\textsc{Dec},c,\tau,M)$ to $\mc{Z}$.
 \item[]    
    \end{enumerate}

\end{enumerate}
\end{small}
\noindent\emph{\underline{The procedure $\mathtt{ENCRYPT\&SOLVE}$}}\\[2pt]
\begin{small}
\begin{enumerate}
    \item For every $(M,\msf{Null},\tau,\msf{tag},\Cl,0)\in L^P_\msf{rec}$, 
    \begin{enumerate}
        \item Set $\tau_\msf{dec}\leftarrow\tau-(\Cl+\Delta+1)$ 
        \item Pick puzzle randomness $r^M_0||\cdots||r^M_{q\tau_\msf{dec} -1}\overset{\$}{\leftarrow}\big(\{0,1\}^{\lambda}\big)^{q\tau_\msf{dec}}$.
    \end{enumerate}

    \item Make all available $q$ queries $Q_0,\ldots,Q_{q-1}$ to $\mc{W}_q(\mc{F}^*_\msf{RO})$ and receive responses $R_0,\ldots,R_{q-1}$, respectively as follows:
        \begin{itemize}
        \item $Q_0=\big(\cup_{(M,\msf{Null},\tau,\cdot,\Cl,0)\in L^P_\msf{rec}}\{r^M_0,\ldots,r^M_{q\tau_\msf{dec} -1}\}\big)\bigcup$\\
        $\bigcup\big(\cup_{(\msf{tag}^*,c^*,\tau^*,(\zeta^*_0,\zeta^*_1,\ldots,\zeta^*_{q\tau^*_\msf{dec}-1}),t^*)\in L^P_\msf{puzzle}:t^*\neq0}\{\zeta^*_{q(\tau^*_\msf{dec}-t^*)}\oplus \zeta^*_{q(\tau^*_\msf{dec}-t^*)-1}\}\big)$. For the special case where $t^*=\tau^*_\msf{dec}$, we set $\zeta^*_{-1}=0$. 
          \item $R_0=\big(\cup_{(M,\msf{Null},\tau,\cdot,\Cl,0)\in L^P_\msf{rec}}\{h^M_0,\ldots,h^M_{q\tau_\msf{dec} -1}\}\big)\bigcup$\\
        $\bigcup\big(\cup_{(\msf{tag}^*,c^*,\tau^*,(\zeta^*_0,\zeta^*_1,\ldots,\zeta^*_{q\tau^*_\msf{dec}-1}),t^*)\in L^P_\msf{puzzle}:t^*\neq0}\{h^*_{q(\tau^*_\msf{dec}-t^*)}\}\big)$.
        \item For $j\geq1$, $Q_j=\cup_{(\msf{tag}^*,c^*,\tau^*,(\zeta^*_0,\zeta^*_1,\ldots,\zeta^*_{q\tau^*_\msf{dec}-1}),t^*)\in L^P_\msf{puzzle}:t^*\neq0}\{\zeta^*_{q(\tau^*_\msf{dec}-t^*)+j}\oplus h^*_{q(\tau^*_\msf{dec}-t^*)+j-1}\}$ where $h^*_{q(\tau^*_\msf{dec}-t^*)+j-1} \in R_{j-1}$.
        \item For $j\geq1$, $R_j=\cup_{(\msf{tag}^*,c^*,\tau^*,(\zeta^*_0,\zeta^*_1,\ldots,\zeta^*_{q\tau^*_\msf{dec}-1}),t^*)\in L^P_\msf{puzzle}:t^*\neq0}\{ h^*_{q(\tau^*_\msf{dec}-t^*)+j}\}$.
    \end{itemize}
    \item For every $(M,\msf{Null},\tau,\msf{tag},\Cl,0)\in L^P_\msf{rec}$, do:
    \begin{enumerate}
        \item Set $\tau_\msf{dec}\leftarrow\tau-(\Cl+\Delta+1)$ 
        \item Choose a random value $\rho$ from the TLE message space.
        \item Encrypt as $c_1\leftarrow\msf{AST.Enc}(\rho,\tau_\msf{dec})$ using RO responses $h^M_0,\ldots,h^M_{q\tau_\msf{dec}-1}$ obtained by querying $\mc{W}_q(\mc{F}^*_\msf{RO})$ on $Q_0$.
        \item Query $\mc{F}_\msf{RO}$ on $\rho$ and receive a response $\eta$.
    \item Compute $c_2\leftarrow M\oplus\eta$.
    \item  Query $\mc{F}_\msf{RO}$ on $\rho||M$ and receive a response $c_3$.
    \item Set $c\leftarrow(c_1,c_2,c_3)$.
    \item Update the tuple $(M,\msf{Null},\tau,\msf{tag},\Cl,0)$ as $(M,c,\tau,\msf{tag},\Cl,0)$ in $L^P_\msf{rec}$.
    \end{enumerate}
    \item For every $(\msf{tag}^*,c^*,\tau^*,(\zeta^*_0,\zeta^*_1,\ldots,\zeta^*_{q\tau^*_\msf{dec}-1}),t^*)\in L^P_\msf{puzzle}$ such that $t^*\neq0$, do:
    \begin{enumerate}
        \item For every $j=0,\ldots,q-1$ update as $\zeta^*_{q(\tau^*-t^*)+j}\leftarrow h^*_{q(\tau^*-t^*)+j}$, where $h^*_{q(\tau^*-t^*)+j}$ is included in $R_j$. When this step is completed, the tuple includes the responses of the first $q(\tau^*_\msf{dec}-t^*+1)$ RO queries to $\mc{W}_q(\mc{F}^*_\msf{RO})$.
        \item Update as $t^*\leftarrow t^*-1$. When $t^*$ becomes $0$, $(\zeta^*_0,\zeta^*_1,\ldots,\zeta^*_{q\tau^*_\msf{dec}-1})$ can be parsed as the decryption witness of $c^*$.
    \end{enumerate}
\end{enumerate}
\end{small}
\end{tcolorbox}
\caption{The protocol $\Pi_\msf{TLE}$ with the parties in $\mbf{P}$.}
\label{fig:real_TLE}
\end{figure}

The following theorem shows that FBC is sufficient to guarantee adaptive security of the above TLE protocol. We refer the reader to Appendice~\ref{app:TLE_proof} for the full security proof.
\begin{theorem}\label{thm:real_TLE}
Let $\Delta,\alpha$ be integers s.t. $\Delta\geq\alpha\geq0$. The protocol $\Pi_\msf{TLE}$ in Figure~\ref{fig:real_TLE} UC-realizes $\mc{F}^{\msf{leak},\msf{delay}}_\msf{TLE}$ in the $(\mc{W}_q(\mc{F}^*_\msf{RO}),\mc{F}_\msf{RO},\mc{F}^{\Delta,\alpha}_\msf{FBC},$ $\mc{G}_\msf{clock})$-hybrid model, where $\msf{leak}(\Cl)=\Cl+\alpha$ and $\msf{delay}=\Delta+1$.
\end{theorem}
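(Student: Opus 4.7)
The plan is to exhibit a simulator $\mc{S}$ for the ideal-world execution with $\mc{F}^{\msf{leak},\msf{delay}}_\msf{TLE}$ and argue indistinguishability from $\Pi_\msf{TLE}$ via a short hybrid sequence whose sole non-trivial step reduces to the sequential hardness of the Astrolabous puzzle (the cryptographic core of Fact~\ref{fact:TLE}). The driving intuition is that a $\Pi_\msf{TLE}$ ciphertext has three components---an Astrolabous encapsulation $c_1$ of a fresh nonce $\rho$, a one-time-pad blinding $c_2=M\oplus\mc{F}_\msf{RO}(\rho)$, and an ``authenticator'' $c_3=\mc{F}_\msf{RO}(\rho\|M)$---so that the programmability of $\mc{F}_\msf{RO}$ will let $\mc{S}$ equivocate $c_2$ and $c_3$ at the moment the leakage reveals $M$, provided $\mc{A}$ has not yet queried the oracle on $\rho$.

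The simulator I would build runs internal copies of $\mc{A}$ and of the hybrid functionalities $\mc{W}_q(\mc{F}^*_\msf{RO})$, $\mc{F}_\msf{RO}$, $\mc{F}^{\Delta,\alpha}_\msf{FBC}$, and emulates honest-party code. On an encryption notification from $\mc{F}^{\msf{leak},\msf{delay}}_\msf{TLE}$ for honest $P$, it samples a fresh $\rho$, honestly produces $c_1\leftarrow\msf{AST.Enc}(\rho,\tau_\msf{dec})$ through the simulated wrapper, picks $c_2,c_3$ uniformly at random, and relays $(c,\tau)$ through the simulated $\mc{F}^{\Delta,\alpha}_\msf{FBC}$ on $P$'s behalf. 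On every clock tick and on each corruption, $\mc{S}$ issues \textsc{Leakage} to $\mc{F}^{\msf{leak},\msf{delay}}_\msf{TLE}$ and, for every newly revealed plaintext $M$ matching a stored tag, programs $\mc{F}_\msf{RO}(\rho):=c_2\oplus M$ and $\mc{F}_\msf{RO}(\rho\|M):=c_3$, aborting if $\mc{A}$ has already queried either point. On adaptive corruption of $P$, $\mc{S}$ uses the corrupted-party branch of the leakage to learn $P$'s pending plaintexts and exposes $P$'s internal state---the fresh $\rho$'s, the wrapper list $L^P$, and the partial puzzle transcripts---consistently with its programming. On a decryption request delegated for an adversary-produced $c=(c_1,c_2,c_3)$, $\mc{S}$ scans $\mc{A}$'s $\mc{F}_\msf{RO}$-transcript for a unique pair of queries $(\rho^*,\rho^*\|M^*)$ whose answers satisfy both $c_2=M^*\oplus\mc{F}_\msf{RO}(\rho^*)$ and $c_3=\mc{F}_\msf{RO}(\rho^*\|M^*)$, returning $M^*$ or $\bot$.

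Indistinguishability will proceed through three hybrids. $H_0$ is the real $\Pi_\msf{TLE}$ execution; $H_1$ rewrites the honest code in lazy-sampling style (a perfect rewriting); $H_2$ swaps the order so that $c_2,c_3$ are drawn first and $\mc{F}_\msf{RO}$ is programmed only when $M$ becomes available via leakage or corruption, which exactly matches the ideal execution under $\mc{S}$. The only source of distinguishing advantage across $H_1\to H_2$ is the event $\msf{BAD}$ that $\mc{A}$ queries $\mc{F}_\msf{RO}$ on $\rho$ or on $\rho\|M$ before $\mc{S}$ has programmed those points. Since $\rho$ is uniform in $\{0,1\}^\lambda$ and $M$ is information-theoretically independent of $\mc{A}$'s view until $\rho$ is recovered, $\msf{BAD}$ forces $\mc{A}$ to extract $\rho$ from $c_1$ strictly before the leakage releases $M$; I would preclude this by a reduction to the sequential hardness of Astrolabous underlying Fact~\ref{fact:TLE}, using the resource restriction enforced by $\mc{W}_q(\mc{F}^*_\msf{RO})$ to bound the puzzle-solving speed of any adversary from the first moment it holds $c_1$.

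The hard part will be the joint timing analysis underpinning this reduction. The invariant to verify is that, starting from the earliest round $\Cl^*+\Delta-\alpha$ in which $\mc{A}$ can obtain $c_1$ via \textsc{Output\_Request} on $\mc{F}^{\Delta,\alpha}_\msf{FBC}$, the wrapper-bounded $q\tau_\msf{dec}$-step puzzle cannot be fully solved before the round $\tau-\alpha$ in which $\mc{F}^{\msf{leak},\msf{delay}}_\msf{TLE}$ releases $M$ through the first branch of its leakage; this is exactly where the parameter choices $\msf{leak}(\Cl)=\Cl+\alpha$ and $\msf{delay}=\Delta+1$ earn their keep, aligning the $\alpha$-round FBC head start of $\mc{A}$ with the $\alpha$-round lead of $\mc{S}$'s leakage window and absorbing the offset between a ciphertext being broadcast at the end of a round and being retrievable by $\mc{Z}$ via \textsc{Retrieve}. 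A secondary subtlety is the adaptive-corruption branch: $\mc{S}$ must post-hoc explain $P$'s internal state consistently with previously injected ciphertexts, which follows from the freshness of $\rho$ since prior $\mc{A}$-queries collide with $\rho$ or $\rho\|M$ only with probability $\msf{poly}(\lambda)\cdot 2^{-\lambda}$.
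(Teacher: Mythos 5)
Your proposal follows essentially the same route as the paper's proof: a simulator that emits honest ciphertexts consisting of a genuine Astrolabous encapsulation of a fresh $\rho$ together with random $c_2,c_3$, equivocates by programming $\mc{F}_\msf{RO}$ once the \textsc{Leakage} interface reveals $M$, aborts on a premature query to $\rho$ (negligible because the wrapper-enforced sequentiality of the puzzle gives the adversary at most the $\alpha$-round head start from $\mc{F}^{\Delta,\alpha}_\msf{FBC}$, which is matched by $\msf{leak}(\Cl)=\Cl+\alpha$), and extracts adversarial plaintexts to feed back to $\mc{F}^{\msf{leak},\msf{delay}}_\msf{TLE}$ via \textsc{Update}. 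The only cosmetic differences are that the paper programs the oracle lazily at query time rather than eagerly when the leakage window opens, and recovers adversarial plaintexts by running the decryption algorithm itself (it controls $\mc{F}^*_\msf{RO}$) rather than by scanning the adversary's oracle transcript.
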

\ignore{\begin{proof}
\TZ{Nikola, do your magic here.}
\NL{Given the time I did my best. One concern I have is that is hidden the IND-CPA property . I also did not mentioned in the proof for avoiding the confusion (it is not stated in theorem). As a next fix I will include it in the theorem and update the proof as follows: If A manages to solve the TLE puzzle earlier then I can make an adversary that breaks IND-CPA}
\end{proof}}

\ignore{
By applying Theorem~\ref{thm:real_TLE} and Lemma~\ref{lem:real_FBC}, we directly get the following corollary.
}
\ignore{
\begin{corollary}\label{cor:TLE}
There exists a protocol that UC-realises $\mc{F}^{\msf{leak},\msf{delay}}_\msf{TLE}$ in the $(\mc{W}_q(\mc{F}^*_\msf{RO}),\mc{F}_\msf{RO},\mc{F}_\msf{UBC},$ $\mc{G}_\msf{clock})$-hybrid model, where $\msf{leak}(\Cl)=\Cl+2$ and $\msf{delay}=3$.
\end{corollary}
\NL{Both realization are under the same instance, e.g same leak function and delay. Thus, the composition theorem holds.}}
\section{Simultaneous Broadcast}\label{sec:SBC}

In this section, we present our formal study of the simultaneous broadcast (SBC) notion in the UC framework, which comprises a new functionality $\mc{F}_\msf{SBC}$ and a TLE-based construction that we prove it UC-realizes $\mc{F}_\msf{SBC}$. Our approach revisits and improves upon the work of Hevia~\cite{Hevia06} w.r.t. several aspects. In particular,
\begin{enumerate}[leftmargin=*]
    \item We consider SBC executions where the honest parties agree on a well-defined \emph{broadcast period}, outside of which all broadcast messages are ignored. We argue that this setting is plausible: recall that simultaneity suggests that no sender broadcasts a message depending on the messages broadcast by other parties. If there is no such broadcast period, then liveness and simultaneity are in conflict, in the sense that a malicious sender could wait indefinitely until all honest parties are forced to either (i) abort, or (ii) reveal their messages before all (malicious) senders broadcast their values. On the contrary, within an agreed valid period, honest parties can safely broadcast knowing that every invalid message will be discarded. Moreover, unlike~\cite{Hevia06}, full participation of all parties is not necessary for the termination of the protocol execution. In Section~\ref{sec:applications}, we propose practical use cases where our SBC setting is greatly desired.
    \item The SBC functionality of~\cite{Hevia06} is designed w.r.t. the synchronous communication setting in~\cite{UC}. As shown in~\cite{KatzMTZ13}, this setting has limitations (specifically, guarantee of termination) that are lifted when using $\mc{G}_\msf{clock}$. In our formal treatment, synchronicity is captured in the state-of-the-art $\mc{G}_\msf{clock}$-hybrid model.
    \item The SBC construction in~\cite{Hevia06} is proven secure only against adversaries that corrupt a minority of all parties. By utilizing TLE, our work introduces the first SBC protocol that is UC secure against any adversary corrupting $t<n$ parties.
    \end{enumerate}
\medskip
\begin{figure}[H]
\begin{tcolorbox}[enhanced, colback=white, arc=10pt, drop shadow southeast]
\noindent\emph{\underline{The simultaneous broadcast functionality $\mc{F}^{\Phi,\Delta,\alpha}_\msf{SBC}(\mbf{P})$.}}\\[5pt]
\begin{small}
The functionality initializes the list $L_\msf{pend}$ of pending messages as empty and two variables $t_\msf{start},t_\msf{end}$ to $\bot$. It also maintains the set of corrupted parties, $\mathbf{P}_\msf{corr}$, initialized as empty.\\[2pt]
\extitem Upon receiving $(\sid,\textsc{Broadcast},M)$ from $P\in\mbf{P}\setminus\mbf{P}_\msf{corr}$ or $(\sid,\textsc{Broadcast},M,P)$ from $\mc{S}$ on behalf of $P\in\mbf{P}_\msf{corr}$, it does:
\begin{enumerate}[leftmargin=*]
    \item It reads the time $\Cl$ from $\mc{G}_\msf{clock}$.
    \item If $t_\msf{start}=\bot$, it sets $t_\msf{start}\leftarrow\Cl$ and $t_\msf{end}\leftarrow t_\msf{start}+\Phi$.
    \item If $t_\msf{start}\leq\Cl< t_\msf{end}$, it does :
    \begin{enumerate}
    \item It picks a unique random $\msf{tag}$ from $\{0,1\}^\lambda$. 
    \item If $P\in\mbf{P}\setminus\mbf{P}_\msf{corr}$, it adds $(\msf{tag},M,P,\Cl,0)$ to $L_\msf{pend}$ and sends $(\sid,\textsc{Sender},\msf{tag},0^{|M|},P)$ to $\mc{S}$. Otherwise, it adds $(\msf{tag},M,P,\Cl,1)$ to $L_\msf{pend}$ and sends $(\sid,\textsc{Sender},\msf{tag},M,P)$ to $\mc{S}$.
    \end{enumerate}
\end{enumerate}
\vspace{2pt}
\extitem Upon receiving $(\sid,\textsc{Corruption\_Request})$ from $\mc{S}$, it sends $(\sid,\textsc{Corruption\_Request},\langle(\msf{tag},M,P,\Cl^*,0)\in L_\msf{pend}: P\in\mbf{P}_\msf{corr}\rangle)$ to $\mc{S}$.\\[2pt]
\extitem Upon receiving $(\sid,\textsc{Allow},\msf{tag},\tilde{M},\tilde{P})$ from $\mc{S}$, it does:
\begin{enumerate}[leftmargin=*]
    \item It reads the time $\Cl$ from $\mc{G}_\msf{clock}$.
    \item If $t_\msf{start}\leq\Cl< t_\msf{end}$ and there is a tuple $(\msf{tag},M,\tilde{P},\Cl^*,0)\in L_\msf{pend}$ and $\tilde{P}\in\mbf{P}_\msf{corr}$, it updates the tuple as $(\msf{tag},\tilde{M},\tilde{P},\Cl^*,1)$ and sends $(\sid,\textsc{Allow\_OK})$ to $\mc{S}$. Otherwise, it ignores the message.
\end{enumerate}
\vspace{2pt}
\extitem Upon receiving $(\sid_C,\textsc{Advance\_Clock})$ from $P\in\mbf{P}\setminus\mbf{P}_\msf{corr}$, it does:
\begin{enumerate}[leftmargin=*]
    \item It reads the time $\Cl$ from $\mc{G}_\msf{clock}$. 
    \item If this is the first time it has received a $(\sid_C,\textsc{Advance\_Clock})$ message from $P$ during round $\Cl$, then
    \begin{enumerate}
        \item If it has received no other $(\sid_C,\textsc{Advance\_Clock})$ message during round $\Cl$,
        \begin{enumerate}
            \item If $\Cl=t_\msf{end}$, then it does:
            \begin{enumerate}
                \item It updates every tuple $(\cdot,\cdot,P^*,\cdot,0)\in L_\msf{pend}$ such that $P^*\in\mbf{P}\setminus\mbf{P}_\msf{corr}$ as $(\cdot,\cdot,P^*,\cdot,1)$ (to guarantee the broadcast of messages from always honest parties).
                \item It sorts $L_\msf{pend}$ lexicographically according to the second coordinate (messages).
            \end{enumerate}
            \item If $\Cl=t_\msf{end}+\Delta-\alpha$, it sends $(\sid,\textsc{Broadcast},\langle (\msf{tag},M)\rangle_{(\msf{tag},M,\cdot,\cdot,1)\in L_\msf{pend}})$ to $\mc{S}$.
        \end{enumerate}
       \item If $\Cl=t_\msf{end}+\Delta$, it sends $(\sid,\textsc{Broadcast},\langle M\rangle_{(\cdot,M,\cdot,\cdot,1)\in L_\msf{pend}})$ to $P$.
    \end{enumerate}
       \item It returns $(\sid_C, \textsc{Advance\_Clock})$ to $P$ with destination identity $\mc{G}_\msf{clock}$.
\end{enumerate}
\end{small}
\end{tcolorbox}
\caption{The functionality $\mc{F}^{\Phi,\Delta,\alpha}_\msf{SBC}$ interacting with the parties in $\mbf{P}$ and the simulator $\mc{S}$, parameterized by time span $\Phi$, delay $\Delta$, and simulator advantage $\alpha$.}
\label{fig:SBC}
\end{figure}
    \noindent\textbf{The simultaneous broadcast (SBC) functionality.} Our SBC functionality $\mc{F}_\msf{SBC}$ (cf. Figure~\ref{fig:SBC}) interacts with $\mc{G}_\msf{clock}$ and is parameterized by a broadcast time span $\Phi$, a message delivery delay $\Delta$ and a simulator advantage $\alpha$. Upon receiving the first \textsc{Broadcast} request, it sets the current global time as the beginning of the broadcast period that lasts $\Phi$ rounds. If a \textsc{Broadcast} request was made by an honest sender $P$, then the functionality leaks only the sender's identity. Besides, all \textsc{Broadcast} requests are recorded as long as they are made within the broadcast period. The recorded messages are issued to each party (resp. the simulator) $\Delta$ rounds (resp. $\Delta-\alpha$ rounds) after the end of the broadcast period.

\begin{figure}[H]
\begin{tcolorbox}[enhanced, colback=white, arc=10pt, drop shadow southeast]
\noindent\emph{\underline{The SBC protocol $\Pi_\msf{SBC}(\mc{F}_\msf{UBC},\mc{F}^{\msf{leak},\msf{delay}}_\msf{TLE},\mc{F}_\msf{RO},\Phi,\Delta,\mbf{P})$.}}\\[5pt]
\begin{small}
Every party $P$ maintains a list $L^P_\msf{pend}$ of messages under pending encryption and a list $L^P_\msf{rec}$ of received ciphertexts both initialized as empty, and four variables $t^P_\msf{awake}$, $t^P_\msf{end}$, $\tau^P_\msf{rel}$, $\msf{first}^P$, all initialized to $\bot$. All parties understand a special message `$\mathtt{Wake\_Up}$' that is not in the broadcast message space.\\[2pt]
\extitem Upon receiving $(\sid,\textsc{Broadcast},M)$ from $\mc{Z}$, the party $P$ does:
\begin{enumerate}[leftmargin=*]
    \item If $t^P_\msf{awake}=\bot$, she sets $\msf{first}^P\leftarrow M$ and sends $(\sid,\textsc{Broadcast},\mathtt{Wake\_Up})$ to $\mc{F}_\msf{UBC}$.
    \item If $t^P_\msf{awake}\neq\bot$, she does:
    \begin{enumerate}
        \item She reads the time $\Cl$ from $\mc{G}_\msf{clock}$.
        \item If $\Cl\geq t^P_\msf{end}-\msf{delay}$, she ignores the message\footnote{The reason is that due to TLE ciphertext generation time ($\msf{delay}$ rounds), if $\Cl\geq t^P_\msf{end}-\msf{delay}$, then the message would not be ready for broadcast before $t^P_\msf{end}$.}. 
        \item She chooses a randomness $\msf{\rho}\overset{\$}{\leftarrow}\{0,1\}^\lambda$.
        \item She adds $(\rho,M)$ in $L^P_\msf{pend}$.
    \item\label{item:tle_end} She sends $(\sid,\textsc{Enc},\rho,\tau^P_\msf{rel})$ to $\mc{F}^{\msf{leak},\msf{delay}}_\msf{TLE}$.
    \end{enumerate}
\end{enumerate}
\vspace{2pt}
\extitem Upon receiving $(\sid,\textsc{Broadcast},\mathtt{Wake\_Up})$ from $\mc{F}_\msf{UBC}$, if $t^P_\msf{awake}=\bot$, the party $P$ does:
\begin{enumerate}[leftmargin=*]
    \item She reads the time $\Cl$ from $\mc{G}_\msf{clock}$.
    \item She sets $t^P_\msf{awake}\leftarrow\Cl$, $t^P_\msf{end}\leftarrow t^P_\msf{awake}+\Phi$, and  $\tau^P_\msf{rel}\leftarrow t^P_\msf{end}+\Delta$ (i.e., all parties agree on the start and end of the broadcast period, as well as the time-lock decryption time).
    \item If $\msf{first}^P\neq\bot$, she parses the (unique) pair in $L^P_\msf{pend}$ that contains $\msf{first}^P$ as $(\rho,\msf{first}^P)$. Then, she sends $(\sid,\textsc{Enc},\rho,\tau^P_\msf{rel})$ to $\mc{F}^{\msf{leak},\msf{delay}}_\msf{TLE}$ (this check is true only if $P$ broadcasts her first message when acting as the first sender in the session).
\end{enumerate}
\extitem Upon receiving $(\sid,\textsc{Broadcast},(c^*,\tau^*,y^*))$ from $\mc{F}_\msf{UBC}$, if $\tau^*=\tau^P_\msf{rel}$ and for every $(c',y')\in L^P_\msf{rec}:c'\neq c^*\wedge y'\neq y^*$, then the party $P$ adds $(c^*,y^*)$ to $L^P_\msf{rec}$.\\[2pt]
\extitem Upon receiving $(\msf{sid}_C,\textsc{Advance\_Clock})$ by $\mc{Z}$, the party $P$ does:
\begin{enumerate}[leftmargin=*]
    \item She reads the time $\Cl$ from $\mc{G}_\msf{clock}$. If this is not the first time she has received a $(\msf{sid}_C,\textsc{Advance\_Clock})$ command during round $\Cl$, she ignores the message.
    \item If $t^P_\msf{awake}\leq\Cl<t^P_\msf{end}$, she sends $(\sid,\textsc{Retrieve})$ to $\mc{F}^{\msf{leak},\msf{delay}}_\msf{TLE}$ to obtain the encryptions of messages that she requested $\msf{delay}$ rounds earlier. Upon receiving $(\sid,\textsc{Encrypted},T)$ from $\mc{F}^{\msf{leak},\msf{delay}}_\msf{TLE}$, she does:
    \begin{enumerate}
        \item She parses $T$ as a list of tuples of the form $(\rho,c,\tau^P_\msf{rel})$.
    \item For every $(\rho,c,\tau^P_\msf{rel})\in T$ such that there is a pair $(\rho,M)\in L^P_\msf{pend}$, she does:
    \begin{enumerate}
        \item She queries $\mc{F}_\msf{RO}$ on $\rho$ and receives a response $\eta$.
        \item She computes $y\leftarrow M\oplus\eta$.
        \item She sends $(\sid,\textsc{Broadcast},(c,\tau^P_\msf{rel},y))$ to $\mc{F}_\msf{UBC}$.
    \end{enumerate}
    \end{enumerate} 
    \item If $\Cl=\tau^P_\msf{rel}$, then for every $(c^*,y^*)\in L_\msf{rec}^P$, she does:
    \begin{enumerate}
         \item She sends $(\sid,\textsc{Dec},c^*,\tau^P_\msf{rel})$ to $\mc{F}^{\msf{leak},\msf{delay}}_\msf{TLE}$. Upon receiving $(\sid,\textsc{Dec},c^*,\tau^P_\msf{rel},\rho^*)$ from $\mc{F}^{\msf{leak},\msf{delay}}_\msf{TLE}$, if $\rho^*\notin\{\bot,\textsc{More\_Time},\textsc{Invalid\_Time}\}$, she queries $\mc{F}_\msf{RO}$ on $\rho^*$ and receives a response $\eta^*$.
    \item She computes $M^*\leftarrow y^*\oplus\eta^*$.
    \item She sends $(\sid,\textsc{Broadcast},M^*)$ to $\mc{Z}$.
    \end{enumerate}
    \item She sends $(\msf{sid}_C,\textsc{Advance\_Clock})$ to $\mc{F}_\msf{UBC}$. Upon receiving $(\msf{sid}_C,\textsc{Advance\_Clock})$ from $\mc{F}_\msf{UBC}$, she forwards  $(\msf{sid}_C,\textsc{Advance\_Clock})$ to $\mc{G}_\msf{clock}$ and completes her round.
\end{enumerate}
\end{small}

\end{tcolorbox}
\caption{The protocol $\Pi_\msf{SBC}$ with the parties in $\mbf{P}$.}
\label{fig:real_SBC}
\end{figure}

\noindent\textbf{The SBC protocol.} Our SBC protocol (cf. Figure~\ref{fig:real_SBC}) is over $\mc{F}_\msf{UBC}$ and deploys $\mc{F}^{\msf{leak},\msf{delay}}_\msf{TLE}$ to achieve simultaneity, and $\mc{F}_\msf{RO}$ for equivocation. In the beginning, the first sender notifies via $\mc{F}_\msf{UBC}$ the other parties of the start of the broadcast period via a special `\texttt{Wake\_Up}' message. By the properties of UBC, all honest parties agree on the time frame of the broadcast period that lasts $\Phi$ rounds. During the broadcast period, in order to broadcast a message $M$, the sender chooses a randomness $\rho$ and interacts with $\mc{F}^{\msf{leak},\msf{delay}}_\msf{TLE}$ to obtain a TLE ciphertext $c$ of $\rho$ (after $\msf{delay}$ rounds). By default, $c$ is set to be decrypted $\Delta$ rounds after the end of the broadcast period. Then, she makes an RO query for $\rho$, receives a response $\eta$ and broadcasts $c$ and $M\oplus\eta$ via $\mc{F}_\msf{UBC}$. Any recipient of $c,M\oplus\eta$ can retrieve the message $\Delta$ rounds after the end of the broadcast period by (i) obtaining $\rho$ via a decryption request of $c$ to $\mc{F}^{\msf{leak},\msf{delay}}_\msf{TLE}$, (ii) obtaining $\eta$ as a RO response to query $\rho$, and (iii)
computing $M\leftarrow (M\oplus\eta)\oplus\eta$.

We prove the security of our SBC construction below.

\begin{theorem}\label{thm:SBC}
Let $\msf{leak}(\cdot),\msf{delay}$ be the leakage and delay parameters of $\mc{F}_\msf{TLE}$. Let $\Phi,\Delta$ be positive integers such that $\Phi>\msf{delay}$ and $\Delta>\underset{\Cl^*}{\mathrm{max}}\{\msf{leak}(\Cl^*)-\Cl^*\}$. The protocol $\Pi_\msf{SBC}$ in Figure~\ref{fig:real_SBC} UC-realizes $\mc{F}^{\Phi,\Delta,\alpha}_\msf{SBC}$ in the $(\mc{F}_\msf{UBC},\mc{F}^{\msf{leak},\msf{delay}}_\msf{TLE},\mc{F}_\msf{RO},\mc{G}_\msf{clock})$-hybrid model against an adaptive adversary corrupting $t<n$ parties, where the simulator advantage is $\alpha=\underset{\Cl^*}{\mathrm{max}}\{\msf{leak}(\Cl^*)-\Cl^*\}+1$.
\end{theorem}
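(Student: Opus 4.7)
My plan is to exhibit a simulator $\mc{S}$ that interacts with $\mc{F}^{\Phi,\Delta,\alpha}_\msf{SBC}$ and renders the joint view of any real-world adversary $\mc{A}$ and environment $\mc{Z}$ indistinguishable from the real execution. I would have $\mc{S}$ internally simulate each of the hybrid functionalities $\mc{F}_\msf{UBC}$, $\mc{F}^{\msf{leak},\msf{delay}}_\msf{TLE}$, and $\mc{F}_\msf{RO}$, run the protocol code for every honest party using placeholder inputs, and relay all messages between the simulated world and $\mc{A}$.

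For each honest \textsc{Broadcast} leaked as $(\sid,\textsc{Sender},\msf{tag},0^{|M|},P)$, I would have $\mc{S}$ sample $\rho'\overset{\$}{\leftarrow}\{0,1\}^\lambda$ and $y'\overset{\$}{\leftarrow}\{0,1\}^{|M|}$, drive the simulated $\mc{F}^{\msf{leak},\msf{delay}}_\msf{TLE}$ to produce a ciphertext $c$ with decryption time $\tau^P_\msf{rel}=t_\msf{end}+\Delta$, and after $\msf{delay}$ rounds trigger the simulated $\mc{F}_\msf{UBC}$ to deliver $(c,\tau^P_\msf{rel},y')$ to $\mc{A}$. At time $t_\msf{end}+\Delta-\alpha$, when $\mc{S}$ receives the finalized list $\langle(\msf{tag},M)\rangle$ from $\mc{F}_\msf{SBC}$, it would for each placeholder insert the entry $(\rho',\,M\oplus y')$ into the internal table of the simulated $\mc{F}_\msf{RO}$. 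The choice $\alpha=\max_{\Cl^*}\{\msf{leak}(\Cl^*)-\Cl^*\}+1$ guarantees that the TLE leakage policy still hides $\rho'$ from $\mc{A}$ for at least one round after the programming is complete, so a collision between $\mc{A}$'s prior random-oracle queries and the sampled $\rho'$ occurs only with probability $O(q\cdot n / 2^\lambda)$.

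For a corrupted sender's broadcast $(c^*,\tau^*,y^*)$ observed through the simulated $\mc{F}_\msf{UBC}$, $\mc{S}$ would discard it when $\tau^*\neq\tau^P_\msf{rel}$, and otherwise decrypt $c^*$ through the simulated $\mc{F}^{\msf{leak},\msf{delay}}_\msf{TLE}$ (either recovering the registered $\rho^*$, or sampling a fresh one and caching it for consistency when $c^*$ was injected directly into $\mc{F}_\msf{UBC}$), read $\eta^*$ off the simulated $\mc{F}_\msf{RO}$, set $M^*\leftarrow y^*\oplus\eta^*$, and submit $(\sid,\textsc{Broadcast},M^*,P^*)$ to $\mc{F}_\msf{SBC}$. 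An adaptive corruption mid-period would be handled by issuing a \textsc{Corruption\_Request} to learn the true plaintexts behind the already-dispatched placeholders of $P$, reprogramming $\mc{F}_\msf{RO}$ accordingly so that each $(\rho',M\oplus y')$ is consistent with the state handed to $\mc{A}$, and then passing a reconstructed internal state of $P$ to $\mc{A}$; subsequent broadcasts by the now-corrupted $P$ are reconciled with $\mc{F}_\msf{SBC}$ via its \textsc{Allow} interface, which remains available because pending tuples of corrupted senders stay unlocked (status $0$) until the broadcast period ends.

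Indistinguishability would follow via a standard hybrid argument that replaces the real broadcast payload of each honest sender by the placeholder $(\rho',y')$ one broadcast at a time; consecutive hybrids are indistinguishable by (i) the hiding guarantee built into $\mc{F}^{\msf{leak},\msf{delay}}_\msf{TLE}$ for ciphertexts whose decryption time $\tau$ exceeds $\msf{leak}(\Cl)$, and (ii) the programmability of $\mc{F}_\msf{RO}$. The hypothesis $\Phi>\msf{delay}$ ensures that every broadcast request placed within the period completes its TLE encryption before $t_\msf{end}$, so the ``always-honest finalization'' at $\Cl=t_\msf{end}$ in $\mc{F}_\msf{SBC}$ matches the set of messages actually transmitted via $\mc{F}_\msf{UBC}$ in the real world. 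The hard part will be the race between $\mc{S}$ programming $\mc{F}_\msf{RO}$ at $\rho'$ and a rushing $\mc{A}$ querying the same point before $\mc{S}$ has learned $M$; the bound $\Delta>\max_{\Cl^*}\{\msf{leak}(\Cl^*)-\Cl^*\}$ gives $\mc{S}$ exactly the one-round head start it needs and keeps this bad event negligible.
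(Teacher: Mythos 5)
Your proposal matches the paper's proof in all essentials: a simulator that internally emulates $\mc{F}_\msf{UBC}$, $\mc{F}^{\msf{leak},\msf{delay}}_\msf{TLE}$, and $\mc{F}_\msf{RO}$, broadcasts TLE encryptions of random $\rho$ together with random pads $y$ on behalf of honest senders, equivocates by programming $\mc{F}_\msf{RO}$ at $(\rho, M\oplus y)$ once the message list arrives at time $t_\msf{end}+\Delta-\alpha$, extracts adversarial messages by decrypting and invoking \textsc{Allow}, and bounds the abort probability via exactly the timing inequality $\tau_\msf{rel}\leq\msf{leak}(\Cl)\Rightarrow t_\msf{end}+\Delta-\alpha<\Cl$ that the paper derives from the choice of $\alpha$. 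The only element you leave implicit (folded into ``run the protocol code for honest parties'') is the simulation of the \texttt{Wake\_Up} broadcast that synchronizes the simulator's $t_\msf{awake}$ with the functionality's $t_\msf{start}$, which the paper spells out explicitly.
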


\begin{proof}
We devise a simulator $\mc{S}_\msf{SBC}$ that given a real-world adversary $\mc{A}$, simulates an execution of $\Pi_\msf{SBC}$ in the $(\mc{F}_\msf{UBC},\mc{F}^{\msf{leak},\msf{delay}}_\msf{TLE},\mc{F}_\msf{RO},\mc{G}_\msf{clock})$-hybrid world in the presence of $\mc{A}$. $\mc{S}_\msf{SBC}$ acts as a proxy forwarding messages from/to the environment $\mc{Z}$ to/from $\mc{A}$. It also simulates the behavior of honest parties, and controls the local functionalities $\mc{F}_\msf{UBC}$, $\mc{F}^{\msf{leak},\msf{delay}}_\msf{TLE}$, and $\mc{F}_\msf{RO}$. 

In particular, $\mc{S}_\msf{SBC}$ initializes two lists $L_\msf{pend}$, $L_\msf{ubc}$  as empty and three variables $t_\msf{awake}$, $t_\msf{end}$, $\tau_\msf{rel}$ to $\bot$, and operates as follows.\\[2pt]
\extitem Whenever $\mc{A}$ corrupts a simulated party $P$, $\mc{S}_\msf{SBC}$ corrupts the corresponding dummy party. It also sends $(\sid,\textsc{Corruption\_Request})$ to $\mc{F}^{\Phi,\Delta,\alpha}_\msf{SBC}$ and records the functionality's response that is a sequence of tuples of the form $(\msf{tag}^*,M^*,P^*,\Cl^*,0)$, where $P^*$ is corrupted in the ideal world.\\[2pt]
\extitem Upon receiving $(\sid,\textsc{Sender},\msf{tag},0^{|M|},P)$ from $\mc{F}^{\Phi,\Delta,\alpha}_\msf{SBC}$, where $P$ is honest, it does:
\begin{enumerate}
    \item If this is the first time it has received such a message referring to $P$ as a sender and $t_\msf{awake}=\bot$, it simulates a broadcast of the special $\mathtt{Wake\_Up}$ message via $\mc{F}_\msf{UBC}$ by adding $(\msf{tag}_0,\mathtt{Wake\_Up},P)$ to $L_\msf{ubc}$ and providing $(\sid,\textsc{Broadcast},\msf{tag}_0,\mathtt{Wake\_Up},P)$ as leakage of $\mc{F}_\msf{UBC}$ to $\mc{A}$, where $\msf{tag}_0$ is a unique random string in $\{0,1\}^\lambda$. 
    \begin{itemize}
        \item Upon receiving as $\mc{F}_\msf{UBC}$ a message $(\sid,\textsc{Allow},\msf{tag}_0,M_0)$ from $\mc{A}$, if, meanwhile, $P$ has been corrupted, it simulates the broadcast of $M_0$ via $\mc{F}_\msf{UBC}$ by (a) sending $(\sid,\textsc{Broadcast},$ $M_0)$ to all simulated parties and $(\sid,\textsc{Broadcast},M_0,P)$ to $\mc{A}$, and (b) deleting $(\msf{tag}_0,\cdot,P)$ from $L_\msf{ubc}$. In addition, if $M_0=\mathtt{Wake\_Up}$ and $t_\msf{awake}=\bot$, it specifies the broadcast period (consistently to $\mc{F}^{\Phi,\Delta,\alpha}_\msf{SBC}$), as well as the time-lock decryption time, for the simulated session. Namely, it reads the time $\Cl$ from $\mc{G}_\msf{clock}$ and sets $t_\msf{awake}\leftarrow\Cl$, $t_\msf{end}\leftarrow t_\msf{awake}+\Phi$ and $\tau_\msf{rel}\leftarrow t_\msf{awake}+\Delta$.
    \end{itemize}
    \item It reads the time $\Cl$ from $\mc{G}_\msf{clock}$. If $t_\msf{awake}=\bot$ or $\Cl\geq t_\msf{end}-\msf{delay}$, it takes no further action, just as a real-world honest party would ignore a message that cannot be broadcast within the broadcast period. Otherwise, it chooses a randomness $\msf{\rho}\overset{\$}{\leftarrow}\{0,1\}^\lambda$ and a random value $y$ from the image of $\mc{F}_\msf{RO}$ and simulates an encryption of $\rho$ via $\mc{F}^{\msf{leak},\msf{delay}}_\msf{TLE}$ as follows:
    \begin{enumerate}
        \item It uses $\msf{tag}$ for the record of $\mc{F}^{\msf{leak},\msf{delay}}_\msf{TLE}$ and adds the tuple $(\rho,\msf{Null},\msf{tag},\Cl,y,P)$ to the list $L_\msf{pend}$. Note that by the description of $\mc{F}_\msf{SBC}$, it holds that $\Cl\geq t_\msf{awake}$, so overall, $t_\msf{awake}\leq\Cl<t_\msf{end}-\msf{delay}$ holds.
        \item It sends $(\sid,\textsc{Enc},\tau_\msf{rel},\msf{tag},\Cl,0^{|M|})$ as leakage of $\mc{F}^{\msf{leak},\msf{delay}}_\msf{TLE}$ to $\mc{A}$. Upon receiving the token back from $\mc{A}$, it returns $(\sid,\textsc{Encrypting})$ to $P$.
    \end{enumerate}
\end{enumerate}
\vspace{2pt}
\extitem Upon receiving as $\mc{F}^{\msf{leak},\msf{delay}}_\msf{TLE}$ a message $(\sid,\textsc{Update},\mbf{C})$ from $\mc{A}$, it parses $\mbf{C}$ as a set of pairs of  ciphertexts and tags. For every pair $(c,\msf{tag})\in\mbf{C}$ such that a tuple $(\rho,\msf{Null},\msf{tag},\Cl^*,y,P)$ is recorded in $L_\msf{pend}$, it updates the recorded tuple as $(\rho,c,\msf{tag},\Cl^*,y,P)$.\\[2pt]
\extitem Upon receiving as $\mc{F}^{\msf{leak},\msf{delay}}_\msf{TLE}$ a message $(\sid,\textsc{Leakage})$ from $\mc{A}$, it does:
\begin{enumerate}
    \item It reads the time $\Cl$ from $\mc{G}_\msf{clock}$.
    \item If $\tau_\msf{rel}\leq\msf{leak}(\Cl)$, it creates a set $\mbf{L}_\Cl$ that contains all triples $(\rho^*,c^*,\tau_\msf{rel})$ such that there is a tuple $(\rho^*,c^*,\msf{tag}^*,\Cl^*,y^*,P^*)\in L_\msf{pend}$. Note that by the definition of $\alpha$ and $\tau_\msf{rel}$, we have that 
    \begin{equation}\label{eq:leakage_time}
    \tau_\msf{rel}\leq\msf{leak}(\Cl)\Rightarrow t_\msf{end}+\Delta\leq\Cl+\alpha-1\Rightarrow t_\msf{end}+\Delta-\alpha<\Cl.
    \end{equation}
    Otherwise, it sets $\mbf{L}_\Cl\leftarrow\emptyset$.
    \item It sends $(\sid,\textsc{Leakage},\mbf{L}_\Cl)$ to $\mc{A}$.
\end{enumerate}
\vspace{2pt}
\extitem Upon receiving as $\mc{F}^{\msf{leak},\msf{delay}}_\msf{TLE}$ a message $(\sid,\textsc{Retrieve})$ from a corrupted party $P$, it does:
\begin{enumerate}
    \item It reads the time $\Cl$ from $\mc{G}_\msf{clock}$.
    \item It creates a set $\mbf{E}_{\Cl,P}$ that contains all triples $(\rho^*,c^*,\tau_\msf{rel})$ such that there is a tuple $(\rho^*,c^*,\msf{tag}^*,$ $\Cl^*,y^*,P)\in L_\msf{pend}$, where (i) $c^*\neq\msf{Null}$, and (ii) $\Cl-\Cl^*\geq\msf{delay}$.
    \item It sends $(\sid,\textsc{Encrypted},\mbf{E}_{\Cl,P})$ to $P$.
\end{enumerate}
\vspace{2pt}
\extitem Upon receiving $(\sid,\textsc{Broadcast},BC)$ from $\mc{F}^{\Phi,\Delta,\alpha}_\msf{SBC}$, it does:
\begin{enumerate}
    \item It reads the time $\Cl$ from $\mc{G}_\msf{clock}$.
    \item If $\Cl\neq t_\msf{end}+\Delta-\alpha$, it aborts simulation. Otherwise, it parses $BC$ as a list of pairs of tags and messages of the form $(\msf{tag}^*,M^*)$. For every pair $(\msf{tag}^*,M^*)\in BC$, it does:
    \begin{enumerate}
        \item It finds the corresponding tuple $(\rho^*,c^*,\msf{tag}^*,\Cl^*,y^*,P^*)\in L_\msf{pend}$.
        \item It adds $(\msf{tag}^*,M^*,P^*)$ to a list $L^{P^*
        }_\msf{bc}$ (initialized as empty).
        \item It programs $\mc{F}_\msf{RO}$ on $\rho^*$ as $(\rho^*,M^*\oplus y^*)$. If $\rho^*$ has been made as query by $\mc{A}$ earlier, it aborts simulation.
    \end{enumerate}
\end{enumerate}
\vspace{2pt}
\extitem Upon receiving $(\sid_C,\textsc{Advance\_Clock},P)$ from $\mc{G}_\msf{clock}$ for $P$ honest, it does:
\begin{enumerate}
    \item It reads the time $\Cl$ from $\mc{G}_\msf{clock}$.
    \item If $t_\msf{awake}\leq\Cl<t_\msf{end}$, it does:
    \begin{enumerate}
    \item It  creates a list $T_{\Cl,P}$ that includes the tuples of the form $(\rho,c,\msf{tag},y)$ that derive from tuples $(\rho,c,\msf{tag},\Cl-\msf{delay},y,P)$ recorded in $L_\msf{pend}$ during round $\Cl-\msf{delay}$.
    \item For every $(\rho,c,\msf{tag},y)\in T_{\Cl,P}$, it does:
    \begin{enumerate}
        \item It simulates broadcasting of $(c,\tau_\msf{rel},y)$ via $\mc{F}_\msf{UBC}$ by adding $(\msf{tag},(c,\tau_\msf{rel},y),P)$ to $L_\msf{ubc}$ and sending $(\sid,\textsc{Broadcast},\msf{tag},(c,\tau_\msf{rel},y),P)$ as leakage of $\mc{F}_\msf{UBC}$ to $\mc{A}$.
         \begin{itemize}
             \item Upon receiving $(\sid,\textsc{Allow},\msf{tag},\tilde{C})$ from $\mc{A}$, if, meanwhile, $P$ has been corrupted, it simulates the broadcast of $\tilde{C}$ via $\mc{F}_\msf{UBC}$ by (a) sending $(\sid,\textsc{Broadcast},$ $\tilde{C})$ to all simulated parties and $(\sid,\textsc{Broadcast},\tilde{C},P)$ to $\mc{A}$, and (b) deleting $(\msf{tag},\tilde{C},P)$ from $L_\msf{ubc}$. In addition, if $\tilde{C}$ can be parsed as a triple $(\tilde{c},\tilde{\tau},\tilde{y})$ such that $\tilde{c}$, $\tilde{y}$ are in correct format and $\tilde{\tau}=\tau_\msf{rel}$, it ``decrypts'' $\tilde{c}$ as follows:
             \begin{itemize}
                 \item If $\tilde{c}$ or $\tilde{y}$ have been broadcast previously, then it ignores $(\tilde{c},\tilde{\tau},\tilde{y})$ just as real-world honest parties do when they receive ``replayed'' messages.
                  \item  If there is a tuple $(\tilde{\rho},\tilde{c},\tilde{\msf{tag}},\tilde{\Cl},\tilde{y},P)\in L_\msf{pend}$, then it marks $\tilde{\rho}$ as the decryption of $\tilde{c}$ and aborts simulation if $\tilde{\rho}$ has been made as query by $\mc{A}$ before corrupting $P$. In case of no abort, recall that $\mc{S}_\msf{SBC}$ has already retrieved a corresponding tuple $(\tilde{\msf{tag}},\tilde{M},P,\tilde{\Cl},0)$ via a $\textsc{Corruption\_Request}$ message to $\mc{F}^{\Phi,\Delta,\alpha}_\msf{SBC}$. Hence, it programs $\mc{F}_\msf{RO}$ as $(\tilde{\rho},\tilde{M}\oplus\tilde{y})$. Then, it sends $(\sid,\textsc{Allow},\msf{tag},\tilde{M})$ to $\mc{F}^{\Phi,\Delta,\alpha}_\msf{SBC}$ and receives the response $(\sid,\textsc{Allow\_OK})$.
                  \item  If such a tuple does not exist, then by playing the role of $\mc{F}^{\msf{leak},\msf{delay}}_\msf{TLE}$, it requests the decryption of $\tilde{c}$ from $\mc{A}$ (i.e., it sends $(\sid,\textsc{Dec},\tilde{c},\tilde{\tau})$ to $\mc{A}$) and records its response as $(\tilde{\rho},\tilde{c})$. In addition, it extracts the adversarial message $\tilde{M}\leftarrow\tilde{y}\oplus\tilde{\eta}$, where $\tilde{\eta}$ is (i) the existing programming of $\mc{F}_\msf{RO}$ on $\tilde{\rho}$, if $\tilde{\rho}$ has been made as query by $\mc{A}$ earlier, or (ii) the fresh programming of $\mc{F}_\msf{RO}$ on $\tilde{\rho}$, otherwise. Then, it sends $(\sid,\textsc{Allow},\msf{tag},\tilde{M})$ to $\mc{F}^{\Phi,\Delta,\alpha}_\msf{SBC}$ and receives the response $(\sid,\textsc{Allow\_OK})$. 
        
             \end{itemize}       
         \end{itemize}
\item If $P$ remains honest, it operates as $\mc{F}_\msf{UBC}$ on message $(\sid_C,\textsc{Advance\_Clock})$. Namely, for every $(\rho,c,\msf{tag},y)\in T_{\Cl,P}$, it simulates broadcasting of $(c,\tau_\msf{rel},y)$ via $\mc{F}_\msf{UBC}$ by sending $(\sid,\textsc{Broadcast},(c,\tau_\msf{rel},y))$ to all parties and $(\sid,\textsc{Broadcast},$ $(c,\tau_\msf{rel},y),P)$ to $\mc{A}$.
\end{enumerate}
  \item If $t_\msf{awake}=\bot$ and $(\msf{tag}_0,\mathtt{Wake\_Up},P)\in L_\msf{ubc}$, it simulates the broadcast of $\mathtt{Wake\_Up}$ via $\mc{F}_\msf{UBC}$ by (a) sending $(\sid,\textsc{Broadcast},\mathtt{Wake\_Up})$ to all parties and $(\sid,\textsc{Broadcast},$ $\mathtt{Wake\_Up},P)$ to $\mc{A}$, and (b) deleting $(\msf{tag}_0,\mathtt{Wake\_Up},P)$ from $L_\msf{ubc}$. Then, it reads the time $\Cl$ from $\mc{G}_\msf{clock}$ and sets $t_\msf{awake}\leftarrow\Cl$, $t_\msf{end}\leftarrow t_\msf{awake}+\Phi$ and $\tau_\msf{rel}\leftarrow t_\msf{awake}+\Delta$.

    \end{enumerate}
\end{enumerate}
\vspace{2pt}
\extitem Upon receiving as $\mc{F}_\msf{UBC}$ a message $(\sid,\textsc{Broadcast},C,P)$ from $\mc{A}$ on behalf of a corrupted party $P$, it simulates the broadcast of $C$ via $\mc{F}_\msf{UBC}$ by sending $(\sid,\textsc{Broadcast},C)$ to all parties and $(\sid,\textsc{Broadcast},C,P)$ to $\mc{A}$. Then, it reads the time $\Cl$ from $\mc{G}_\msf{clock}$. If $t_\msf{awake}\leq\Cl<t_\msf{end}$ and $C$ can be parsed as a triple $(c,\tau,y)$ such that $c$, $y$ are in correct format and $\tau=\tau_\msf{rel}$, it does:
\begin{enumerate}
    \item If $c$ or $y$ have been broadcast again previously, then it ignores $(c,\tau,y)$ just as real-world honest parties do when they receive ``replayed'' messages.
    \item By playing the role of $\mc{F}^{\msf{leak},\msf{delay}}_\msf{TLE}$, it requests the decryption of $c$ from $\mc{A}$ (i.e., it sends $(\sid,\textsc{Dec},c,\tau)$ to $\mc{A}$) and records its response as $(\rho,c)$. In addition, it extracts the adversarial message $M\leftarrow y\oplus\eta$, where $\eta$ is (i) the existing programming of $\mc{F}_\msf{RO}$ on $\rho$, if $\rho$ has been made as query by $\mc{A}$ earlier, or (ii) the fresh programming of $\mc{F}_\msf{RO}$ on $\rho$, otherwise. Then, it sends $(\sid,\textsc{Allow},\msf{tag},M)$ to $\mc{F}^{\Phi,\Delta,\alpha}_\msf{SBC}$ and receives the response $(\sid,\textsc{Allow\_OK})$. 
\end{enumerate}

\vspace{5pt}
\indent For the simulation above, we define the following events: let $E$ be the event that some randomness $\rho$ or some $\msf{tag}$ is used more than once. Let $A$ be the event that simulation aborts. We observe that 
\begin{itemize}
    \item Since $\rho$ and $\msf{tag}$ are randomly chosen from a space of exponential size by $\mc{S}_\msf{SBC}$ and $\mc{F}^{\Phi,\Delta,\alpha}_\msf{SBC}$, respectively, it holds that $\Pr[E]=\msf{negl}(\lambda)$.
    \item By Eq.~\eqref{eq:leakage_time} and the synchronicity between the functionality's $t_\msf{start}$ and the simulator's $t_\msf{awake}$ that $\mc{G}_\msf{clock}$ supports, we have that $\mc{S}_\msf{SBC}$ will provide $\mc{A}$ with the leakage from $\mc{F}^{\msf{leak},\msf{delay}}_\msf{TLE}$ at least one round later than the time $t_\msf{end}+\Delta-\alpha$ that $\mc{S}_\msf{SBC}$ receives the list $BC$ of broadcast messages from $\mc{F}^{\Phi,\Delta,\alpha}_\msf{SBC}$. Thus, the probability that $\mc{A}$ guesses a decryption $\rho$ of come ciphertext $c$ before getting the leakage (so that $\mc{S}_\msf{SBC}$ cannot equivocate accordingly and abort) is negligible. Namely, $\Pr[A]=\msf{negl}(\lambda)$.
\end{itemize}

By the above observations, except with some $\msf{negl}(\lambda)$ probability, the events $E$ or $A$ do not happen, i.e., every randomness and tag are unique, and the simulation is completed. In such case, $\mc{S}_\msf{SBC}$ simulates a protocol execution by setting a common start time $t_\msf{awake}$, end time $t_\msf{end}$, and time-lock decryption time $\tau_\msf{rel}$ for all honest parties. The latter is consistent with a real-world execution because by the broadcast property of $\mc{F}_\msf{UBC}$ all real-world honest parties will agree on the start time, end time $t_\msf{end}$, and time-lock decryption time of the session, i.e. for every $P,P'$: $t^P_\msf{awake}=t^{P'}_\msf{awake}$, $t^P_\msf{end}=t^{P'}_\msf{end}$, and  $\tau^P_\msf{rel}=\tau^{P'}_\msf{rel}$.

Therefore, and given that by equivocating, $\mc{S}$ perfectly emulates an $\mc{F}_\msf{RO}$ instantiation, we have that, except with some $\msf{negl}(\lambda)$ probability, the simulation is perfect. This completes the proof.  
\end{proof}
From Lemma~\ref{lem:real_UBC}, Lemma~\ref{lem:real_FBC},  Theorem~\ref{thm:real_TLE}, and Theorem~\ref{thm:SBC}, we directly get the following corollary.
\begin{corollary}\label{cor:TLE}
There exists a protocol that UC-realises $\mc{F}^{\Phi,\Delta,\alpha}_\msf{SBC}$ in the $(\mc{F}_\msf{cert},\mc{W}_q(\mc{F}^*_\msf{RO}),\mc{F}_\msf{RO},$ $\mc{G}_\msf{clock})$-hybrid model, where $\Phi>3$, $\Delta>2$, and $\alpha=3$. \end{corollary}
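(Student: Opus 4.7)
The plan is to successively substitute the hybrid functionalities of Theorem~\ref{thm:SBC} with their protocol realizations, invoking the UC composition theorem at each step. The compatibility of the parameter constraints across the four results is what yields the concrete values $\Phi>3$, $\Delta>2$, and $\alpha=3$, so the bulk of the work is bookkeeping of the leakage and delay parameters rather than any new simulation argument.

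First, I would start from Theorem~\ref{thm:SBC}, which states that $\Pi_\msf{SBC}$ UC-realizes $\mc{F}^{\Phi,\Delta,\alpha}_\msf{SBC}$ in the $(\mc{F}_\msf{UBC},\mc{F}^{\msf{leak},\msf{delay}}_\msf{TLE},\mc{F}_\msf{RO},\mc{G}_\msf{clock})$-hybrid model whenever $\Phi>\msf{delay}$, $\Delta>\max_{\Cl^*}\{\msf{leak}(\Cl^*)-\Cl^*\}$, and $\alpha=\max_{\Cl^*}\{\msf{leak}(\Cl^*)-\Cl^*\}+1$. To instantiate the TLE hybrid, I apply Theorem~\ref{thm:real_TLE} with the concrete FBC parameters produced by Lemma~\ref{lem:real_FBC}, namely $\Delta_\msf{FBC}=\alpha_\msf{FBC}=2$. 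This yields a $\Pi_\msf{TLE}$ realization of $\mc{F}^{\msf{leak},\msf{delay}}_\msf{TLE}$ in the $(\mc{W}_q(\mc{F}^*_\msf{RO}),\mc{F}_\msf{RO},\mc{F}^{2,2}_\msf{FBC},\mc{G}_\msf{clock})$-hybrid model with $\msf{leak}(\Cl)=\Cl+2$ and $\msf{delay}=3$. In particular, $\max_{\Cl^*}\{\msf{leak}(\Cl^*)-\Cl^*\}=2$, so Theorem~\ref{thm:SBC} forces exactly $\Phi>3$, $\Delta>2$, and $\alpha=3$, matching the statement.

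Next, I chain in the FBC and UBC realizations. Lemma~\ref{lem:real_FBC} gives a $\Pi_\msf{FBC}$ that UC-realizes $\mc{F}^{2,2}_\msf{FBC}$ in the $(\mc{F}_\msf{UBC},\mc{W}_q(\mc{F}^*_\msf{RO}),\mc{F}_\msf{RO},\mc{G}_\msf{clock})$-hybrid model, while Lemma~\ref{lem:real_UBC} provides a $\Pi_\msf{UBC}$ that UC-realizes $\mc{F}_\msf{UBC}$ in the $(\mc{F}_\msf{cert},\mc{G}_\msf{clock})$-hybrid model. Applying the UC composition theorem from $\Pi_\msf{UBC}$ upward, every invocation of $\mc{F}_\msf{UBC}$ (both in $\Pi_\msf{FBC}$ and as the direct hybrid used by $\Pi_\msf{SBC}$) can be replaced by $\Pi_\msf{UBC}$; similarly, $\mc{F}^{2,2}_\msf{FBC}$ inside $\Pi_\msf{TLE}$ can be replaced by the composed $\Pi_\msf{FBC}$, and $\mc{F}^{\msf{leak},\msf{delay}}_\msf{TLE}$ inside $\Pi_\msf{SBC}$ can be replaced by the composed $\Pi_\msf{TLE}$. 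The resulting composed protocol operates in the $(\mc{F}_\msf{cert},\mc{W}_q(\mc{F}^*_\msf{RO}),\mc{F}_\msf{RO},\mc{G}_\msf{clock})$-hybrid model and UC-realizes $\mc{F}^{\Phi,\Delta,\alpha}_\msf{SBC}$ for the stated parameter regime.

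The one subtlety I would spell out explicitly is that all four results are stated against an adaptive adversary corrupting up to $t<n$ parties in the non-atomic communication model, and share the same global clock $\mc{G}_\msf{clock}$; consequently the UC composition theorem applies without modification and the security reductions compose with only a polynomial blowup in simulator cost. I do not expect this to be the main obstacle, but it is worth noting because the TLE, FBC and UBC functionalities each talk to $\mc{G}_\msf{clock}$ independently, so one must verify that clock advancement is consistently forwarded by the composed protocol---this is ensured by the explicit \textsc{Advance\_Clock} handling in each of $\Pi_\msf{UBC}$, $\Pi_\msf{FBC}$, $\Pi_\msf{TLE}$ and $\Pi_\msf{SBC}$. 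With this in hand, the corollary follows directly.
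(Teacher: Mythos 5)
Your proposal is correct and follows exactly the route the paper intends: the paper derives Corollary~\ref{cor:TLE} directly by composing Lemma~\ref{lem:real_UBC}, Lemma~\ref{lem:real_FBC}, Theorem~\ref{thm:real_TLE}, and Theorem~\ref{thm:SBC} via the UC composition theorem, and your parameter bookkeeping ($\Delta_\msf{FBC}=\alpha_\msf{FBC}=2$ giving $\msf{leak}(\Cl)=\Cl+2$ and $\msf{delay}=3$, hence $\Phi>3$, $\Delta>2$, $\alpha=3$) matches the intended instantiation. The paper offers no further detail beyond this chain, so your sketch is, if anything, more explicit than the original.
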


\section{Applications of simultaneous broadcast}\label{sec:applications}
\subsection{Distributed uniform random string generation}\label{sec:applications_urs}
\noindent
\textbf{The delayed uniform random string (DURS) functionality.}
The DURS functionality is along the lines of the common reference string (CRS) functionality in~\cite{UC}. 
The functionality draws a single random string $r$ uniformly at random, and delivers $r$ upon request. The delivery of $r$ is delayed, in the sense that the party who made an early request has to wait until $\Delta$ time has elapsed since the first request was made. Besides, the simulator has an advantage $\alpha$, i.e., it can obtain $r$ (on behalf of some corrupted party) when $\Delta-\alpha$ time has elapsed since the first request was made. The DURS functionality is presented in detail in Figure~\ref{fig:urs}.

\begin{figure}[H]
\begin{tcolorbox}[enhanced, colback=white, arc=10pt, drop shadow southeast]

\noindent\emph{\underline{The delayed uniform reference string functionality $\mathcal{F}^{\Delta,\alpha}_\msf{DURS}(\mbf{P})$.}}\\[5pt]
\begin{small}
The functionality initializes two variables $\msf{urs}$, $t_\msf{start}$ as $\bot$ and a flag $f^P_\msf{wait}$ to $0$, for every party $P\in\mbf{P}$. It also maintains the set of corrupted parties, $\mathbf{P}_\msf{corr}$, initialized as empty.\\[2pt]
\extitem Upon receiving $(\sid, \textsc{URS})$ from $P\in\mbf{P}\setminus\mbf{P}_\msf{corr}$ or $\mc{S}$, it does:
\begin{enumerate}
\item If $\msf{urs}=\bot$, it samples $r\overset{\$}{\leftarrow}\{0,1\}^\lambda$ and sets $\msf{urs}\leftarrow r$.
\item It reads the time $\Cl$ from $\mc{G}_\msf{clock}$.
\item If $f^P_\msf{wait}=0$, it sets $f^P_\msf{wait}\leftarrow1$.
\item If $t_\msf{start}=\bot$, it sets $t_\msf{start}\leftarrow\Cl$ and sends $(\sid,\textsc{Start},P)$ to $\mc{S}$.
\item If the request was made by $P\in\mbf{P}\setminus\mbf{P}_\msf{corr}$ and $P$ has sent a $(\sid_C,\textsc{Advance\_Clock})$ message in round $t_\msf{start}+\Delta$ (hence, $\Cl\geq t_\msf{start}+\Delta$), it sends $(\sid, \textsc{URS},\msf{urs})$ to $P$.
\item If the request was made by $\mc{S}$ and $\Cl\geq t_\msf{start}+\Delta-\alpha$, it sends $(\sid, \textsc{URS},\msf{urs})$ to $\mc{S}$.
\end{enumerate}
\vspace{2pt}
\extitem Upon receiving $(\sid_C,\textsc{Advance\_Clock})$ from $P\in\mbf{P}\setminus\mbf{P}_\msf{corr}$, it does:
\begin{enumerate}
    \item It reads the time $\Cl$ from $\mc{G}_\msf{clock}$. If $\Cl=t_\msf{start}+\Delta$ and $f^P_\msf{wait}=1$, and this is the first time that $P$ has sent a $(\sid_C,\textsc{Advance\_Clock})$ message during round $\Cl$, it sends $(\sid, \textsc{URS},\msf{urs})$ to $P$.

\item It returns $(\sid_C, \textsc{Advance\_Clock})$ to $P$ with destination identity $\mc{G}_\msf{clock}$.
\end{enumerate}
\end{small}
\end{tcolorbox}

\captionof{figure}{The DURS functionality \(\mathcal{F}_\msf{DURS}\) interacting with the simulator \(\mathcal{S}\), parameterized by delay $\Delta$ and simulator advantage $\alpha$.}
\label{fig:urs}
\end{figure}

\begin{figure}[H]
\begin{tcolorbox}[enhanced, colback=white, arc=10pt, drop shadow southeast]
\noindent\emph{\underline{The DURS protocol $\Pi_\msf{DURS}(\mc{F}^{\Phi,\Delta-\Phi,\alpha}_\msf{SBC},\mc{F}_\msf{RBC},\mbf{P})$.}}\\[5pt]
\begin{small}
Each party $P$ maintains a variable $\msf{urs}^P$ initialized to $\bot$ and two flags $f^P_\msf{wait}$, $f^P_\msf{awake}$, initialized to $0$.\\[2pt]
\extitem Upon receiving $(\sid, \textsc{URS})$ from $\mc{Z}$, the party $P$ does:
\begin{enumerate}[leftmargin=*]
\item If $\msf{urs}^P\neq\bot$, it returns $(\sid,\textsc{URS},\msf{urs}^P)$ to $\mc{Z}$. Otherwise,
\begin{enumerate}
    \item If $f^P_\msf{wait}=0$, she sets $f^P_\msf{wait}\leftarrow 1$.
    \item If $f^P_\msf{awake}=0$, she sends $(\sid,\textsc{Broadcast},\texttt{Wake\_Up})$ to $\mc{F}^P_\msf{RBC}$.
\end{enumerate} 
\end{enumerate}
\vspace{2pt}
\extitem Upon receiving $(\sid, \textsc{Broadcast},\texttt{Wake\_Up},P^*)$ from $\mc{F}^{P^*}_\msf{RBC}$, if $P^*\in\mbf{P}$ and $f^P_\msf{awake}=0$, the party $P$ does:
\begin{enumerate}[leftmargin=*]
    \item\label{it:wake_up1} She sets $f^P_\msf{awake}\leftarrow 1$.
    \item\label{it:wake_up2} She chooses a randomness $\rho\overset{\$}{\leftarrow}\{0,1\}^\lambda$.
    \item\label{it:wake_up3} She sends $(\sid,\textsc{Broadcast},\rho)$ to $\mc{F}^{\Phi,\Delta-\Phi,\alpha}_\msf{SBC}$.
\end{enumerate}
\vspace{2pt}
\extitem Upon receiving $(\sid_C, \textsc{Advance\_Clock})$ from $\mc{Z}$, the party $P$ does:
\begin{enumerate}
    \item If $f^P_\msf{awake}=0$, she sends $(\sid_C, \textsc{Advance\_Clock})$ to $\mc{F}^P_\msf{RBC}$.
    \begin{enumerate}
        \item If $\mc{F}^P_\msf{RBC}$ responds with $(\sid, \textsc{Broadcast}, \texttt{Wake\_Up},P)$, she executes steps~\ref{it:wake_up1}-\ref{it:wake_up3} from the $\textsc{Broadcast}$ interface above.
        \item She sends $(\sid_C, \textsc{Advance\_Clock})$ to $\mc{G}_\msf{clock}$.
    \end{enumerate}
    Otherwise, she sends $(\sid_C, \textsc{Advance\_Clock})$ to $\mc{F}^{\Phi,\Delta-\Phi,\alpha}_\msf{SBC}$. Upon receiving the token back from $\mc{F}^{\Phi,\Delta-\Phi,\alpha}_\msf{SBC}$ (because the functionality returns $(\sid_C, \textsc{Advance\_Clock})$), she sends $(\sid_C, \textsc{Advance\_Clock})$ to $\mc{G}_\msf{clock}$.
\end{enumerate}
\vspace{2pt}
\extitem Upon receiving $(\sid,\textsc{Broadcast},\langle\rho_1,\ldots\rho_k\rangle)$  from $\mc{F}^{\Phi,\Delta-\Phi,\alpha}_\msf{SBC}$, if $\msf{urs}^P=\bot$, the party $P$ does:
\begin{enumerate}[leftmargin=*]
    \item She sets $\msf{urs}^P\leftarrow\bigoplus_{i\in[k]:\rho_i\in\{0,1\}^\lambda}\rho_i$.
    \item If $f^P_\msf{wait}=1$, she sends $(\sid,\textsc{URS},\msf{urs}^P)$ to $\mc{Z}$.
\end{enumerate}
\end{small}

\end{tcolorbox}

\captionof{figure}{The DURS protocol $\Pi_\msf{DURS}$ with parties in $\mbf{P}$.}
\label{fig:real_urs}
\end{figure}
\noindent
\textbf{The DURS protocol.} As a first application, we propose a protocol that employs SBC to realize the DURS functionality described above. The idea is simple: each party contributes its randomness by broadcasting it via SBC to other parties. After SBC is finalized (with delay $\Delta$), all parties agree on the XOR of the received random strings as the generated URS. In addition, the parties agree on the beginning of the URS generation period via a special `\texttt{Wake\_Up}' message broadcast in RBC manner by the first activated party.

Below, we prove that $\Pi_\msf{DURS}$ achieves UC-secure DURS generation (cf. proof in Appendix~\ref{app:DURS_proof}).

\begin{theorem}\label{thm:DURS}
 Let $\Delta,\Phi,\alpha$ be non-negative integers such that $\Delta>\Phi>0$ and $\Delta-\Phi\geq\alpha$. The protocol $\Pi_\msf{DURS}$ in Figure~\ref{fig:real_urs} UC-realizes $\mc{F}^{\Delta,\alpha}_\msf{DURS}$ in the $(\mc{F}^{\Phi,\Delta-\Phi,\alpha}_\msf{SBC},\mc{F}_\msf{RBC},\mc{G}_\msf{clock})$-hybrid model against an adaptive adversary corrupting $t<n$ parties.
\end{theorem}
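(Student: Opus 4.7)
The plan is to construct a simulator $\mc{S}_\msf{DURS}$ that, given a real-world adversary $\mc{A}$, emulates an execution of $\Pi_\msf{DURS}$ in the $(\mc{F}^{\Phi,\Delta-\Phi,\alpha}_\msf{SBC}, \mc{F}_\msf{RBC}, \mc{G}_\msf{clock})$-hybrid world. $\mc{S}_\msf{DURS}$ internally maintains simulated instances of $\mc{F}_\msf{RBC}$ (one per party) and of $\mc{F}^{\Phi,\Delta-\Phi,\alpha}_\msf{SBC}$, acts as a proxy between $\mc{F}^{\Delta,\alpha}_\msf{DURS}$ and $\mc{A}$, and relies crucially on the fact that the hybrid SBC functionality hides the actual message carried by an honest broadcast (leaking only $0^{|M|}$ together with a tag and the sender identity) until the reveal round $t_\msf{start}+\Phi+(\Delta-\Phi)-\alpha = t_\msf{start}+\Delta-\alpha$, which by the parameter constraints $\Delta>\Phi$ and $\Delta-\Phi\geq\alpha$ coincides exactly with the earliest round at which $\mc{S}_\msf{DURS}$ may query $\mc{F}^{\Delta,\alpha}_\msf{DURS}$ for the actual string $U$. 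This alignment will let $\mc{S}_\msf{DURS}$ postpone committing to honest parties' random contributions until it learns $U$.

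First I would describe the simulator's mechanics. Upon receiving the first $(\sid,\textsc{Start},P)$ from $\mc{F}^{\Delta,\alpha}_\msf{DURS}$, $\mc{S}_\msf{DURS}$ simulates $P$'s $\texttt{Wake\_Up}$ broadcast through $\mc{F}^P_\msf{RBC}$; the termination guarantees of $\mc{F}_\msf{RBC}$ together with $\mc{G}_\msf{clock}$ ensure that all honest parties receive $\texttt{Wake\_Up}$ in the same round, so that $t^\msf{SBC}_\msf{start}$ equals the ideal $t^\msf{DURS}_\msf{start}$. For each still-honest party the simulator initiates a virtual SBC broadcast (leaking only $(\sid,\textsc{Sender},\msf{tag},0^{|M|},P)$ to $\mc{A}$), while adversarial $\textsc{Broadcast}$, $\textsc{Corruption\_Request}$ and $\textsc{Allow}$ queries on the simulated SBC are handled exactly as specified in Figure~\ref{fig:SBC}, using a freshly sampled uniform $\rho$ whenever a previously honest party is corrupted (before its contribution was fixed). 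At round $t_\msf{start}+\Delta-\alpha$, $\mc{S}_\msf{DURS}$ queries $\mc{F}^{\Delta,\alpha}_\msf{DURS}$ to obtain $U$, computes the XOR $X_c$ of all already-committed $\rho$'s (from corrupted or adversarially overridden parties), picks an arbitrary still-honest $P^*$ (which exists by $t<n$), samples random contributions for the other uncommitted honest parties, and sets $P^*$'s contribution to $U \oplus X_c \oplus (\text{XOR of other uncommitted honest contributions})$. The full list is then delivered to $\mc{A}$ as the SBC reveal.

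The indistinguishability argument rests on two uniformity observations: (i) since $U$ and the freshly sampled contributions of still-honest parties are independent and uniform over $\{0,1\}^\lambda$, the value assigned to $P^*$ is itself uniform and independent of $\mc{A}$'s view prior to reveal, matching the real-world distribution in which $P^*$ would have chosen her $\rho$ uniformly at broadcast time; (ii) any $\rho$ disclosed upon an earlier corruption is sampled identically in real and simulated executions. The surviving bad event is $\mc{A}$ correctly guessing $P^*$'s committed $\rho$ before reveal, which has probability $\msf{negl}(\lambda)$. I expect the main obstacle to lie in two places: first, justifying that $t^\msf{SBC}_\msf{start}=t^\msf{DURS}_\msf{start}$ under all adversarial schedulings of $\mc{G}_\msf{clock}$, which will follow from the fact that $\mc{F}_\msf{RBC}$ forces $\texttt{Wake\_Up}$ to be delivered to every honest party in the same round that the first sender issues $\textsc{Advance\_Clock}$; and second, handling the corruption of $P^*$ between the reveal round and round $t_\msf{start}+\Delta$, which I would address by observing that $P^*$'s contribution has already been fixed at reveal and thus any subsequent corruption merely discloses a value which is indistinguishable from a uniform sample, so that the final XOR each honest party outputs is precisely $U$, matching $\mc{F}^{\Delta,\alpha}_\msf{DURS}$'s output.
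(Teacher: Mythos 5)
Your proposal matches the paper's own proof in all essential respects: the simulator defers committing to honest parties' contributions (exploiting that $\mc{F}^{\Phi,\Delta-\Phi,\alpha}_\msf{SBC}$ leaks only $0^{|M|}$ until the reveal round), queries $\mc{F}^{\Delta,\alpha}_\msf{DURS}$ for the string $r$ exactly at round $t_\msf{start}+\Delta-\alpha$, programs one designated honest party's contribution as the XOR of $r$ with all other (honest and adversarial) contributions, and argues indistinguishability from the uniformity of $r$ and the remaining honest samples. The additional points you flag (alignment of the SBC start time via $\mc{F}_\msf{RBC}$'s delivery guarantees, and post-reveal corruption of the designated party) are handled implicitly by the same uniformity argument in the paper, so no further work is needed.
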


\subsection{Self-tallying e-voting}\label{STE}
\begin{figure}[H]
  \centering

  \begin{tcolorbox}[enhanced, colback=white, arc=10pt, drop shadow southeast]
    \noindent\emph{\underline{The voting system functionality $\mathcal{F}^{\Phi, \Delta, \alpha}_{VS}(\mbf{V})$}}\\[5pt]
\begin{small}
    The functionality initializes the list $L_\msf{cast}$ of cast votes and four variables $t_\msf{cast}^\msf{start}$,  $t_\msf{cast}^\msf{end}$, $t_\msf{tally}$, and $res$ as $\bot$.
    It also maintains the set of corrupted voters, $\mbf{V}_\msf{corr}$, initialized as empty.
    
    \extitem Upon receiving $(\sid, \textsc{Init})$ from the last authority it does:
    \begin{enumerate}
    \item It reads the time $t_\msf{cast}^{\msf{start}}$ from $\mc{G}_\msf{clock}$.
    \item It sets the time $t_\msf{cast}^{\msf{end}} \leftarrow t_\msf{cast}^{\msf{start}} + \Phi$. 
    \item It sets the time $t_{\msf{tally}} \leftarrow t_\msf{cast}^{\msf{end}} + \Delta$. 
    \end{enumerate}
    \vspace{2pt}

    \extitem Upon receiving $(\sid, \textsc{Vote}, v)$ from voter $V \not\in \mbf{V}_\msf{corr}$ or $(\sid, \textsc{Vote}, v)$ from $\mc{S}$ on behalf of $V \in \mbf{V}_\msf{corr}$ it does:
    \begin{enumerate}
    \item It reads the time $\Cl$ from $\mc{G}_\msf{clock}$.
    \item If $t_\msf{cast}^{\msf{start}} \le \Cl < t_\msf{cast}^{\msf{end}}$, then
      \begin{enumerate}
      \item It picks a unique random $\msf{tag}$ from $\{0, 1\}^\lambda$.
      \item If $V \in \mbf{V} \setminus \mbf{V}_\msf{corr}$, it checks if $v$ is a valid vote and adds $(\msf{tag},v,V,\Cl,0)$ to $L_\msf{cast}$ and sends $(\sid, \textsc{Vote}, \msf{tag}, V)$ to $\mc{S}$. Otherwise, it it checks if $v$ is a valid vote and adds $(\msf{tag},v,V,\Cl,1)$ to $L_\msf{cast}$ and sends $(\sid, \textsc{Vote}, \msf{tag}, v, V)$ to $\mc{S}$.
      \end{enumerate}
    \end{enumerate}
    \vspace{2pt}

    \extitem Upon receiving $(\sid, \textsc{Corruption\_Reqest})$ from $\mc{S}$, it sends $(\sid, \textsc{Corruption\_Reqest}, \langle(\msf{tag}, v, V, \Cl^*, 0) \in L_\msf{cast} : V \in \mbf{V}_\msf{corr}\rangle)$ to $\mc{S}$.
    \vspace{2pt}

    \extitem Upon receiving $(\sid, \textsc{Allow}, \msf{tag}, \tilde v , \tilde V)$ from $\mc{S}$, it does:
    \begin{enumerate}
    \item It reads the time $\Cl$ from $\mc{G}_\msf{clock}$.
    \item If $t_\msf{cast}^\msf{start} \le \Cl < t_\msf{cast}^\msf{end}$ and there is a tuple $(\msf{tag}, v, \tilde V, \Cl^*, 0) \in L_\msf{cast}$ and $V \in \mbf{V}_\msf{corr}$, it it checks if $v$ is a valid vote, updates the tuple as $(\msf{tag}, \tilde v , \tilde V, \Cl^*, 1)$, and sends $(\sid, \textsc{Allow\_OK})$ to $\mc{S}$. Otherwise, it ignores the message.
    \end{enumerate}
    \vspace{2pt}

    \extitem Upon receiving $(\sid_C, \textsc{Advance\_Clock})$ from $V \not\in \mbf{V}_\msf{corr}$, it does:
    \begin{enumerate}
    \item It reads the time $\Cl$ from $\mc{G}_\msf{clock}$
    \item If $\Cl=t_\msf{tally}-\alpha$ and $res=\bot$, it does:
      \begin{enumerate}
      \item It updates every tuple $(\cdot, \cdot, V^*, \cdot, 0) \in L_\msf{cast}$ such that $V^* \in \mbf{V}\setminus \mbf{V}_\msf{corr}$ as $(\cdot, \cdot, V^*, \cdot, 1)$ (to guarantee that votes cast by honest voters are tallied).
      \item And it computes the tallying function on all the stored votes $v$ such $(\msf{tag}, v, \cdot, \cdot, 1)\in L_\msf{cast}$. If a voter $V$ has multiple such stored votes in excess of his quota, choose the most recent quota-sized subset of votes. Set $res \leftarrow f(v_1,\dots,v_n)$.
      \item It sends $(\sid, \textsc{Result}, res)$ to $\mc{S}$.
      \end{enumerate}
    \item If $\Cl=t_\msf{tally}$, if this is the first time it has received a $(\sid_C,\textsc{Advance\_Clock})$ message from $V$ during round $\Cl$, then it does:
      \begin{enumerate}
      \item It sends $(\sid, \textsc{Result}, res)$ to $V$.
      \item It returns $(\sid_C, \textsc{Advance\_Clock})$ to $P$ with destination identity $\mc{G}_\msf{clock}$.
      \end{enumerate}
    \end{enumerate}
    \end{small}
  \end{tcolorbox}
  \captionof{figure}{Preneel and Szepieniec's voting system functionality~\cite{SzepieniecP15} adapted to the global clock ($\mc{G}_{\msf{clock}}$) model and adaptive corruption.}
  \label{fig:F_VS}
\end{figure}

The concept of self-tallying elections was introduced by Kiayias and Yung in~\cite{KiayiasY02}. In this paradigm, the post-ballot-casting (tally) phase can be performed by any interested party, removing the need for tallier designation. It was further improved by Groth in~\cite{Groth04}, and later studied in the UC framework by Szepieniec and Preneel in~\cite{SzepieniecP15}. To ensure fairness, \emph{i.e.} to prevent intermediary results from being leaked before the end of the casting phase, all these previous works introduce a \emph{trusted control voter} that casts a dummy ballot last, contradicting self-tallying in some sense. In this section, we deploy our simultaneous broadcast channel to solve the fairness challenge in self-tallying elections, lifting the need for this trusted control voter.\medskip

\noindent\textbf{The voting system (VS) functionality.} The ideal voting system functionality, $\mc{F}^{\Phi, \Delta, \alpha}_{VS}$ is presented in Figure~\ref{fig:F_VS}. It is simply the adaptation of Preneel and Szepieniec's functionality to the global clock model and adaptive corruption~\cite{SzepieniecP15}. The VS functionality only differs from the SBC functionality in that the individually broadcast ballots are not forwarded to the voters (and simulator), but instead the result (tally) of the election is sent to them.

Voters submit their votes to the functionality during the casting period that lasts $\Phi$ amount of time from the opening of the election. The functionality does not allow the  adversary to read the honestly cast votes or to falsify them. The functionality releases the result of the election when it moves to the tally phase after $\Phi+\Delta$ time has elapsed from the opening of the election. The simulator has an advantage $\alpha$, i.e., it can obtain the election result (on behalf of some corrupted party) when $\Phi+\Delta-\alpha$ time has elapsed since the opening of the election, (yet after the end of the casting period).  The VS functionality is presented in detail in Figure~\ref{fig:F_VS}.

\noindent\textbf{The self-tallying VS (STVS) protocol.} We deploy an SBC instead of the BB used in the original protocol of Preneel and Szepieniec~\cite{SzepieniecP15} to ensure fairness, removing the need for the control dummy party. The protocol assumes a public-key generation mechanism formalised by an ideal functionality $\mc{F}_\msf{SKG}$ for the authorities public key and corresponding private key shares, and a voters' key generation functionality $\mc{F}_\msf{PKG}$ for eligibility.

\begin{figure}[H]
  \centering

  \begin{tcolorbox}[enhanced, colback=white, arc=10pt, drop shadow southeast]
    \noindent\emph{\underline{The self-tallying protocol $\Pi_{\mathsf{STVS}}(\mc{F}^{\Phi, \Delta, \alpha}_\msf{SBC}, \mc{F}_\msf{RBC}, \mathcal{F}_{\mathsf{PKG}}, \mathcal{F}_{\mathsf{SKG}}, \mbf{V})$}}\\
\begin{small}
    \extitem Initiate by invoking $\mathcal{F}_{\mathsf{PKG}}$ followed by $\mathcal{F}_{\mathsf{SKG}}$.
    \vspace{2pt}

    \extitem All authorities $A_j$ choose random values $x_{i, j} \leftarrow \mathbb{Z}_{n^2}$ for all voters $V_i$ such that $\sum_i x_{i, j} = 0$. They send these values to $\mathcal{F}_{\mathsf{RBC}}$ but encrypted with that voter's public key. Also,they publish $w^{x_{i,j}}$ for each $x_{i,j}$.
    \vspace{2pt}

    \extitem The scrutineers check that $\sum_i x_{i, j} = 0$ for all $j$ by calculating $\prod_i w^{x_{i, j}} \stackrel{?}{=} 1$. Also, the scrutineers calculate every voter's verification key $w_i = w^{\sum_j x_{i,j}} = \prod_j w^{x_{i,j}}$.
    \vspace{2pt}

    \extitem All voters $V_i$ read their messages from the authorities and determine their own secret exponent $x_i = \sum_j x_{i, j}$.
    \vspace{2pt}
    
    \extitem In order to vote, the voters select a public random seed $r$, for example by querying the random oracle. Next, each voter encrypts his vote using $r^{x_i}$ for randomizer. This encryption is posted to \st{$\mc{F}_\msf{BB}$} $\mc{F}^{\Phi, \Delta, \alpha}_\msf{SBC}$ along with a proof that the ballot encrypts an allowable vote and that the correct secret exponent was used, and with a signature on the previous two objects. \st{When all real voters have cast their vote, the control voter casts his dummy vote and proves that it does not alter the tally and that it uses the correct secret exponent.}
    \vspace{2pt}

    \extitem \st{All participants send a $(\sid , \textsc{Read})$ message to $\mc{F}_\msf{BB}$} Upon receiving $(\sid, \textsc{Broadcast}, \langle b_1, \ldots, b_k\rangle)$ from $\mc{F}^{\Phi, \Delta, \alpha}_\msf{SBC}$, voters combine all the votes and calculate the tally $res$.
\end{small}
  \end{tcolorbox}
  \caption{The self-tallying protocol $\Pi_\msf{STVS}$ with voters $\mbf{V}$. Variant of Preneel and Szepieniec's self-tallying protocol~\cite{SzepieniecP15}, where instead of posting ballots to the bulletin board $\mc{F}_\msf{BB}$, they are posted via SBC $\mc{F}_\msf{SBC}$, removing the need for the trusted control voter.}
  \label{fig:PI_STVS}
\end{figure}

The proof of UC-security of $\Pi_\msf{STVS}$ is similar to the original one~\cite{SzepieniecP15}.

\begin{theorem}\label{thm:STVS}
  Let $\Delta,\Phi,\alpha$ be non-negative integers such that $\Delta>\Phi>0$ and $\Delta\geq\alpha$. The protocol $\Pi_\msf{STVS}$ in Figure~\ref{fig:PI_STVS} UC-realizes $\mc{F}^{\Phi, \Delta,\alpha}_\msf{VS}$ in the $(\mc{F}^{\Phi,\Delta,\alpha}_\msf{SBC},\mc{F}_\msf{RBC},\mathcal{F}_{\mathsf{PKG}}, \mathcal{F}_{\mathsf{SKG}},\mc{G}_\msf{clock})$-hybrid model against an adaptive adversary corrupting $t<n$ parties.
\end{theorem}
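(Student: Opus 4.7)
The plan is to construct a simulator $\mc{S}_\msf{STVS}$ that, given any real-world adversary $\mc{A}$, emulates an execution of $\Pi_\msf{STVS}$ in the $(\mc{F}^{\Phi,\Delta,\alpha}_\msf{SBC}, \mc{F}_\msf{RBC}, \mc{F}_\msf{PKG}, \mc{F}_\msf{SKG}, \mc{G}_\msf{clock})$-hybrid world while interacting with $\mc{F}^{\Phi,\Delta,\alpha}_\msf{VS}$ in the ideal world. The overall argument follows the structure of the Preneel-Szepieniec proof, but with three adaptations: (i) the bulletin board is replaced by $\mc{F}^{\Phi,\Delta,\alpha}_\msf{SBC}$, which is what allows us to dispense with the trusted control voter of the original work; (ii) synchronicity is formalized via $\mc{G}_\msf{clock}$, so the opening, casting, and tally phases are aligned to the global round counter with time span $\Phi$ and delay $\Delta$; and (iii) the adversary may adaptively corrupt voters and authorities, which forces $\mc{S}_\msf{STVS}$ to equivocate simulated internal states on demand.

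For the setup phase, $\mc{S}_\msf{STVS}$ internally simulates $\mc{F}_\msf{PKG}$ and $\mc{F}_\msf{SKG}$, learning all corrupted parties' secret keys, and plays the role of honest authorities when generating their random shares $x_{i,j}$ with $\sum_i x_{i,j}=0$. Shares originating from corrupted authorities are extracted from the $\mc{F}_\msf{RBC}$ broadcasts issued by $\mc{A}$ and checked exactly as the scrutineers would, and the simulator thus computes every voter's secret exponent $x_i=\sum_j x_{i,j}$, including the ones it controls in the simulation. For the casting phase, when an honest voter $V_i$ submits a vote to the ideal $\mc{F}^{\Phi,\Delta,\alpha}_\msf{VS}$, the functionality leaks only $(\textsc{Vote},\msf{tag},V_i)$ to $\mc{S}_\msf{STVS}$; accordingly, the simulator posts to $\mc{F}^{\Phi,\Delta,\alpha}_\msf{SBC}$ a ballot encrypting a dummy vote under $r^{x_i}$, together with a simulated zero-knowledge proof produced via the standard ZK simulator. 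For corrupted voters, $\mc{S}_\msf{STVS}$ intercepts the $(\textsc{Allow},\msf{tag},\tilde C,\tilde V)$ messages sent to its simulated $\mc{F}^{\Phi,\Delta,\alpha}_\msf{SBC}$, extracts the witness $\tilde v$ via the knowledge extractor of the ballot proof, verifies eligibility, and forwards $(\textsc{Allow},\msf{tag},\tilde v,\tilde V)$ to $\mc{F}^{\Phi,\Delta,\alpha}_\msf{VS}$.

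At time $t_\msf{tally}-\alpha$, $\mc{F}^{\Phi,\Delta,\alpha}_\msf{VS}$ reveals the tally $res$ to $\mc{S}_\msf{STVS}$, and simultaneously the simulated $\mc{F}^{\Phi,\Delta,\alpha}_\msf{SBC}$ is about to deliver all cast ballots. Exploiting the fact that the simulator knows $res$, the extracted corrupted votes, and the honest voters' secret exponents, $\mc{S}_\msf{STVS}$ reprograms its dummy honest ballots so that they encrypt a multiset of valid individual votes consistent with $res$, relying on the homomorphism of the encryption and on the ZK-simulated proofs to make the substitution indistinguishable. On adaptive corruption of some $V_i$ before the casting deadline, the simulator issues a $\textsc{Corruption\_Request}$ to $\mc{F}^{\Phi,\Delta,\alpha}_\msf{VS}$, learns $V_i$'s actual vote $v_i$, and equivocates $V_i$'s internal state, including her randomness and her pending SBC payload, by reprogramming the random oracle call that produces the ballot encryption randomness; crucially, because $\mc{F}^{\Phi,\Delta,\alpha}_\msf{SBC}$ leaks only $0^{|M|}$ and not the ciphertext during the casting period, the adversary has no prior view that could be contradicted by this re-programming.

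The main obstacle will be this last step: arguing that the joint distribution of the simulated ballots, simulated proofs, and equivocated internal states remains indistinguishable from a real execution under an adaptive corruption schedule. This boils down to combining (a) the simultaneity guarantee of $\mc{F}^{\Phi,\Delta,\alpha}_\msf{SBC}$, which prevents $\mc{A}$ from learning honest ballots within the casting period; (b) the semantic security of the underlying encryption scheme, applied after $\mc{F}_\msf{SBC}$ releases the ciphertexts and $res$ is already fixed; and (c) the simulation-extractability of the ballot well-formedness proofs, to both extract corrupted votes and equivocate honest ones. A standard hybrid argument over the honest voters, swapping one dummy ballot at a time for the final consistent ballot, completes the proof; the delay constraint $\Delta\geq\alpha$ ensures that the simulator receives $res$ no later than $\mc{A}$ can retrieve any ballot, so equivocation is always performed in time.
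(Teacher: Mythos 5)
The paper does not actually spell out a proof of this theorem: it only remarks that ``the proof of UC-security of $\Pi_\msf{STVS}$ is similar to the original one'' in the Szepieniec--Preneel paper, with the bulletin board replaced by $\mc{F}^{\Phi,\Delta,\alpha}_\msf{SBC}$. Your proposal is a legitimate fleshing-out of exactly that argument, and its core mechanism --- exploit the fact that $\mc{F}_\msf{SBC}$ leaks only $0^{|M|}$ during the casting period, extract corrupted votes from the \textsc{Allow} interface, and commit to the honest ballots only once $res$ is leaked at $t_\msf{tally}-\alpha$ --- is the same ``delayed commitment'' strategy the paper uses explicitly in its proof of Theorem~\ref{thm:DURS} (where $\mc{S}_\msf{DURS}$ fixes the honest parties' SBC payloads only when the broadcast vector must be handed to $\mc{A}$). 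Two remarks. First, you over-engineer the honest-voter simulation: since the protocol is analysed in the hybrid model where $\mc{F}_\msf{SBC}$ is ideal, the adversary sees no ciphertext whatsoever before delivery, so there is no need to post dummy ballots and later ``reprogram'' them, nor to reprogram any random oracle on adaptive corruption --- the simulator can simply construct the honest ballots lazily at delivery time, consistent with $res$ and with the extracted corrupted votes, using the authorities' exponents $x_i$ it already controls and ZK-simulated well-formedness proofs; this is both simpler and avoids your slightly inaccurate description of the encryption randomness as coming from a per-ballot RO call (in the scheme the randomizer is $r^{x_i}$ for a public seed $r$, and $x_i$ is pinned down by the published $w_i$, so it could not be equivocated anyway). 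Second, you leave implicit how the start of the casting period is synchronized between $\mc{F}_\msf{VS}$ (whose $t_\msf{cast}^\msf{start}$ is set by the last authority's \textsc{Init}) and the SBC subroutine (whose period starts with the first \textsc{Broadcast}); the simulator must align these via the simulated $\mathtt{Wake\_Up}$/\textsc{Init} handling, as in the SBC and DURS simulators. Neither point undermines the proposal; with the lazy-commitment simplification it is essentially the proof the paper intends.
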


\paragraph{Acknowledgements.}  Zacharias was supported by Input Output (\url{https://iohk.io}) through their funding of the Edinburgh Blockchain Technology Lab.
\ifacm
\bibliographystyle{ACM-Reference-Format}
\else
\bibliographystyle{alpha}
\fi
\bibliography{references}

\appendix
\section{Proof of Lemma~\ref{lem:real_UBC}}\label{app:UBC}
\begin{proof}
By Fact~\ref{fact:RBC}, $\mc{F}_\msf{RBC}$ can be realized in the $(\mc{F}_\msf{cert},\mc{G}_\msf{clock})$-hybrid model. By the Universal Composition Theorem, it suffices to show that $\Pi_\msf{UBC}$ described in Figure~\ref{fig:real_UBC} UC-realizes $\mc{F}_\msf{UBC}$ in the $(\mc{F}_\msf{RBC},\mc{G}_\msf{clock})$-hybrid model.

We devise a simulator $\mc{S}_\msf{UBC}$ that given a real-world adversary $\mc{A}$, simulates an execution of $\Pi_\msf{UBC}$ in the $(\mc{F}_\msf{RBC},\mc{G}_\msf{clock})$-hybrid world in the presence of $\mc{A}$. $\mc{S}_\msf{UBC}$ acts as a proxy forwarding messages from/to the environment $\mc{Z}$ to/from $\mc{A}$. It also simulates the behavior of honest parties, and controls the instances of $\mc{F}_\msf{RBC}$.

In particular, $\mc{S}_\msf{UBC}$, for every honest party $P$, maintains the counters $\msf{total}^P,\msf{count}^P$ as described in the protocol $\Pi_\msf{UBC}$. It operates as follows.\\[2pt]
\extitem Upon receiving $(\sid,\textsc{Broadcast},\msf{tag},M,P)$ from $\mc{F}_\msf{UBC}$, it initiates the broadcasting of $M$ on behalf of (honest party) $P$ via $\mc{F}^{P,\msf{total}^P}_\msf{RBC}$. In particular, it provides $\mc{A}$ with $(\sid,\textsc{Broadcast},M,P)$ as leakage of $\mc{F}^{P,\msf{total}^P}_\msf{RBC}$ and records the tuple $(\msf{tag},M,P,\mc{F}^{P,\msf{total}^P}_\msf{RBC})$.\\[2pt]
\extitem Upon receiving as $\mc{F}^{P,i}_\msf{RBC}$ a message $(\sid,\textsc{Broadcast},M,P)$ from $\mc{A}$, it does:
\begin{enumerate}
 \item If $\mc{F}^{P,i}_\msf{RBC}$ is terminated or $P$ is honest or there is a recorded tuple $(\msf{tag},M_i,P,\mc{F}^{P,i}_\msf{RBC})$, it ignores the message.
 \item It terminates $\mc{F}^{P,i}_\msf{RBC}$ and sends $(\sid,\textsc{Broadcast},M,P)$ to all parties and $\mc{A}$. 
 \item It sends $(\sid,\textsc{Broadcast},M,P)$ to $\mc{F}_\msf{UBC}$.
 \end{enumerate}
\extitem Upon receiving as $\mc{F}^{P,i}_\msf{RBC}$ a message $(\sid,\textsc{Allow},\tilde{M})$ from $\mc{A}$, it does:
\begin{enumerate}
 \item If $\mc{F}^{P,i}_\msf{RBC}$ is terminated or $P$ is honest, it ignores the message.
 \item It finds the tuple $(\msf{tag},M_i,P,\mc{F}^{P,i}_\msf{RBC})$ that corresponds to $\mc{F}^{P,i}_\msf{RBC}$ (i.e., the $i$-th honest broadcast of $P$). Note that such tuple exists, as $\mc{F}^{P,i}_\msf{RBC}$ is not terminated.
 \item It terminates $\mc{F}^{P,i}_\msf{RBC}$ and sends $(\sid,\textsc{Broadcast},\tilde{M},P)$ to all parties and $\mc{A}$. 
 \item It sends $(\sid,\textsc{Allow},\msf{tag},\tilde{M})$ to $\mc{F}_\msf{UBC}$.
\end{enumerate}
\vspace{2pt}
\extitem Upon receiving $(\sid_C,\textsc{Advance\_Clock},P)$ from $\mc{G}_\msf{clock}$, it reads the time $\Cl$ from $\mc{G}_\msf{clock}$ and completes the simulation of the broadcasting of $P$'s messages during round $\Cl$. Namely, for $j=1,\ldots,\msf{count}^P$, by playing the role of $\mc{F}^{P,\msf{total}^P-(\msf{count}^P-j)}_\msf{RBC}$, it does: 
\begin{enumerate}
\item It finds the corresponding tuple $(\msf{tag},M_j,P,\mc{F}^{P,\msf{total}^P-(\msf{count}^P-j)}_\msf{RBC})$.
\item It sends $(\sid,\textsc{Broadcast},M_j,P)$ to all parties and $\mc{A}$.
\end{enumerate} 
Then, it resets $\msf{count}^P$ to $0$.\\[2pt]
\vspace{5pt}
By the description of $\mc{S}_\msf{UBC}$, we have that the simulation is perfect. This completes the proof.
\end{proof}

\section{Proof of Lemma~\ref{lem:real_FBC}}\label{app:FBC_proof}

\begin{proof}
We devise a simulator $\mc{S}_\msf{FBC}$ that given a real-world adversary $\mc{A}$, simulates an execution of $\Pi_\msf{FBC}$ in the $(\mc{F}_\msf{UBC},\mc{W}_q(\mc{F}^*_\msf{RO}),\mc{F}_\msf{RO},\mc{G}_\msf{clock})$-hybrid world in the presence of $\mc{A}$. $\mc{S}_\msf{FBC}$ acts as a proxy forwarding messages between the environment $\mc{Z}$ and $\mc{A}$. It also simulates the behaviour of honest parties, and controls the local functionalities $\mc{F}_\msf{UBC}$, $\mc{W}_q(\mc{F}^*_\msf{RO})$, and $\mc{F}_\msf{RO}$. 
In particular, $\mc{S}_\msf{FBC}$ operates as follows.\\[2pt]
\noindent\extitem For every party $P$ and every time $\Cl$, it initialises an empty list of tags $L_\msf{tag}$, representing the messages that the party $P$ sends $\mc{F}_\msf{FBC}$ to be broadcast during $\Cl$ while being honest.\\[2pt]
\noindent\extitem Every tuple $(\sid,\textsc{Broadcast},M)$ that $\mc{Z}$ sends to honest dummy party $P$, the simulator forwards to $\mc{F}_\msf{FBC}$.
It receives tuple $(\sid,\textsc{Broadcast},\msf{tag},P)$ back from $\mc{F}_\msf{FBC}$, and it adds $\msf{tag}$ to $L_\msf{tag}$. 
\\[2pt]
\noindent\extitem Upon receiving $(\sid_C,\textsc{Advance\_Clock},P)$ from $\mc{G}_\msf{clock}$, if there is at least one corrupted party, it does:
\begin{enumerate}
    \item While $P$ remains honest and $L_\msf{tag}$ is non empty, it simulates the first message in $L_\msf{tag}$ being broadcast by $P$ as follows:
    \begin{enumerate}
    \item It chooses a random value $\rho$ from the domain of $\mc{F}_\msf{RO}$.
    \item \label{HonestTLE}  It produces a TLE ciphertext $c$ of $\rho$ (this includes programming $\mc{F}^*_\msf{RO}$ on $q$ puzzle generation queries ($r_0,\ldots,r_{q-1}$));
    \item It chooses a random value $y$ from the image of $\mc{F}_\msf{RO}$;           
    \item  \label{permissionRequest} If $\mc{A}$ has already made a RO query from $\{r_0,\ldots,r_{q-1},\rho\}$, the simulation aborts. If not, the simulator sends $(\sid,\textsc{Output\_Request},\msf{tag})$ to $\mc{F}_\msf{FBC}$ which responds as
    $(\sid,\textsc{Output\_Request},\msf{tag},M,P,\Cl^{*})$, where $(\msf{tag},M,P,\Cl^{*})$ is the tuple corresponding to the ``locked'' message $M$. 
    
    \item \label{ProgrammingROhonest} It programs $\mc{F}_\msf{RO}$ on $\rho$ as $(\rho,M\oplus y)$, unless $\mc{A}$ has already queried $\rho$, where simulation aborts.
    
    \item It sends $(\sid,\textsc{Broadcast},(c,y))$ to $\mc{F}_\msf{FBC}$ and receives $(\sid,\textsc{Broadcast},\msf{tag},(c,y),P)$.

    \item It sends $(\sid,\textsc{Broadcast},\msf{tag},(c,y),P)$ to $\mc{A}$ as leakage of $\mc{F}_\msf{FBC}$. Note this is identical to the leakage via $\mc{F}_\msf{UBC}$ in $\Pi_\msf{FBC}$.
    
    \item  Upon receiving  permission message $(\sid,\textsc{Allow},(\tilde{\msf{tag}}, (\tilde{c},\tilde{y}),\tilde{P})$, if $\mc{A}$ has corrupted $P$, then it proceeds as follows:
    \begin{enumerate}
    
    \item If there is no tuple containing $\tilde{\msf{tag}}$ and $\Tilde{P}$ in the corresponding list $L_\msf{tag}$, then the simulator ignores this message.

    \item\label{item:sfbc_corrupted_invalid} If $(\tilde{c},\tilde{y})$ is not in the correct format (i.e., $\tilde{c}$ should be a TLE ciphertext with time difficulty $2$ and $\tilde{y}$ should be a value from the image of $\mc{F}_\msf{RO}$), then it simulates the broadcasting of $(\tilde{c},\tilde{y})$ via $\mc{F}_\msf{UBC}$ by sending $(\sid,\textsc{Broadcast},(\tilde{c},\tilde{y}))$ to all simulated parties and $(\sid,\textsc{Broadcast},(\tilde{c},\tilde{y}),P)$ to $\mc{A}$. It does not allow $\mc{F}_\msf{FBC}$ to do any broadcast at this point, as in the real-world, honest parties would ignore invalid messages. 
    
    \item If $(\tilde{c},\tilde{y})$ is in the correct format, then it obtains the decryption witness $\tilde{w}$ by solving the puzzle associated with $\tilde{c}$ and extracting the  queries $\tilde{r}_0,\ldots,\tilde{r}_{2q-1}$. If there is a query $\tilde{r}_w$ that has not been already made by $\mc{A}$, then it again broadcasts the message via $\mc{F}_\msf{UBC}$ as in Step~\ref{item:sfbc_corrupted_invalid} (in the real-world, the honest parties would ignore invalid decryptions caused by inconsistent puzzles). Note that as the simulator controls the random oracle functionality, it is able to make as many queries as it likes without imposing any delay. Hence it can make $2q$ queries without advancing the global clock.
    
    \item Using $\tilde{w}$, it decrypts $\tilde{c}$ as $\tilde{\rho}$.
    
    \item If $\mc{A}$ has never queried $\mc{F}_\msf{RO}$ for $\tilde{\rho}$, it programs $\mc{F}_\msf{RO}$ as ($\tilde{\rho},\tilde{M}\oplus \tilde{y}$), so the message can be extracted from  $\tilde{\rho}\oplus \Tilde{y}$.            Note that in this case, the adversary does not know the message $M$ - $\mc{A}$ has just chosen a random $\tilde{y}$. However, this does not make the message invalid. 

    \item \label{generateTLEciphertext} If $\mc{F}_\msf{RO}$ has already been programmed on $\tilde{\rho}$ as $(\tilde{\rho},\tilde{\eta})$, it obtains the message $\tilde{M}\leftarrow\tilde{y}\oplus\tilde{\eta}$.
    
    \item It simulates the broadcast of $(\tilde{c},\tilde{y})$ via $\mc{F}_\msf{UBC}$ by sending $(\sid,\textsc{Broadcast},(\tilde{c},\tilde{y}))$ to all simulated parties and $(\sid,\textsc{Broadcast},(\tilde{c},\tilde{y}),P)$ to $\mc{A}$.
    
    \item\label{item:sfbc_corrupted_OK} It sends $(\sid,\textsc{Allow},(\msf{tag},\tilde{M},P))$ to $\mc{F}_\msf{FBC}$ and receives confirmation response $(\sid,\textsc{Allow\_OK})$.
    
    \end{enumerate}

\item Otherwise, if $P$ remains honest, it simulates the broadcast of the message via $\mc{F}_\msf{UBC}$, sending $(\sid,\textsc{Broadcast},(c,y))$ to all simulated parties, and $(\sid,\textsc{Broadcast},(c,y), P)$ to $\mc{A}$.

\item It deletes this tag from $L_\msf{tag}$.
\end{enumerate}

\item It sends $(\sid_C,\textsc{Advance\_Clock})$ to $\mc{F}_\msf{FBC}$ on behalf of $P$, and receives a set of tuples of the form $(\sid,\textsc{Broadcast},M^*)$ back from $\mc{F}_\msf{FBC}$.

\item It forwards each tuple $(\sid,\textsc{Broadcast},M^*)$ it receives from $\mc{F}_\msf{FBC}$ to the environment on behalf of $P$. 
\end{enumerate}
\vspace{2pt}
\noindent\extitem Whenever $\mc{A}$ corrupts a party $\tilde{P}$, the simulator corrupts the corresponding dummy party. It also sends $(\sid,\textsc{Corruption\_Request})$ to $\mc{F}_\msf{FBC}$ and receives the response
\newline \noindent $(\sid,\textsc{Corruption\_Request},\langle(\msf{tag},M,P,\Cl^*)\in L_\msf{pend}: P\in\mbf{P}_\msf{corr}\rangle)$.

\noindent\extitem Upon receiving from $\mc{F}_\msf{UBC}$ a message $(\sid,\textsc{Broadcast},(c,y),P)$ from $\mc{A}$ on behalf of $P\in\mbf{P}_\msf{corr}$, it acts as in the Steps~\ref{item:sfbc_corrupted_invalid}-\ref{item:sfbc_corrupted_OK}.
\vspace{5pt}

We now analyse the security of the simulation. In the adaptive corruption model, we see from $\Pi_\msf{FBC}$ that in the real world, $\mc{A}$ can corrupt parties at any point before the advance clock command, or in between each message being broadcast by a party $P$ (i.e. in between steps $(4)(e)$ and $(5)$).

This is represented in the ideal world by the simulator checking after each message broadcast if the party is still honest.

\textbf{Equivocation}
In step \ref{HonestTLE}, the random oracle queries used to generate the TLE ciphertext are indistinguishable from random, as they are based on a random value $\rho$. This is the same as in the real world.
In step \ref{ProgrammingROhonest}, the random oracle is programmed with $\mc{F}_\msf{RO}(\rho,M\oplus y)$, where $M\oplus y$ is random as $y$ is selected randomly from the image of $\mc{F}_\msf{RO}$.

For a corrupted party, the simulator uses the corrupted message $\tilde{M}$ chosen by the adversary in order to equivocate with the honest message $M$, and hence emulate an instantiation of $\mc{F}_\msf{RO}$. This is seen in \ref{generateTLEciphertext}.
 
\textbf{Simulation aborts}
The simulation can abort if either the random oracle is queried for the same value twice, or a tag is selected twice.
As described in \ref{permissionRequest} and \ref{ProgrammingROhonest}, $\mc{S}$ will abort if $\mc{A}$ correctly guesses a random oracle query. 
We note that $\mc{A}$ is bounded by $q$ queries per round, and hence has $2q$ queries total.
The domain of the random oracle is of exponential size, and therefore the probability of $\mc{A}$ guessing the programming of the random oracle, and hence the simulator aborting is negligible.
Finally, for $\mc{A}$ to query the the correct value of $\rho$ without guessing at random would break the security of $\mc{F}_\msf{RO}$. 
Therefore the simulator aborts with negligible probability.

Each tag is selected at random by the fair broadcast functionality, from a domain of exponential size, and hence won't be re-sampled by $\mc{F}_\msf{FBC}$ with more than negligible probability. 

Therefore we have that the simulator emulates the protocol with all but negligible probability, completing the proof.
\end{proof}

\section{Proof of Theorem~\ref{thm:real_TLE}}\label{app:TLE_proof}
\begin{proof}
Below we present the simulator $\mathcal{S}$ that given the real world adversary $\mathcal{A}$ simulates the execution of the protocol $\Pi_{\msf{TLE}}$ in the $(\mc{W}_q(\mc{F}^*_\msf{RO}),\mc{F}_\msf{RO},\mc{F}^{\Delta,\alpha}_\msf{FBC}, \mc{G}_\msf{clock})$-hybrid model in the adaptive corruption setting.\\[2pt]
%
\extitem Upon receiving the initial corruption set $\mathbf{P}_{\msf{corr}}$ from $\mc{Z}$, $\mc{S}$ forwards it to $\mc{A}$ as if it was $\mc{Z}$. Upon receiving $\mathbf{P}_{\msf{corr}}$ from $\mc{A}$ as if it was $\mc{W}_q(\mc{F}^*_\msf{RO})$,$\mc{F}_\msf{RO}$ and $\mc{F}^{\Delta,\alpha}_\msf{FBC}$, $\mc{S}$ forwards it to $\Pi_{\msf{TLE}}$. In addition, $\mc{A}$ can adaptively corrupt parties and requested their internal state which means the messages that they have generate. In that case, $\mc{S}$ adaptively corrupts the same parties in the ideal world and issues a $\textsc{Leakage}$ command to $\mc{F}_{\msf{TLE}}$ for retrieving the messages of the corrupted parties (previously honest) and send them back to $\mc{A}$.
Whatever message $\mc{S}$ receives from $\mc{F}_{\msf{TLE}}$ on behalf of a corrupted party $P$, $\mc{S}$ records and forwards that message to $\mc{A}$ as if it was $P$ and returns after recording whatever receives from $\mc{A}$ to $\mc{F}_{\msf{TLE}}$. Moreover, $\mc{S}$ registers the encryption/decryption algorithms $(e_{\mc{S}},d_{\mc{S}})$, which are the same as in protocol $\Pi_{\msf{TLE}}$. Similar to the proof in~\cite{ALZ21}, the \emph{Extended encryption} is not the same, specifically the created ciphertexts $c_{2},c_{3}$ are equal to a random value. Observe that still the distribution of both $(c_{2},c_{3})$ in both executions is still the same as both $c_{2},c_{3}$ in the real protocol are random.\\[2pt]
\extitem Upon receiving $(\sid,\textsc{Enc},\tau,\msf{tag},\msf{Cl},0^{|M|},P)$  from $\mc{F}_{\msf{TLE}}$, $\mc{S}$ records $(\tau_{\msf{dec}},\msf{tag},\msf{Cl},0^{M},c,\textsf{uncorr},$ $\textsf{nobroadcast},P)$, where $c$ is the resulting ciphertext by using the algorithm $e_{\mc{S}}$. Observe that the ciphertext does not contain any information about the actual message.
He updates his list, named $L^{\mc{S}}_{\msf{RO}^{*}}$ (initially empty), for the generation of that ciphertext. Moreover, he updates his list for the second and the third argument of the encryption as if it was $\mc{F}_{\msf{RO}}$ (e.g $c_{2}$ and $c_{3}$). Then, he returns the token back to $\mc{F}^{\msf{leak,delay}}_{\msf{TLE}}$. In addition, $\mc{S}$ keeps a record of all the randomness that he used for each encryption.\\[2pt]
\extitem Upon receiving$(\sid,\textsc{Advance\_Clock},P)$ from $\mc{G}_{\msf{clock}}$ from an honest party $P$, $\mc{S}$ reads the time $\msf{Cl}$ from $\mc{G}_{\msf{clock}}$ and does:
\begin{enumerate}
    \item For every stored tuple $(\tau_{\msf{dec}_j},\msf{tag}_{j},\msf{Cl}_{j},0^{|M_{j}|},c_{j},\textsf{uncorr},\textsf{broadcast},\cdot)$, $\mc{S}$ updates its list, named $L^{\mc{S}}_{\msf{RO}^{*}}$ , with $q$ evaluation queries for solving the ciphertexts issued by honest parties on previous rounds,  as if it was $\mc{F}^{*}_{\msf{RO}}$ in the real protocol. $\mc{S}$ picks a random $\msf{tag}^{*}_{j}$ from $\{0,1\}^{\lambda}$, records the tuple $(\msf{tag}^{*}_{j},c_{j},P,\msf{Cl})$ in $L^{\mc{S}}_{\msf{pend}}$ and sends $(\sid,\textsf{Broadcast},\msf{tag}^{*}_{j},P)$ to $\mc{A}$ as if it was $\mc{F}_{\msf{FBC}}$.
    \item For every tuple $(\msf{tag},c,P^*,\msf{Cl}^{*})$  in $L^{\mc{S}}_{\msf{lock}}\cup L^{\mc{S}}_{\msf{pend}}$ such that $\msf{Cl}-\msf{Cl}^{*}=\Delta+1$, $\mc{S}$ removes it from $(\msf{tag},c,P^*,\msf{Cl}^{*})$ and updates the tuple $(\tau_{\msf{dec}},\msf{tag}^*,\msf{Cl}^*,0^{M},c,\cdot,\textsf{unbroadcast},P^*)$ to $(\tau_{\msf{dec}},\msf{tag}^*,\msf{Cl}^*,0^{M},c,\cdot,\textsf{broadcast},P^*)$. Then $\mc{S}$ sends $(\sid,\textsc{Update},\{(c_{j},\msf{tag}_{j})\}_{j=1}^{p(\lambda)})$ to $\mc{F}_{\msf{TLE}}$ for all tuples $(\tau_{\msf{dec}_j},\msf{tag}_j,\msf{Cl}^*,0^{M},c_j,\cdot,\textsf{broadcast},\cdot)$.
\end{enumerate}
\extitem Upon receiving $(\sid,\textsc{Output\_Request},\msf{tag})$ from $\mc{A}$ as if it was $\mc{F}_{\msf{FBC}}$, $\mc{S}$ reads the time $\msf{Cl}$ from $\mc{G}_{\msf{clock}}$ and searches his list $L^{\mc{S}}_{\msf{pend}}$ for a tuple of the form $(\msf{tag},c,P,\msf{Cl}^{*})$ such that $\msf{Cl}-\msf{Cl}^{*}=\Delta -\alpha$. Then, he removes $(\msf{tag},c,P,\msf{Cl}^{*})$ from $L^{\mc{S}}_{\msf{pend}}$ and adds it to $L^{\mc{S}}_{\msf{lock}}$. $\mc{S}$ returns $(\sid,\textsc{Output\_Request},\msf{tag},c)$ to $\mc{A}$ as if it was $\mc{F}_{\msf{FBC}}$.\\[2pt]
\extitem Upon receiving $(\sid,\textsc{Corruption\_Request})$ from $\mc{A}$ as if it was $\mc{F}_{\msf{FBC}}$, the simulator sends $(\sid,\textsc{Corruption\_Request},\langle(\msf{tag},M,P,\Cl^*)\in L_\msf{pend}: P\in\mbf{P}_\msf{corr}\rangle)$ to $\mc{A}$ as if it was $\mc{F}_{\msf{FBC}}$. Observe that $\mc{S}$ has access to all of these tuples as he records all messages between $\mc{A}$ and the corrupted parties.\\[2pt]
\extitem Upon receiving $(\sid,\textsc{Allow},\msf{tag},\tilde{M},\tilde{P})$ from $\mc{S}$, it does:
\begin{enumerate}
    \item If there is no tuple  $(\msf{tag},M,\tilde{P},\Cl^*)$ in $L_\msf{pend}$ or $L_\msf{lock}$, it ignores the message.
        \item If $\tilde{P}\in\mbf{P}\setminus\mbf{P}_\msf{corr}$ or $(\msf{tag},M,\tilde{P},\Cl^*)\in L_\msf{lock}$, it ignores the message.
    \item If $\tilde{P}\in\mbf{P}_\msf{corr}$ and $(\msf{tag},M,\tilde{P},\Cl^*)\in L_\msf{pend}$ (i.e., the message is not  locked), it sets $\msf{Output}\leftarrow \tilde{M}$. If there is no tuple $(\msf{tag},\cdot,\cdot,\cdot)$ in $L_\msf{lock}$, it adds $(\msf{tag},\msf{Output},\tilde{P},\Cl^*)$ to $L_\msf{lock}$ and removes $(\msf{tag},M,\tilde{P},\Cl^*)$ from $L_\msf{pend}$. It sends $(\sid,\textsc{Allow\_OK})$ to $\mc{S}$.
\end{enumerate}

The $\mc{S}$ handles the Random oracle requests similarly as in~\cite[the proof of Theorem 1]{ALZ21}. Specifically, each time $\mc{S}$ receives a random oracle request, he checks if the requested value is the plaintext of any of the recorded $c_1$ ciphertexts. If yes, then $\mc{S}$ sends $(\sid,\textsc{Leakage})$ to $\mc{F}_{\msf{TLE}}$. After $\mc{S}$ learns the actual message he equivocates his response such that the ciphertext $c_2$ opens to the correct message. There is a probability that the requested value is the decryption of a $c_1$ ciphertext but the related message is not yet available to $\mc{S}$. In that case, $\mc{S}$ does not know how he should equivocate and the simulation fails. Observe that this scenario happens with negligible probability as it is only possible if $\mc{A}$ guesses the plaintext value of $c_1$ before solving the puzzle. Even if $\mc{A}$ solves the puzzle for the ciphertext $c_1$ faster than the honest parties due to the advantage he posses on receiving a message from $\mc{F}_{\msf{FBC}}$, the same advantage shares the simulator $\mc{S}$ in the ideal world.
Observe that, the distribution of messages in both worlds is indistinguishable and that completes the proof. 

Regarding broadcasting of malicious ciphertexts, the simulator acts as follows. Acting as $\mc{F}_{\msf{FBC}}$, upon receiving $(\sid, \textsc{Broadcast}, (c, \tau), P)$ from $\mc{A}$ on behalf of a corrupted party $P$, the simulator applies TLE decryption algorithm (including the generation of the time-lock puzzle solution) and computes the plaintext $M$. It sends the command $(\sid, \textsc{Update}, \{(c, M, \tau)\})$ to $\mc{F}_{\msf{TLE}}$.
\end{proof}

\section{Proof of Theorem~\ref{thm:DURS}}\label{app:DURS_proof}

\begin{proof}
We devise a simulator $\mc{S}_\msf{DURS}$ that given a real-world adversary $\mc{A}$, simulates an execution of $\Pi_\msf{DURS}$ in the $(\mc{F}^{\Phi,\Delta-\Phi,\alpha}_\msf{SBC},\mc{F}_\msf{RBC},\mc{G}_\msf{clock})$-hybrid world in the presence of $\mc{A}$. $\mc{S}_\msf{DURS}$ acts as a proxy forwarding messages from/to the environment $\mc{Z}$ to/from $\mc{A}$. It also simulates the behavior of honest parties, and controls the local functionalities $\mc{F}^{\Phi,\Delta-\Phi,\alpha}_\msf{SBC}$ and (all the instances of) $\mc{F}_\msf{RBC}$.

In particular, $\mc{S}_\msf{SBC}$ initializes a flag $f^P_\msf{awake}$ to $0$ for every party $P\in\mbf{P}$ and operates as follows.\\[2pt]
\extitem Upon receiving $(\sid,\textsc{Start},P)$ from $\mc{F}^{\Delta,\alpha}_\msf{DURS}$, where $P$ is honest, it simulates broadcast of $\texttt{Wake\_Up}$ via $\mc{F}^P_\msf{RBC}$ by providing $(\sid,\textsc{Broadcast},\texttt{Wake\_Up},P)$ to $\mc{A}$ as leakage from $\mc{F}^P_\msf{RBC}$. Then,
    \begin{enumerate}
        \item If it receives a message $(\sid,\textsc{Advance\_Clock},P)$ from $\mc{G}_\msf{clock}$ and $P$ remains honest and $f^P_\msf{awake}=0$, it simulates broadcast of   $\texttt{Wake\_Up}$ via $\mc{F}^P_\msf{RBC}$ by sending $(\sid,\textsc{Broadcast},\texttt{Wake\_Up},P)$ to all parties and $\mc{A}$ and terminates $\mc{F}^P_\msf{RBC}$.
        \item Upon receiving as $\mc{F}^P_\msf{RBC}$ a message $(\sid, \textsc{Allow},\tilde{M})$ from $\mc{A}$, if, meanwhile, $P$ has been corrupted, it simulates broadcast of $\tilde{M}$ via $\mc{F}^P_\msf{RBC}$ by sending $(\sid,\textsc{Broadcast},\tilde{M},P)$ to all parties and $\mc{A}$ and terminates $\mc{F}^P_\msf{RBC}$.
    \end{enumerate}
\vspace{2pt}
\extitem Upon receiving as $\mc{F}^{\tilde{P}}_\msf{RBC}$ a message $(\sid,\textsc{Broadcast},\texttt{Wake\_Up},\tilde{P})$ from $\mc{A}$, if $\tilde{P}$ is corrupted and $\mc{F}^{\tilde{P}}_\msf{RBC}$ has not fixed any sender party yet, it simulates broadcast of $\texttt{Wake\_Up}$ via $\mc{F}^{\tilde{P}}_\msf{RBC}$ by sending $(\sid,\textsc{Broadcast},\texttt{Wake\_Up},\tilde{P})$ to all parties and $\mc{A}$ and terminates $\mc{F}^{\tilde{P}}_\msf{RBC}$.\\[2pt]
\extitem Upon receiving as any honest party $P$ a message $(\sid,\textsc{Broadcast},\texttt{Wake\_Up},P^*)$ from $\mc{F}^{P^*}_\msf{RBC}$ (recall that by the description of $\mc{F}_\msf{RBC}$, the adversary cannot change the sender that invoked $\mc{F}^{P^*}_\msf{RBC}$), if $f^P_\msf{awake}=0$, it does:
\begin{enumerate}
    \item It sets $1\leftarrow f^P_\msf{awake}$.
    \item It simulates a broadcast of $P$'s randomness by providing $(\sid,\textsc{Sender}, \msf{tag}, 0^\lambda,P)$ to $\mc{A}$ as leakage from $\mc{F}^{\Phi,\Delta-\Phi,\alpha}_\msf{SBC}$, where $\msf{tag}$ is some unique random value in $\{0,1\}^\lambda$. Note that by SBC security, $\mc{S}_\msf{DURS}$ does not have to ``commit'' to $P$'s randomness at this point.  
    \item  Upon receiving as $\mc{F}^{\Phi,\Delta-\Phi,\alpha}_\msf{SBC}$ a message $(\sid,\textsc{Allow},\msf{tag},\tilde{\rho},P)$ from $\mc{A}$ during the SBC period $[t_\msf{start},t_\msf{end}]$ (specified as in Figure~\ref{fig:SBC}), if, meanwhile, $P$ has been corrupted and $\mc{S}_\msf{DURS}$ has never received an $\textsc{Allow}$ message for $\msf{tag}$, it records $\tilde{M}$ and sends $(\sid,\textsc{Allow\_OK})$ to $\mc{A}$.
\end{enumerate}
\vspace{2pt}
\extitem Upon receiving as $\mc{F}^{\Phi,\Delta-\Phi,\alpha}_\msf{SBC}$ a message $(\sid,\textsc{Broadcast},\tilde{\rho},\tilde{P})$ from $\mc{A}$ during the SBC period $[t_\msf{start},t_\msf{end}]$ (specified as in Figure~\ref{fig:SBC}), it records $\tilde{\rho}$ and returns $(\sid,\textsc{Sender},\msf{tag},\tilde{\rho},\tilde{P})$ to $\mc{A}$, where $\msf{tag}$ is some unique random value in $\{0,1\}^\lambda$.\\[2pt]
\extitem Upon receiving $(\sid,\textsc{Advance\_Clock},P)$ from $\mc{G}_\msf{clock}$, it reads the time $\Cl$ from $\mc{G}_\msf{clock}$. If $\Delta-\alpha$ time has elapsed since the time that it received $(\sid,\textsc{Start},P)$ from $\mc{F}^{\Delta,\alpha}_\msf{DURS}$ (i.e., $\Phi+(\Delta-\Phi)-\alpha$ time has elapsed since the beginning of the SBC simulations), $\mc{S}_\msf{DURS}$ commits to a consistent vector of parties' randomness contributions as follows:
\begin{enumerate}
    \item It sends $(\sid,\textsc{URS})$ to $\mc{F}^{\Delta,\alpha}_\msf{DURS}$ and receives the response $(\sid,\textsc{URS},r)$.
    \item Let $\tilde{\rho}_1,\ldots,\tilde{\rho}_{\tilde{k}}$ be the $\lambda$-bit strings recorded in $\mc{F}^{\Phi,\Delta-\Phi,\alpha}_\msf{SBC}$ that were provided by the corrupted parties during the SBC period. Let $\tilde{\msf{tag}}_1,\ldots,\tilde{\msf{tag}}_{\tilde{k}}$ be the corresponding tags. It sets \[\tilde{r}\leftarrow\bigoplus_{j=1}^{\tilde{k}}\tilde{\rho}_j.\]
    \item Let $P^*_1.\ldots,P^*_{k^*}$ be the honest parties. For $i=2,\cdots,k^*$ it chooses a random $\lambda$-bit string $\rho^*_i$. For $P^*_1$, it sets 
\[\rho^*_1\leftarrow\Big(\bigoplus_{i=2}^{k^*}\rho^*_i\Big)\oplus\tilde{r}\oplus r.\]
    Let $\msf{tag}^*_1,\ldots,\msf{tag}^*_{k^*}$ be the corresponding tags chosen at the broadcast simulation above (recall that each honest party has submitted exactly one SBC request).
    \item It sets a vector $\mbf{v}$ that contains the pairs $(\msf{tag}^*_1,\rho^*_1),\ldots,(\msf{tag}^*_{k^*},\rho^*_{k^*}),(\tilde{\msf{tag}}_1,\tilde{\rho}_1),\ldots,(\tilde{\rho}_{\tilde{k}},\tilde{\msf{tag}}_{\tilde{k}})$ ordered ``chronologically'' as dictated by the sequence of \textsc{Broadcast} requests to $\mc{F}^{\Phi,\Delta-\Phi,\alpha}_\msf{SBC}$ from all parties.
    \item It sends $(\sid,\textsc{Broadcast},\mbf{v})$ to $\mc{A}$.
\end{enumerate}

During the simulation, $\mc{S}_\msf{DURS}$ totally respects the description of $\mc{F}^{\Phi,\Delta-\Phi,\alpha}_\msf{SBC}$ and $\mc{F}_\msf{RBC}$ and the interaction of these functionalities with $\mc{A}$, except for the fact that it does not commit to the honest parties' SBC requests from the beginning, but only when the time to send the vector of broadcast messages to $\mc{A}$ has come. However, this does not affect the leakage that $\mc{A}$ gets from $\mc{F}^{\Phi,\Delta-\Phi,\alpha}_\msf{SBC}$ during the SBC period, as this is an all-zero $\lambda$-bit string in any case. Besides, by construction it holds that 
\[\Big(\bigoplus_{i=1}^{k^*}\rho^*_i\Big)\oplus\Big(\bigoplus_{j=1}^{\tilde{k}}\tilde{\rho}_j\Big)=r,\]
i.e., the XOR of all SBC $\lambda$-bit strings equals to actual URS $r$ sampled by $\mc{F}^{\Delta,\alpha}_\msf{DURS}$ and $\mc{A}$ can compute $r$ by the SBC leakage.
Thus, it suffices to show that the modified choice of honest parties randomness contributions respects the original uniform distribution of $\Pi_\msf{DURS}$. 

Since $\rho^*_2,\ldots,\rho^*_{k^*}$ are chosen uniformly at random, the above sufficient condition boils down to showing that $\rho^*_1$ also follows the uniform distribution. This directly derives from the definition of $\rho^*_1$ and the fact that $r$ is chosen uniformly at random by $\mc{F}^{\Delta,\alpha}_\msf{DURS}$.

\end{proof}

\end{document}